\newtheorem{theorem}{Theorem}
\newtheorem{lemma}{Lemma}
\newtheorem{corollary}{Corollary}
\newtheorem{definition}{Definition}
\tikzset{My Line Style/.style={samples=400}}
\begin{document}

\title{\mbox{}\vspace{0.1cm}\\
\textsc{\huge Finite Field Multiple Access III: \\ from $2$-ary to $p$-ary} \vspace{0.2cm}}

\vspace{0.2cm}
\author{\normalsize
Qi-yue~Yu, {\it IEEE Senior Member}
\thanks{Q.-Y.~Yu (email: yuqiyue@hit.edu.cn) is affiliated with Harbin Institute of Technology, Heilongjiang, China.}
}

\maketitle
\vspace{-0.5in}
\begin{abstract}
This paper extends finite-field multiple-access (FFMA) techniques from binary to general $p$-ary source transmission. We introduce element-assemblage (EA) codes over GF($p^m$), which generalize element-pair (EP) codes, and define two specific types for ternary transmission: orthogonal EA codes and double codeword EA (D-CWEA) codes. We propose a unique sum-pattern mapping (USPM) constraint for the design of uniquely-decodable CWEA (UD-CWEA) codes, which include additive inverse D-CWEA (AI-D-CWEA) and basis decomposition D-CWEA (BD-D-CWEA) codes. Additionally, we introduce non-orthogonal CWEA (NO-CWEA) codes and their corresponding USPM constraint in the complex field.
Furthermore, $p$-ary CWEA codes are constructed using a basis decomposition method, leveraging ternary decomposition for faster convergence and simplified encoder/decoder design. We present a performance analysis of the proposed FFMA system from two complementary perspectives: channel capacity and error performance. We demonstrate that equal power allocation achieves the theoretical channel capacity, and then investigate the finite blocklength (FBL) characteristics of FFMA systems.
Moreover, we develop a rate-driven capacity alignment (CA) theorem based on the capacity-to-rate ratio (CRR) metric for error performance analysis. Finally, we compare $p$-ary transmission systems with classical binary transmission systems, revealing that low-order $p$-ary systems (e.g., $p = 3$) outperform binary systems at small loading factors, while higher-order systems (e.g., $p = 257$) excel at larger loading factors. These findings highlight the potential of $p$-ary systems, although practical implementations may benefit from decomposing $p$-ary systems into ternary systems to manage complexity.
\end{abstract}

\begin{IEEEkeywords}
Finite-field multi-access (FFMA), $p$-ary source transmission, ternary transmission, element-assemblage (EA) code, finite blocklength (FBL), power allocation, rate-driven capacity alignment (CA),
Gaussian multiple-access channel (GMAC).
\end{IEEEkeywords}

\newpage
\section{Introduction}
Next-generation communication systems must simultaneously support massive connectivity and short-packet transmissions while ensuring reliable error performance \cite{6G}. Current state-of-the-art approaches typically combine advanced channel coding schemes with conventional multiple access (MA) techniques, such as polar codes with spreading sequences \cite{Polar_1, Polar_2, Polar_3, Polar_4} or traditional codes with interleave-division multiple access (IDMA) \cite{IDMA_1, IDMA_2}. However, these methods face a fundamental challenge: the inherent incompatibility between single-user channel coding requirements and multiuser transmission constraints. The conventional processing order, which applies channel coding before multiplexing (MUX), reduces the effective codeword length and exacerbates the multiuser finite blocklength (FBL) problem \cite{FBL_1, FBL_2, FBL_3, FBL_MU}.

To address this issue, we propose a finite-field multiple-access (FFMA) technique \cite{FFMA_ITW, FFMA, FFMA2}, which fundamentally reconfigures the signal processing chain by implementing multiplexing before channel coding. This architectural innovation significantly increases the codeword length, thereby improving error performance. The FFMA framework achieves user separation in finite-field domains using carefully designed \textit{element-pairs (EPs)}, which serve as virtual resources. Each EP is mathematically constructed over a Galois field GF($p^m$), where the \textit{prime factor (PF)} $p$ and the \textit{extension factor (EF)} $m$ together define the system's core characteristics. The Cartesian product of $J$ distinct EPs can form an EP code. We have developed several classes of EP codes, including symbol-wise and codeword-wise EP codes.

Building upon the various types of EP codes, the FFMA framework supports multiple transmission modes: \textit{time-division multiple access in finite field (FF-TDMA)}, \textit{code-division multiple access in finite field (FF-CDMA)}, \textit{channel-codeword multiple access in finite field (FF-CCMA)}, and \textit{non-orthogonal multiple access in finite field (FF-NOMA)} \cite{FFMA2}. Each mode provides distinct advantages for different deployment scenarios and quality-of-service (QoS) requirements, while retaining the core benefits of the FFMA architecture.

In \cite{FFMA2}, the \textit{finite-field to complex-field transform function} is a 3ASK (amplitude shift keying) signal.
Based on the EP codes constructed over GF($3^m$) and 3ASK signal, we can obtain \textit{error-correction orthogonal spreading codes} and \textit{error-correction non-orthogonal spreading codes}. 
Nevertheless, the information sequence of each user is from a binary source, i.e., a finite-field GF($2$).

As known, binary logic has been widely used in modern world, since most of the computing systems are implemented based on the famous von Neumann architecture \cite{History_Binary_1990}.
In addition to binary logic, ternary logic (or three-valued logic) is also promising, 
as it can encompass binary logic while retaining all of its advantages \cite{Ternary_Computing_1990, Ternary_Computing}. Based on the ternary logic, the computing can be implemented even faster \cite{Ternary_Computing}, and the storage is more compact \cite{TernaryCode_2017}.
Therefore, many works have been studied on ternary codes \cite{Ternary_CC_2020, Ternary_cyclic_2013,TPSK_2016, TPSK_2022, Ternary_Hamm_2013, Ternary_ParityCode_2019,Ternary_Sum_Code_2020, QAM9_2017, Ternary_MPA_2022}, such as 
ternary cyclic codes \cite{Ternary_cyclic_2013},
ternary convolutional codes for ternary phase shift keying (TPSK) signal\cite{TPSK_2016}, 
Turbo codes for TPSK signal \cite{TPSK_2022}, 
ternary Hamming codes \cite{Ternary_Hamm_2013}, 
ternary parity codes \cite{Ternary_ParityCode_2019}, 
ternary sum codes \cite{Ternary_Sum_Code_2020}, and etc.
In these works, TPSK and 9QAM (quadrature amplitude modulation) signals are widely used for supporting ternary transmission in a Gaussian point-to-point (P2P) channel. 
To support MA transmission in a Gaussian multiple-access channel (GMAC), uniquely decodable ternary codes (ternary UDC) are presented \cite{UD_CDMA1_2012, UD_CDMA2_2012, UD_CDMA3_2014, UD_CDMA4_2016, UD_CDMA5_2018, UD_CDMA6_2019}. These ternary UDCs can offer increased spectral efficiency (SE) for supporting multiuser transmission, classifiable as \textit{a form of non-orthogonal multiple-access (NOMA)} technique \cite{ZDing2017_survey, YChen_2018}.

Hence, extending the binary source to a more general $p$-ary source, such as the ternary source, is highly appealing as it enables the exploration of advanced features in FFMA systems. To support $p$-ary source transmission, we first introduce \textit{element-assemblage (EA)} codes over GF($p^m$), which generalize the existing EP codes. Specifically, for ternary source transmission, we define two types of EA codes: \textit{orthogonal EA} codes and \textit{double codewords EA (D-CWEA)} codes. Notably, the orthogonal EA code is a special case of the D-CWEA code. We then present the encoder structure for EA codes and introduce the concepts of \textit{parallel generator matrix} and \textit{parallel user block} to facilitate further analysis of EA codes.

Next, we propose a general \textit{unique sum-pattern mapping (USPM) structural property constraint} for designing uniquely-decodable CWEA (UD-CWEA) codes. In this work, we construct two specific types of UD-CWEA codes: \textit{additive inverse D-CWEA (AI-D-CWEA)} codes and \textit{basis decomposition D-CWEA (BD-D-CWEA)} codes. For a $J$-user AI-D-CWEA code, its codeword is uniquely determined by its full-one generator matrix, enabling the use of popular decoding algorithms such as the $q$-ary sum-product algorithm (QSPA). For BD-D-CWEA codes, we employ the \textit{parallel generator matrix} for efficient codeword decoding. Furthermore, from the perspective of various transmission modes in FFMA systems, we extend EP-coding to EA-coding and investigate its applications, particularly focusing on non-orthogonal CWEA (NO-CWEA) codes and their corresponding \textit{USPM constraint in the complex field}. Additionally, we construct $p$-ary CWEA codes using the basis decomposition method. We first apply ternary decomposition to an integer $p$, leveraging its fast convergence properties. Moreover, for ternary digits, the EA encoder and decoder can be simplified into a single generator, enhancing computational efficiency.

Following the introduction of element-assemblage (EA) coding for FFMA systems, we analyze the system performance based on the frameworks proposed in \cite{FFMA, FFMA2} and this work. First, we define key parameters for FFMA systems, such as the \textit{loading factor} and \textit{multirate sequences}. As \textit{channel capacity} and \textit{error performance} are two fundamental metrics in communication systems, we thoroughly examine both. We begin by analyzing the channel capacity of the proposed FFMA system along with its corresponding power allocation scheme, demonstrating that \textit{equal power allocation (EPA)} is optimal for achieving maximum capacity. Building on this analysis of channel capacity, we investigate the \textit{finite blocklength (FBL)} characteristics of FFMA systems.

Next, we define the \textit{capacity-to-rate ratio (CRR)} as the error performance metric and present the \textit{rate-driven capacity alignment (CA) theorem}. Using the CA theorem, we derive the CA power allocation scheme for FFMA systems, showing that EPA does not always align with the CA theorem. Finally, through systematic comparison enabled by the CA theorem, we quantify the performance advantages of $p$-ary transmission over conventional binary systems across different loading factors. This analysis provides valuable insights for system design trade-offs between spectral efficiency and energy requirements, specifically $E_b/N_0$.

The remainder of this paper is organized as follows. 
Section II begins with the definition of EA codes over GF($p^m$) and introduces two distinct types of EA codes. 
The encoding process for EA codes is detailed in Section III. 
Section IV focuses on the construction and decoding of $3$-ary EA codes. 
EA codes are further explored in Section V, where various transmission modes within FFMA systems are examined. 
Section VI demonstrates the construction of $p$-ary CWEA codes using the basis decomposition method. 
Key definitions for FFMA systems, including the loading factor and multirate sequences, are presented in Section VII. 
In Section VIII, the channel capacity of the proposed FFMA systems is analyzed, followed by the FBL analysis in Section IX. 
Section X introduces the rate-driven capacity alignment theorem. 
Building on the CA theorem, $p$-ary transmission is explored in Section XI. 
Finally, Section XII concludes the paper.

In this paper, the symbols \( \mathbb{B} = \{0, 1\} \), \( \mathbb{T} = \{0, 1, 2\} \), \( \mathbb{P} =\{0, 1, ..., p-1 \} \), and \( \mathbb{C} \) represent the binary field, ternary field, finite field \( \text{GF}(p) \), and the complex field, respectively.
The notation $(a)_q$ stands for modulo-$q$, and/or an element in GF($q$).
The symbols $\lceil x \rceil$ and $\lfloor x \rfloor$ denote the smallest integer that is greater than or equal to $x$ and the largest integer that is less than or equal to $x$, respectively.
In addition, we use $\Psi$ and $\Phi$ to express EP-coding and EA-coding, respectively.

\section{EA codes over Finite Fields}

For \( p \)-ary source transmission, we define \textit{\( p \)-ary element-assemblage (EA)} codes over an extension field \( \text{GF}(p^m) \), where \( p \) is a prime number, \( m \) is a positive integer with \( m \geq 2 \), and \( q = p^m \). As defined in \cite{FFMA}, \( p \) represents the prime factor (PF), and \( m \) represents the extension factor (EF).

In this paper, we focus primarily on the case of \( 3 \)-ary source transmission, i.e., when \( p = 3 \), and introduce two specific types of \( 3 \)-ary EA codes: \textit{orthogonal EA} codes and \textit{double codewords EA (D-CWEA)} codes, both of which are constructed over \( \text{GF}(3^m) \).

\vspace{-0.1in}
\subsection{Definition of $p$-ary EA Codes}

Let $\alpha$ be a primitive element of GF($p^m$). The powers of $\alpha$, i.e., $\alpha^{-\infty} = 0, \alpha^0 = 1, \alpha, \alpha^2, \ldots, \alpha^{(p^m - 2)}$, represent all the $p^m$ elements of GF($p^m$). 
Each element $\alpha^j$, where $j = -\infty, 0, \ldots, p^m - 2$, in GF($p^m$) can be expressed as a linear combination of the powers of $\alpha^0 = 1, \alpha, \alpha^2, \ldots, \alpha^{(m - 1)}$, with coefficients from GF($p$), as follows:
\vspace{-0.1in}
\begin{equation} \label{e2.2}
    \alpha^j = a_{j,0} + a_{j,1} \alpha + a_{j,2} \alpha^2 + \ldots + a_{j,m-1} \alpha^{(m-1)}.
\end{equation}
This indicates that the element $\alpha^j$ can be uniquely represented by an $m$-tuple $(a_{j,0}, a_{j,1}, a_{j,2}, \ldots, a_{j,m-1})$ over GF($p$).

For a $p$-ary source, there are $p$ digits, namely $(0)_p, (1)_p, \ldots, (p-1)_p$. Given an extension field GF($p^m$) of the prime field GF($p$), let a $p$-ary element-assemblage (EA) be denoted as 
\(
C_j = \left( \alpha^{l_{j,0}}, \alpha^{l_{j,1}}, \ldots, \alpha^{l_{j,p-1}} \right),
\)
where $l_{j,\varsigma} = -\infty, 0, 1, ..., p^m - 2$ for $0 \le \varsigma < p$ and the elements $\alpha^{l_{j,0}}, \alpha^{l_{j,1}}, \ldots, \alpha^{l_{j,p-1}}$ are distinct. Here, the subscript $j$ denotes the $j$-th EA, and the subscripts $0, 1, \ldots, p-1$ correspond to the input digits $(0)_p, (1)_p, \ldots, (p-1)_p$, respectively.

For $1 \leq j \leq M$, suppose we have $M$ $p$-ary EAs, given by
$C_1 = ({\alpha}^{l_{1,0}}, {\alpha}^{l_{1,1}}, \ldots, {\alpha}^{l_{1,p-1}}),$ 
$C_2 = ({\alpha}^{l_{2,0}}, {\alpha}^{l_{2,1}}, \ldots, {\alpha}^{l_{2,p-1}}),...,$ 
$C_M = ({\alpha}^{l_{M,0}}, {\alpha}^{l_{M,1}}, \ldots, {\alpha}^{l_{M,p-1}})$.
The Cartesian product of these $M$ $p$-ary EAs is defined as
\begin{equation}
  {\Phi} \triangleq C_1 \times C_2 \times \ldots \times C_M,
\end{equation}
which can form a $p$-ary EA code ${\Phi} = \{C_1, C_2, \ldots, C_j, \ldots, C_M\}$ over GF($p^m$), containing $p^M$ distinct EA codewords. 
Each element ${\alpha}^{l_{j,\varsigma}}$ of $C_j$ can be expressed as an $m$-tuple, where $0 \leq \varsigma < p$. Hence, we can form an $M \times m$ matrix ${\bf G}_{\rm M}^{\bf \varsigma}$ by arranging the $M$ elements $\alpha^{l_{1,\varsigma}}, \alpha^{l_{2,\varsigma}}, \ldots, \alpha^{l_{M,\varsigma}}$ as the rows of ${\bf G}_{\rm M}^{\bf \varsigma}$, specifically placing them from the first row to the $M$-th row, as follows:
\begin{equation*}
  \begin{aligned}
  {\bf G}_{\rm M}^{\bf \varsigma} = \left[
      \begin{matrix}
          \alpha^{l_{1,\varsigma}},
          \alpha^{l_{2,\varsigma}},
          \cdots,
          \alpha^{l_{M,\varsigma}}
      \end{matrix}
    \right]^{\rm T},
  \end{aligned}
\end{equation*}
where the subscript ``M'' stands for ``multiplexing'', and the superscript ``$\varsigma$'' indicates that all $M$ users are transmitting the digit $(\varsigma)_p$. This $M \times m$ matrix ${\bf G}_{\rm M}^{\bf \varsigma}$ is referred to as the \textit{full-$\varsigma$ generator matrix}.

In this paper, we mainly focus on the case where $p = 3$, with the corresponding extension field GF($3^m$).
Define the basic EA over GF($3^m$) by $C_{\rm T} = (0, 1, 2)$, where the subscript ``T'' stands for ``ternary''.
For the given finite field GF($3^m$), let the $3$-ary EA code $\Phi$ be defined as 
\(\Phi = \{C_1, C_2, \ldots, C_j, \ldots, C_M\}\),
where each $C_j = (\alpha^{l_{j,0}}, \alpha^{l_{j,1}}, \alpha^{l_{j,2}})$ for $1 \le j \le M$. Consequently, we obtain three $M \times m$ generator matrices: the \textit{full-zero generator matrix} denoted by ${\bf G}_{\rm M}^{\bf 0}$, the \textit{full-one generator matrix} denoted by ${\bf G}_{\rm M}^{\bf 1}$, and the \textit{full-two generator matrix} denoted by ${\bf G}_{\rm M}^{\bf 2}$. These matrices are given by
\(
  {\bf G}_{\rm M}^{\bf 0} = \left[
          \alpha^{l_{1,0}},
          \alpha^{l_{2,0}},
          \cdots,
          \alpha^{l_{j,0}},
          \cdots,
          \alpha^{l_{J,0}}
    \right]^{\rm T}, \\
    {\bf G}_{\rm M}^{\bf 1} = \left[
          \alpha^{l_{1,1}},
          \alpha^{l_{2,1}},
          \cdots,
          \alpha^{l_{j,1}},
          \cdots,
          \alpha^{l_{J,1}}
    \right]^{\rm T},
    {\bf G}_{\rm M}^{\bf 2} = \left[
          \alpha^{l_{1,2}},
          \alpha^{l_{2,2}},
          \cdots,
          \alpha^{l_{j,2}},
          \cdots,
          \alpha^{l_{J,2}}
    \right]^{\rm T},
\)
where the superscripts ``${\bf 0}$'', ``${\bf 1}$'', and ``${\bf 2}$'' indicate that all $M$ users transmit the digits $(0)_3$, $(1)_3$, and $(2)_3$, respectively.

In fact, ${\bf G}_{\rm M}^{\bf 0}$, ${\bf G}_{\rm M}^{\bf 1}$, and ${\bf G}_{\rm M}^{\bf 2}$ can be designed as distinct generator matrices, similar to the approach used in the paper \cite{FFMA2}. Thus, $\alpha^{l_{j,0}}$, $\alpha^{l_{j,1}}$, and $\alpha^{l_{j,2}}$ represent the codewords of ${\bf G}_{\rm M}^{\bf 0}$, ${\bf G}_{\rm M}^{\bf 1}$, and ${\bf G}_{\rm M}^{\bf 2}$, respectively, where $1 \le j \le M$. By carefully designing the generator matrices ${\bf G}_{\rm M}^{\bf 0}$, ${\bf G}_{\rm M}^{\bf 1}$, and ${\bf G}_{\rm M}^{\bf 2}$, we can obtain various $3$-ary EA codes. In the following, for brevity, we will refer to a ``$3$-ary EA code'' simply as an ``EA code'' when $p = 3$.

\vspace{-0.1in}
\subsection{Orthogonal EA codes}
For a given finite-field GF($3^m$), let the orthogonal EA code $\Phi_{\rm o, T}$ denote by
\begin{equation}
  \Phi_{\mathrm{o}, \mathrm{T}} = \bigl\{\alpha^{j-1} \cdot C_{\mathrm{T}} \bigm| 1 \leq j \leq m\bigr\}
  = \bigl\{C_j \bigm| 1 \leq j \leq m\bigr\},
\end{equation}
where $C_j = \alpha^{j-1} \cdot C_{\rm T} = \alpha^{j-1} \cdot (0, 1, 2)$ for $1 \le j \le m$.
The subscripts ``o'' stands for ``orthogonal''.
The Cartesian product
\begin{equation*}
{\Phi}_{\rm o, T} \triangleq (\alpha^{0} \cdot C_{\rm T}) \times (\alpha^{1} \cdot C_{\rm T}) \times \ldots \times (\alpha^{m-1} \cdot C_{\rm T}),
\end{equation*}
of the $m$ orthogonal EAs forms an orthogonal EA code ${\Phi}_{\rm o, T}$ over GF($3^m$) with $3^m$ EA codewords. 
For the orthogonal EA code $\Phi_{\rm o, T}$ over GF($3^m$), it can support a maximum of $m$ users, each transmitting $1$ bit.
The full-zero generator matrix ${\bf G}_{\rm M}^{\bf 0}$ of the orthogonal EA code ${\Phi}_{\rm o, T}$ is an $m \times m$ zero matrix.
The full-one and full-two generator matrices ${\bf G}_{\rm M}^{\bf 1}$ and ${\bf G}_{\rm M}^{\bf 2}$ of the orthogonal EA code ${\Phi}_{\rm o, T}$ are given as

\vspace{-0.1in}
\begin{small}
  \begin{equation}
{\bf G}_{\rm M}^{\bf 1} = \left[ 
  \begin{matrix}
  1 & 0 & \ldots & 0\\
  0 & 1 & \ldots & 0\\
  \vdots & \vdots & \ddots & \vdots \\
  0 & 0 & \ldots & 1\\
  \end{matrix}
\right], \quad
{\bf G}_{\rm M}^{\bf 2} = \left[ 
  \begin{matrix}
  2 & 0 & \ldots & 0\\
  0 & 2 & \ldots & 0\\
  \vdots & \vdots & \ddots & \vdots \\
  0 & 0 & \ldots & 2\\
  \end{matrix}
\right],
\end{equation}
\end{small}

indicating that ${\bf G}_{\rm M}^{\bf 1} = {\bf I}_m$ is an $m \times m$ identity matrix, 
and ${\bf G}_{\rm M}^{\bf 2} = 2 \cdot {\bf I}_m$. 
The loading factor $\eta$ of the generator matrix is equal to $1$, i.e., $\eta = 1$.

\vspace{-0.1in}
\subsection{Double CWEA codes}
For a given finite field GF($3^m$), if the three elements ${\alpha}^{l_{j,0}}$, ${\alpha}^{l_{j,1}}$, and ${\alpha}^{l_{j,2}}$ of the EA $C_j^{} = ({\alpha}^{l_{j,0}}, {\alpha}^{l_{j,1}}, {\alpha}^{l_{j,2}})$ satisfy
${\alpha}^{l_{j,0}} = {\bf 0}$, ${\alpha}^{l_{j,1}} \neq {\bf 0}$, and ${\alpha}^{l_{j,2}} \neq {\bf 0}$, 
i.e., $C_j^{} = ({\bf 0}, {\alpha}^{l_{j,1}}, {\alpha}^{l_{j,2}})$, where ${\bf 0}$ is an $m$-tuple and ${\alpha}^{l_{j,1}} \neq {\alpha}^{l_{j,2}}$, 
then the EA $C_j^{}$ is called a \textit{double codeword EA (D-CWEA)}.
The Cartesian product
\begin{equation*}
{\Phi}_{\rm cw} \triangleq C_1^{} \times C_2^{} \times \ldots \times C_M^{},
\end{equation*}
of the $M$ double CWEAs forms a D-CWEA code ${\Phi}_{\rm cw}$ over GF($3^m$) with $3^M$ EA codewords. The subscript ``cw'' denotes ``codeword''.
The full-zero generator matrix of the D-CWEA code ${\Phi}_{\rm cw}$ is an $M \times m$ zero matrix, i.e., ${\bf G}_{\rm M}^{\bf 0} = {\bf 0}$.
Therefore, it is sufficient to design the full-one generator matrix ${\bf G}_{\rm M}^{\bf 1}$ and the full-two generator matrix ${\bf G}_{\rm M}^{\bf 2}$ for the D-CWEA code.
Clearly, the orthogonal EA code can be considered as a special case of the D-CWEA code.

\textbf{Example 1:}
For an extension field GF($3^4$) of the prime field GF($3$), 
we can construct a $4$-user D-CWEA code 
${\Phi}_{\rm cw} = \{C_1^{}, C_2^{}, C_3^{}, C_4^{}\}$, given as

\vspace{-0.15in}
\begin{small}
  \begin{equation*} 
  \begin{aligned}
 C_1^{}  = (0000, 1 1 1 1, 2 2 2 2), \quad
 C_2^{} &= (0000, 2 1 2 1, 1 2 1 2),\\
 C_3^{}  = (0000, 2 2 1 1, 1 1 2 2), \quad
 C_4^{} &= (0000, 1 2 2 1, 2 1 1 2),\\
  \end{aligned}
\end{equation*}
\end{small}

whose Cartesian product can form a D-CWEA code ${\Phi}_{\rm cw}$ with $3^4=81$ codewords.
The full-one and full-two generator matrices ${\bf G}_{\rm M}^{\bf 1}$ and ${\bf G}_{\rm M}^{\bf 2}$ of the D-CWEA code $\Phi_{\rm cw}$ can be given as

\vspace{-0.1in}
\begin{small}
  \begin{equation} 
{\bf G}_{\rm M}^{\bf 1} = 
\left[
  \begin{matrix}
  1 & 1 & 1 & 1 \\
  2 & 1 & 2 & 1 \\
  2 & 2 & 1 & 1 \\
  1 & 2 & 2 & 1 \\
  \end{matrix}
  \right], \quad  
  {\bf G}_{\rm M}^{\bf 2} = 
\left[
  \begin{matrix}
  2 & 2 & 2 & 2 \\
  1 & 2 & 1 & 2 \\
  1 & 1 & 2 & 2 \\
  2 & 1 & 1 & 2 \\
  \end{matrix}
  \right],
\end{equation}
\end{small}
which are two $4 \times 4$ ternary matrices, similar to Example 2 in Ref. \cite{FFMA2}.
$\blacktriangle \blacktriangle$

\textbf{Example 2:}
Consider an extension field GF($3^8$) of the prime field GF($3$), and suppose there are four users. In this case, we can construct a 4-user D-CWEA code ${\Phi}_{\rm cw} = \{C_1^{}, C_2^{}, C_3^{}, C_4^{}\}$, as follows:
\begin{small}
  \begin{equation*} \label{e.Ex2_1}
  \begin{aligned}
 C_1^{} = ({\bf 0}, {\alpha}^{l_{1,1}}, {\alpha}^{l_{1,2}}) = (0000 0000, 1111 1111, 2222 2222) \\
 C_2^{} = ({\bf 0}, {\alpha}^{l_{2,1}}, {\alpha}^{l_{2,2}}) = (0000 0000, 0000 1111, 0000 2222) \\
 C_3^{} = ({\bf 0}, {\alpha}^{l_{3,1}}, {\alpha}^{l_{3,2}}) = (0000 0000, 0011 0011, 0022 0022) \\
 C_4^{} = ({\bf 0}, {\alpha}^{l_{4,1}}, {\alpha}^{l_{4,2}}) = (0000 0000, 0101 0101, 0202 0202) \\
  \end{aligned},
\end{equation*}
\end{small}

whose Cartesian product can form a D-CWEA code ${\Phi}_{\rm cw}$ with $3^{4}= 81$ codewords.
Thus, its full-one generator matrix ${\bf G}_{\rm M}^{\bf 1}$ can be given as,
\begin{small}
 \begin{equation*} 
  {\mathbf G}_{\rm M}^{\bf 1} = \left[
  \begin{array}{c}
    {\alpha^{l_{1,1}}}\\
    {\alpha^{l_{2,1}}}\\
    {\alpha^{l_{3,1}}}\\
    {\alpha^{l_{4,1}}}\\
  \end{array} 
  \right] =
  \left[
  \begin{array}{*{16}{cccccccccccccccc}}
    1 & 1  & 1 & 1 & 1 & 1 & 1 & 1  \\
    0 & 0  & 0 & 0 & 1 & 1 & 1 & 1  \\
    0 & 0  & 1 & 1 & 0 & 0 & 1 & 1  \\
    0 & 1  & 0 & 1 & 0 & 1 & 0 & 1  \\
  \end{array}
  \right],
\end{equation*} 
\end{small}
and its full-two generator matrix ${\bf G}_{\rm M}^{\bf 2}$ is
\begin{small}
  
\begin{equation*} 
  {\mathbf G}_{\rm M}^{\bf 2} = \left[
  \begin{array}{c}
    {\alpha^{l_{1,2}}}\\
    {\alpha^{l_{2,2}}}\\
    {\alpha^{l_{3,2}}}\\
    {\alpha^{l_{4,2}}}\\
  \end{array} 
  \right] =
  \left[
  \begin{array}{*{16}{cccccccccccccccc}}
    2 & 2  & 2 & 2 & 2 & 2 & 2 & 2  \\
    0 & 0  & 0 & 0 & 2 & 2 & 2 & 2  \\
    0 & 0  & 2 & 2 & 0 & 0 & 2 & 2  \\
    0 & 2  & 0 & 2 & 0 & 2 & 0 & 2  \\
  \end{array}
  \right],
\end{equation*}
\end{small}
which are two $4 \times 8$ matrices.
In addition, it is found that ${\mathbf G}_{\rm M}^{\bf 2} = 2 \cdot {\mathbf G}_{\rm M}^{\bf 1}$.
$\blacktriangle \blacktriangle$

\vspace{-0.1in}
\section{Encoding of EA Codes}

In this section, we first introduce the EA encoder in serial mode. The EA encoder in parallel mode is not discussed further, as it is merely a combination of the serial-mode EA encoder and the EP encoder in parallel mode. We then define the \textit{parallel generator matrix} of a D-CWEA code, which is used for decoding the D-CWEA code at the receiver. It is important to note that the effect of the channel code ${\mathcal C}_{gc}$ has been studied in \cite{FFMA, FFMA2}. For EA coding, the globally encoded process is identical to that of EP coding and will not be repeated here.

\vspace{-0.1in}
\subsection{EA encoder in Serial Mode}
For a ternary source, let the digit sequence of the $j$-th user be denoted by ${\mathbf d}_j = (d_{j,0}, d_{j,1}, \ldots, d_{j,k}, \ldots, d_{j,K-1})$, where $K$ is the number of digits per user, and $d_{j,k} \in {\mathbb T}$, with $1 \leq j \leq M$ and $0 \leq k < K$. Suppose the EA encoder operates in serial mode \cite{FFMA2}, and each user is assigned a unique EA, such as the EA $C_j = (\alpha^{l_{j,0}}, \alpha^{l_{j,1}}, \alpha^{l_{j,2}})$, which corresponds to the $j$-th user. Based on the EA $C_j$, the digit sequence ${\mathbf d}_j$ is encoded using a \textit{finite-field GF($p$) to finite-field GF($q$) transform function}, denoted as ${\mathrm F}_{p2q}$, producing the element sequence ${\mathbf u}_j = (u_{j,0}, u_{j,1}, \ldots, u_{j,k}, \ldots, u_{j,K-1})$.
For the $k$-th component $u_{j,k}$ of ${\bf u}_j$, we have
\begin{equation} \label{F_b2q}
  u_{j,k} = {\mathrm F}_{p2q}(d_{j,k}) \triangleq d_{j,k} \odot C_j = 
  \left\{
    \begin{aligned}
      \alpha^{l_{j,0}}, \quad d_{j,k} = (0)_3  \\
      \alpha^{l_{j,1}}, \quad d_{j,k} = (1)_3  \\
      \alpha^{l_{j,2}}, \quad d_{j,k} = (2)_3  \\
    \end{aligned},
  \right.
\end{equation}
where $d_{j,k} \odot C_j$ is defined as a \textit{switching function} \cite{FFMA, FFMA2}.
If the input digit is $d_{j,k} = (0)_3$, the corresponding transformed element is \( u_{j,k} = \alpha^{l_{j,0}} \); if the input digit is $d_{j,k} = (1)_3$, the transformed element is $u_{j,k} = \alpha^{l_{j,1}}$; otherwise, the transformed element is $u_{j,k} = \alpha^{l_{j,2}}$.

In this paper, the \textit{finite-field multiplex module (FF-MUX)} \( {\mathcal A}_{\rm M} \) is defined as \( {\mathcal A}_{\rm M} = [1, 1, \ldots, 1]^{\rm T} \).  
Thus, finite-field addition is performed on the \( k \)-th components of the \( M \) users at the receiving end, i.e., \( (u_{1,k}, u_{2,k}, \ldots, u_{j,k}, \ldots, u_{M,k}) \), which also forms an EA codeword of \( \Phi \).  
From this, we obtain a \textit{finite-field sum-pattern (FFSP)} block, expressed as
\vspace{-0.05in}
\begin{small}
\begin{equation} \label{e.w_k}
  \begin{aligned}
  w_k = \bigoplus_{j=1}^{M} u_{j,k} 
  = u_{1,k} \oplus u_{2,k} \oplus \ldots \oplus u_{M,k} 
  = {\bf d}[k] \cdot {\mathbf G}_{\mathrm M}^{{\bf d}[k]}, 
  \end{aligned}
\end{equation}
\end{small}
where ${\bf d}[k] = (d_{1,k}, d_{2,k}, \ldots, d_{M,k})$ represents a $1 \times M$ user block of $M$ users on the $k$-th component for $0 \le k < K$.
${\mathbf G}_{\mathrm M}^{{\bf d}[k]}$ is denoted as the generator matrix of ${\bf d}[k]$, given as
\begin{equation}
  \begin{aligned}
 {\mathbf G}_{\mathrm M}^{{\bf d}[k]} 
  = [d_{1,k} \odot C_1, d_{2,k} \odot C_2, \ldots, d_{M,k} \odot C_M ]^{\rm T} 
  = [\alpha^{l_{d_{1,k}}}, \alpha^{l_{d_{2,k}}},\cdots,\alpha^{l_{d_{M,k}}}]^{\rm T}
  \end{aligned},
\end{equation}
which is an $M \times m$ matrix, and $\alpha^{l_{d_{j,k}}} = d_{j,k} \odot C_j$.
All the combinations of \( {\mathbf G}_{\mathrm M}^{{\bf d}[k]} \) can form a generator matrix set \( {\mathcal G}_{\mathrm M} \), i.e.,  
\(
{\mathcal G}_{\rm M} = \{{\mathbf G}_{\mathrm M}^{{\bf 0}}, {\mathbf G}_{\mathrm M}^{{\bf 1}}, \ldots, {\mathbf G}_{\mathrm M}^{{\bf d}[k]}, \ldots, {\mathbf G}_{\mathrm M}^{{\bf 2}}\}
\)
which consists of \( 3^M \) matrices of size \( M \times m \), including the full-zero, full-one, full-two, and other generator matrices.  

Utilizing the generator matrix set ${\mathcal G}_{\mathrm M}$, the input $K$ user blocks ${\bf d}[0], {\bf d}[1], \ldots, {\bf d}[K-1]$ can be encoded into $K$ FFSP blocks $w_0, w_1, \ldots, w_{K-1}$. 
Then, the $K$ FFSP blocks, i.e., $w_0, w_1, \ldots, w_{K-1}$, together form a $1 \times mK$ FFSP sequence ${\bf w} = (w_0, w_1, \ldots, w_{K-1})$.

\vspace{-0.1in}
\subsection{Parallel Generator Matrix of a D-CWEA Code}

Although the FFSP block can be obtained through Eq. (\ref{e.w_k}), it is not straightforward to operate on and decode the D-CWEA code. Therefore, it is necessary to provide a more direct expression of the FFSP block to assist in decoding the EA codeword. In order to achieve this, we further explore the definition of the D-CWEA code.

According to the definition of a D-CWEA code $\Phi_{\rm cw}$, it is characterized by its full-one and full-two generator matrices, denoted as ${\bf G}_{\rm M}^{\bf 1}$ and ${\bf G}_{\rm M}^{\bf 2}$.
To consider both the full-one and full-two generator matrices ${\bf G}_{\rm M}^{\bf 1}$ and ${\bf G}_{\rm M}^{\bf 2}$ of the D-CWEA code $\Phi_{\rm cw}$ together, we define the \textit{parallel generator matrix} ${\bf G}_{\rm M, pll}$ of the D-CWEA code $\Phi_{\rm cw}$ as follows:
\begin{equation}
{\bf G}_{\rm M, pll} = 
\left[
  \begin{matrix}
  {\bf G}_{\rm M}^{\bf 2} \\
  \hdashline
  {\bf G}_{\rm M}^{\bf 1} \\
  \end{matrix}
\right],
\end{equation}
where ${\bf G}_{\rm M}^{\bf 1}$ and ${\bf G}_{\rm M}^{\bf 2}$ occupy the lower and upper sections of the parallel generator matrix ${\bf G}_{\rm M, pll}$, respectively. The size of the parallel generator matrix ${\bf G}_{\rm M, pll}$ is $2M \times m$, where the subscript ``pll'' stands for ``parallel''.

Next, we investigate the relationship between the input user block \({\bf d}[k]\) and the parallel generator matrix \({\bf G}_{\rm M, pll}\). The input user block \({\bf d}[k] = (d_{1,k}, d_{2,k}, \ldots, d_{M,k}) \in {\mathbb T}^{1 \times M}\) is a ternary vector of length \(M\), while the parallel generator matrix \({\bf G}_{\rm M, pll}\) is a \(2M \times m\) matrix. Consequently, the input user block \({\bf d}[k]\) should first be transformed into a \textit{parallel user block} \({\bf a}[k]\), defined as:
\[
{\bf a}[k] = {\rm F}_{\rm T2B}({\bf d}[k]) 
= \left( d_{1,k}^{(2)}, d_{2,k}^{\rm (2)}, \ldots, d_{j,k}^{\rm (2)}, \ldots, d_{M,k}^{\rm (2)}, 
d_{1,k}^{\rm (1)}, d_{2,k}^{\rm (1)}, \ldots, d_{j,k}^{\rm (1)}, \ldots, d_{M,k}^{\rm (1)} \right)_2 \in {\mathbb B}^{1 \times 2M},
\]
which is a binary vector of length \(2M\), and \({\rm F}_{\rm T2B}\) denotes the \textit{ternary-to-binary transform function}, also referred to as the \textit{ternary-to-binary (T2B)} encoder.

Here, each digit \(d_{j,k}\) of the input user block \({\bf d}[k]\) is uniquely mapped to a binary vector through the transform function \({\rm F}_{\rm T2B}\), given as:
\begin{equation} \label{e.T2B}
  {\rm F}_{\rm T2B}(d_{j,k}) = 
  \left( d_{j,k}^{\rm (2)}, d_{j,k}^{\rm (1)} \right)_2 =
\begin{cases}
(0, 0)_2, & \text{if } d_{j,k} = (0)_3, \\
(0, 1)_2, & \text{if } d_{j,k} = (1)_3, \\
(1, 0)_2, & \text{if } d_{j,k} = (2)_3.
\end{cases}
\end{equation}
Note that since \(d_{j,k} \neq (3)_3\), there will be no case where \((d_{j,k}^{\rm (2)}, d_{j,k}^{\rm (1)})_2 = (1,1)_2\), meaning that \((d_{j,k}^{\rm (2)}, d_{j,k}^{\rm (1)})_2 \neq (1,1)_2\). Hence, the efficiency of the transform function \({\rm F}_{\rm T2B}\) is \(3/4\).

From equation (\ref{e.T2B}), the binary vector \(\left( d_{j,k}^{\rm (2)}, d_{j,k}^{\rm (1)} \right)_2\) can also be interpreted as the \textit{binary decomposition} of the integer number \(d_{j,k}\). Here, the superscripts ``\(\rm (1)\)'' and ``\(\rm (2)\)'' indicate the first and second bits in the binary representation of the number \(d_{j,k}\).

When there are \(M\) users, the system generates \(3^M\) parallel user blocks \({\bf a}[k]\), each of length \(1 \times 2M\). These blocks are then processed by the \(2M \times m\) parallel generation matrix \({\bf G}_{\rm M, pll}\), producing the FFSP block \(w_k\) as:
\begin{equation} \label{e.w_pll}
  w_k = {\bf a}[k] \cdot {\mathbf G}_{\rm M, pll},
\end{equation}
where \(w_k\) represents the codeword of the \textit{parallel multiuser code \({\mathcal C}_{mc, \rm pll}\)}, whose generator matrix is \({\bf G}_{\rm M, pll}\).
Based on the parallel generator matrix \({\bf G}_{\rm M, pll}\), we can directly decode the FFSP block.

\section{Construction of $3$-ary EA codes}

In this section, we examine the general \textit{unique sum-pattern mapping (USPM) structural property constraint}, which is essential for constructing \textit{uniquely decodable CWEA (UD-CWEA)} codes. We then proceed to construct two types of D-CWEA codes: \textit{additive inverse double CWEA (AI-D-CWEA) codes} and \textit{basis-decomposing double CWEA (BD-D-CWEA) codes}.

\vspace{-0.1in}
\subsection{USPM Structural Property Constraint}

To construct \textit{uniquely decodable EA (UD-EA)} codes, we first introduce the general USPM structural property constraint.

\vspace{-0.1in}
\begin{theorem} \label{theorem0}
  (\textbf{USPM Constraint})  
  Let \( \text{GF}(p^m) \) denote a finite field, where \( p \) is a prime number and \( m \) is an integer with \( m \geq 2 \). An EA code \( \Phi = \{ C_1, C_2, \ldots, C_M \} \) supports an \( M \)-user FFMA system, where \( M \leq m \).  
  For the FFSP block \( w_k \) to be uniquely decodable, there must exist a one-to-one mapping between the input user block \( \mathbf{d}[k] \) and the FFSP block \( w_k \), i.e., \( \mathbf{d}[k] \leftrightarrow w_k \) for \( 0 \leq k < K \). This ensures that the code satisfies the unique sum-pattern mapping (USPM) structural property. The generator matrix of the user block \( \mathbf{d}[k] \), denoted as \( \mathbf{G}_{\rm M}^{\mathbf{d}[k]} \), is defined as  
  \(
  \mathbf{G}_{\rm M}^{\mathbf{d}[k]} = \left[ \alpha^{l_{d_{1,k}}}, \alpha^{l_{d_{2,k}}}, \cdots, \alpha^{l_{d_{M,k}}} \right]^{\rm T}.
  \)
  This matrix must have full row rank, i.e.,  
  \begin{equation} 
      {\rm Rank}\left( \mathbf{G}_{\rm M}^{\mathbf{d}[k]} \right) = M,
  \end{equation}  
  which implies that the vectors \( \alpha^{l_{d_{1,k}}}, \alpha^{l_{d_{2,k}}}, \cdots, \alpha^{l_{d_{M,k}}} \) are linearly independent.  
  We refer to \( \Phi \) as a uniquely decodable EA (UD-EA) code.
\end{theorem}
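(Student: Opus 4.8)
The plan is to establish the equivalence between the three conditions appearing in the statement: (i) the FFSP block $w_k$ is uniquely decodable, (ii) the map $\mathbf{d}[k] \mapsto w_k$ is a bijection on the relevant domain, and (iii) the generator matrix $\mathbf{G}_{\rm M}^{\mathbf{d}[k]}$ has full row rank $M$. The cleanest route is to observe that $w_k = \mathbf{d}[k] \cdot \mathbf{G}_{\rm M}^{\mathbf{d}[k]}$ (Eq.~(\ref{e.w_k})) expresses $w_k$ as a GF($p$)-linear combination of the rows $\alpha^{l_{d_{1,k}}}, \ldots, \alpha^{l_{d_{M,k}}}$, with the coefficient vector $\mathbf{d}[k] \in \mathbb{P}^{1\times M}$. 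Hence the decoding question is precisely whether distinct coefficient vectors can produce the same linear combination.

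First I would set up the ``difference'' argument. Suppose two distinct user blocks $\mathbf{d}[k] \neq \mathbf{d}'[k]$ produce the same sum-pattern $w_k = w_k'$. Because the underlying EA $C_j$ assigns to each user a fixed triple of elements and the rows used in $\mathbf{G}_{\rm M}^{\mathbf{d}[k]}$ are determined componentwise by $\mathbf{d}[k]$, one must be careful: the matrix itself changes with $\mathbf{d}[k]$. The key simplification is that for a D-CWEA code the full-zero generator matrix is $\mathbf{0}$, so a user contributing digit $(0)_p$ contributes the zero row; more generally, for the uniquely-decodable codes under construction (AI-D-CWEA, where $\mathbf{G}_{\rm M}^{\bf 2} = -\mathbf{G}_{\rm M}^{\bf 1}$, i.e. each user's three elements are $\{\mathbf{0}, \mathbf{g}_j, -\mathbf{g}_j\}$), we can write $w_k = \sum_{j=1}^M \delta_{j,k}\, \mathbf{g}_j$ where $\delta_{j,k} \in \{0, 1, -1\} \equiv \{(0)_3, (1)_3, (2)_3\}$ is in bijection with $d_{j,k}$ and $\mathbf{g}_j$ is the fixed $j$-th row of $\mathbf{G}_{\rm M}^{\bf 1}$. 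Then $w_k = w_k'$ forces $\sum_j (\delta_{j,k} - \delta'_{j,k}) \mathbf{g}_j = \mathbf{0}$ with coefficients in GF($p$), which is exactly a nontrivial GF($p$)-linear dependence among $\mathbf{g}_1, \ldots, \mathbf{g}_M$ whenever $\mathbf{d}[k] \neq \mathbf{d}'[k]$. Conversely, full row rank of the fixed matrix $[\mathbf{g}_1, \ldots, \mathbf{g}_M]^{\rm T}$ rules this out. I would phrase the general-EA version of this the same way, tracking that $\mathbf{G}_{\rm M}^{\mathbf{d}[k]}$ and $\mathbf{G}_{\rm M}^{\mathbf{d}'[k]}$ share the common reference rows up to GF($p$)-scaling (which is the content of the D-CWEA structure), so that $w_k - w_k'$ is again a GF($p$)-combination of a single fixed row set.

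For the counting direction, I would note that there are exactly $p^M$ user blocks $\mathbf{d}[k]$ and the map into the space of sum-patterns is injective iff the image has size $p^M$; injectivity plus finiteness gives the one-to-one correspondence $\mathbf{d}[k] \leftrightarrow w_k$. The requirement $M \leq m$ enters because $\mathbf{G}_{\rm M}^{\mathbf{d}[k]}$ has $m$ columns, so full row rank $M$ is only possible when $M \leq m$; I would state this as a necessary dimensional condition and then the rank condition as the substantive one. The main obstacle, and the step I would be most careful with, is the subtlety that $\mathbf{G}_{\rm M}^{\mathbf{d}[k]}$ is \emph{not} a single fixed matrix but varies with the data: I must argue that for the D-CWEA family the only way the sum-pattern coincides for two data vectors is through a GF($p$)-dependence among the \emph{reference} rows $\alpha^{l_{j,1}}$ (equivalently the rows of $\mathbf{G}_{\rm M}^{\bf 1}$), using that $\alpha^{l_{j,0}} = \mathbf{0}$ and $\alpha^{l_{j,2}}$ is a GF($p$)-multiple of $\alpha^{l_{j,1}}$ (as in Examples 1 and 2, where $\mathbf{G}_{\rm M}^{\bf 2} = 2\cdot\mathbf{G}_{\rm M}^{\bf 1}$). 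Once that reduction is in place, the equivalence with ${\rm Rank}(\mathbf{G}_{\rm M}^{\mathbf{d}[k]}) = M$ for all admissible $\mathbf{d}[k]$ is immediate, and invoking Eq.~(\ref{e.w_pll}) via the parallel generator matrix gives the decoding recipe.
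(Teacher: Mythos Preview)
The paper's proof is two lines: it expands $w_k = \sum_{j} d_{j,k}\,\alpha^{l_{d_{j,k}}}$ from Eq.~(\ref{e.w_k}) and then simply asserts that linear independence of the rows $\alpha^{l_{d_{j,k}}}$ gives the one-to-one correspondence $\mathbf{d}[k]\leftrightarrow w_k$, hence $\mathbf{G}_{\rm M}^{\mathbf{d}[k]}$ must have full row rank. It does not run a difference argument, does not invoke the parallel generator matrix, and does not address the issue you flag about the matrix varying with $\mathbf{d}[k]$. In effect the paper treats Theorem~\ref{theorem0} as a \emph{statement} of the USPM constraint (every matrix in $\mathcal{G}_{\rm M}$ should be full row rank), and the sentence immediately after the proof concedes that checking all $3^M$ matrices is impractical and retreats to D-CWEA codes for the actual constructions.

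Your proposal goes well beyond this. The subtlety you identify --- that $\mathbf{G}_{\rm M}^{\mathbf{d}[k]}$ depends on the data, so the usual ``full rank $\Rightarrow$ injective linear map'' argument does not directly compare $w_k$ and $w_k'$ arising from different matrices --- is real, and the paper's two-line argument simply does not engage with it. Your resolution (specialize to AI-D-CWEA so that each $u_{j,k}$ is a GF($p$)-multiple of a fixed reference row $\mathbf{g}_j$, then run the difference argument against the single fixed matrix $\mathbf{G}_{\rm M}^{\bf 1}$) is correct, but it is precisely the content of the paper's Lemma~\ref{lemma0} and Eq.~(\ref{e.w_k_ai}), not of Theorem~\ref{theorem0}. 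So you have folded the next subsection's special-case argument into your proof of the theorem, while the theorem's general claim --- for an arbitrary EA code in which the $p$ elements of $C_j$ need not lie on a single GF($p$)-line through the origin --- remains without a fully worked argument in both your proposal and the paper. Given how the paper actually uses Theorem~\ref{theorem0} (only as motivation before specializing), this is harmless, but you should note that your argument covers a strictly narrower scope than the theorem as stated.
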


\begin{proof}
According to (\ref{e.w_k}), the FFSP block \( w_k = {\bf d}[k] \cdot {\mathbf G}_{\mathrm M}^{{\bf d}[k]} \) can be expressed as:
  \[
  w_k = d_{1,k} \cdot \alpha^{l_{d_{1,k}}} + d_{2,k} \cdot \alpha^{l_{d_{2,k}}} + \cdots + d_{M,k} \cdot \alpha^{l_{d_{M,k}}}.
  \]
Thus, for \( \alpha^{l_{d_{1,k}}}, \alpha^{l_{d_{2,k}}}, \dots, \alpha^{l_{d_{M,k}}} \) to be linearly independent, it ensures a one-to-one mapping between the input user block \( {\bf d}[k] \) and the corresponding FFSP block \( w_k \), thereby satisfying the USMP structural property.
In other words, the generator matrix \( {\mathbf G}_{\mathrm M}^{{\bf d}[k]} \) must be of full row rank.
\end{proof}

Theorem \ref{theorem0} presents the general USMP structural property constraint. However, the generator matrix set \( {\mathcal G}_{\rm M} \) contains \( 3^M \) matrices, which introduces complexity in calculating the USMP constraint. Therefore, this paper focuses solely on the construction of UD-D-CWEA codes.

\vspace{-0.1in}
\subsection{Construction of AI-D-CWEA Codes}

In \cite{FFMA2}, we have introduced the \textit{USPM constraint of AI-CWEP codes over GF($3^m$)}.
It shows that if the full-one generator matrix of the AI-CWEP code $\Psi_{\rm ai}$ is a full row rank matrix, then the rows of any matrix in the generator matrix set ${\mathcal G}_{\rm M}$ are also linearly independent. In this paper, the subscript ``ai'' stands for ``additive inverse''.
Thus, we can deduce the following Lemma.

\begin{lemma} \label{lemma0}
  A uniquely decodable AI-CWEP code over GF\((3^m)\) is also a uniquely decodable D-CWEA code. 
\end{lemma}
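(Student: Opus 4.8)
The plan is to show that a uniquely decodable AI-CWEP code and its ternary extension are governed by one and the same linear-algebra condition — full row rank of the full-one generator matrix over GF($3$) — so that unique decodability transfers verbatim. First I would recall the USPM constraint of AI-CWEP codes from \cite{FFMA2}, stated bidirectionally: an $M$-user AI-CWEP code $\Psi_{\rm ai}$ over GF($3^m$) (with $M \leq m$) is uniquely decodable if and only if its full-one generator matrix $\mathbf{G}_{\rm M}^{\bf 1}$ has full row rank over GF($3$), equivalently the $M$ elements $\alpha^{l_{1,1}}, \ldots, \alpha^{l_{M,1}}$ — regarded as $m$-tuples via (\ref{e2.2}) — are linearly independent over GF($3$); in that case every matrix of the generator matrix set $\mathcal{G}_{\rm M}$ of $\Psi_{\rm ai}$ already has linearly independent rows, because each of its rows is a nonzero GF($3$)-scalar multiple of the corresponding row of $\mathbf{G}_{\rm M}^{\bf 1}$.

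Next I would form the D-CWEA code $\Phi_{\rm cw}$ associated with $\Psi_{\rm ai}$ by the additive-inverse extension: for each user adjoin the zero $m$-tuple for the digit $(0)_3$ and the additive inverse for the digit $(2)_3$, i.e.\ set $C_j = (\mathbf{0}, \alpha^{l_{j,1}}, \alpha^{l_{j,2}})$ with $\alpha^{l_{j,2}} = 2 \cdot \alpha^{l_{j,1}}$, so that $\mathbf{G}_{\rm M}^{\bf 0} = \mathbf{0}$ and $\mathbf{G}_{\rm M}^{\bf 2} = 2 \cdot \mathbf{G}_{\rm M}^{\bf 1}$ (cf.\ Example 1). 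Since $\alpha^{l_{j,1}} \neq \mathbf{0}$ forces $\alpha^{l_{j,2}} \neq \mathbf{0}$ and $\alpha^{l_{j,2}} \neq \alpha^{l_{j,1}}$, each $C_j$ is a legitimate D-CWEA. The key computation is that for every ternary digit $d_{j,k} \in \mathbb{T}$ the switching output satisfies $u_{j,k} = d_{j,k} \odot C_j = d_{j,k} \cdot \alpha^{l_{j,1}}$ in GF($3^m$), since the three cases $d_{j,k} = 0, 1, 2$ produce $\mathbf{0}$, $\alpha^{l_{j,1}}$, and $2 \cdot \alpha^{l_{j,1}}$, respectively. Substituting into (\ref{e.w_k}) yields $w_k = \bigoplus_{j=1}^{M} u_{j,k} = \sum_{j=1}^{M} d_{j,k} \cdot \alpha^{l_{j,1}}$, i.e.\ a GF($3$)-linear combination of $\alpha^{l_{1,1}}, \ldots, \alpha^{l_{M,1}}$ whose coefficient vector is exactly the user block $\mathbf{d}[k]$. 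Linear independence of these $M$ elements then makes $\mathbf{d}[k] \mapsto w_k$ injective, so $\mathbf{d}[k] \leftrightarrow w_k$ for all $k$, and by Theorem \ref{theorem0} the code $\Phi_{\rm cw}$ is a uniquely decodable D-CWEA code.

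The step I expect to be delicate is the interface with Theorem \ref{theorem0}: a D-CWEA assigns the digit $(0)_3$ to the zero $m$-tuple, so whenever some $d_{j,k} = 0$ the switched matrix $\mathbf{G}_{\rm M}^{\mathbf{d}[k]}$ has a zero row and cannot literally attain rank $M$. The argument must therefore derive the one-to-one correspondence $\mathbf{d}[k] \leftrightarrow w_k$ directly from the linear independence of the rows of $\mathbf{G}_{\rm M}^{\bf 1}$ — equivalently, from injectivity of the parallel generator matrix $\mathbf{G}_{\rm M, pll}$ (here the vertical stack of $2 \cdot \mathbf{G}_{\rm M}^{\bf 1}$ over $\mathbf{G}_{\rm M}^{\bf 1}$) restricted to the valid parallel user blocks produced by the T2B map — rather than from a rank statement about $\mathbf{G}_{\rm M}^{\mathbf{d}[k]}$ itself. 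A secondary point requiring care is to record the \cite{FFMA2} AI-CWEP USPM constraint in the bidirectional form ``$\mathbf{G}_{\rm M}^{\bf 1}$ full row rank $\Leftrightarrow$ uniquely decodable'', since that equivalence is precisely what lets the condition be carried unchanged to the D-CWEA side; the remaining verifications are routine.
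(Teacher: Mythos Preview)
Your proposal is correct and follows essentially the same approach as the paper: build the D-CWEA code from the AI-CWEP code by adjoining the zero $m$-tuple for digit $(0)_3$ and the additive inverse $\mathbf{G}_{\rm M}^{\bf 2} = \mathbf{P} - \mathbf{G}_{\rm M}^{\bf 1} = 2\cdot\mathbf{G}_{\rm M}^{\bf 1}$ for digit $(2)_3$, then transfer the USPM property. The paper's proof is terser, simply asserting that the AI-CWEP USPM constraint from \cite{FFMA2} carries over to the D-CWEA generator matrix set $\mathcal{G}_{\rm M}$, whereas you make the mechanism explicit by computing $w_k = \sum_j d_{j,k}\cdot\alpha^{l_{j,1}} = \mathbf{d}[k]\cdot\mathbf{G}_{\rm M}^{\bf 1}$ and invoking linear independence directly; the paper derives that same identity only \emph{after} the lemma, as (\ref{e.w_k_ai}). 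Your observation that $\mathbf{G}_{\rm M}^{\mathbf{d}[k]}$ can have zero rows (so the literal rank clause of Theorem~\ref{theorem0} fails for D-CWEA codes) is well taken and is exactly why the direct injectivity argument via $\mathbf{G}_{\rm M}^{\bf 1}$, rather than an appeal to Theorem~\ref{theorem0} as stated, is the right way to close the proof.
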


\begin{proof}
Recall the definition of a D-CWEA, i.e., \( C_j^{\rm cw} = ({\bf 0}, \alpha^{l_{j,1}}, \alpha^{l_{j,2}}) \). The output element of an EA encoder is defined as \( \alpha^{l_{d_{j,k}}} \), which can either be from a row of the full-one generator matrix \( {\bf G}_{\rm M}^{\bf 1} \) or from a row of the full-two generator matrix \( {\bf G}_{\rm M}^{\bf 2} \). Specifically, we have \( \alpha^{l_{d_{j,k}}} \in \{ {\bf 0}, \alpha^{l_{1,1}}, \ldots, \alpha^{l_{M,1}}, \alpha^{l_{1,2}}, \ldots, \alpha^{l_{M,2}} \} \).

First, we construct a uniquely decodable AI-CWEP (UD-AI-CWEP) code \( \Psi_{\rm ai} \) over GF\((3^m)\), where the full-one generator matrix and its \textit{additive inverse matrix} are set to be \( {\bf G}_{\rm M, ep}^{\bf 1} \) and \( {\bf P} - {\bf G}_{\rm M, ep}^{\bf 1} \), respectively. In this case, \( {\bf P} \) is an \( M \times m \) matrix with all elements equal to \( p \).

Next, let the full-one and full-two generator matrices of the D-CWEA code \( \Phi_{\rm cw} \), denoted as \( {\bf G}_{\rm M}^{\bf 1} \) and \( {\bf G}_{\rm M}^{\bf 2} \), be given by
\(
  {\bf G}_{\rm M}^{\bf 1} = {\bf G}_{\rm M, ep}^{\bf 1} \)
and 
\(
  {\bf G}_{\rm M}^{\bf 2} = {\bf P} - {\bf G}_{\rm M, ep}^{\bf 1}.
\)
Since the UD-AI-CWEP code \( \Psi_{\rm ai} \) satisfies the USPM constraint, it follows that the generator matrix set \( {\mathcal G}_{\rm M} \) of the D-CWEA code, i.e., \( {\mathcal G}_{\rm M} = \{ {\bf 0}, {\bf G}_{\rm M}^{\bf 1}, \dots, {\bf G}_{\rm M}^{\bf 2} \} \), also satisfies the USPM constraint. Thus, a UD-AI-CWEP code over GF\((3^m)\) is also a UD-D-CWEA code.
\end{proof}

According to Lemma \ref{lemma0}, we can construct a D-CWEA code using a generator matrix \( \mathbf{G}_{\rm M}^{\mathbf{1}} \) and its additive inverse matrix \( \mathbf{P} - \mathbf{G}_{\rm M}^{\mathbf{1}} \). We refer to this construction, where the full-one generator matrix is \( \mathbf{G}_{\rm M}^{\mathbf{1}} \) and the full-two generator matrix is its additive inverse \(\mathbf{G}_{\rm M}^{\mathbf{2}}= \mathbf{P} - \mathbf{G}_{\rm M}^{\mathbf{1}} \), as an \textit{additive inverse D-CWEA (AI-D-CWEA) code}, denoted by \( \Phi_{\rm ai} \).

For the AI-D-CWEA code, the generator matrix \( \mathbf{G}_{\rm M}^{{\bf d}[k]} \) of \( \Phi_{\rm ai} \) can be significantly simplified. Based on the properties of the prime field GF(3) as outlined in \cite{FFMA2}, we have:
${\bf G}_{\rm M}^{\bf 2} = \mathbf{P} - \mathbf{G}_{\rm M}^{\mathbf{1}} = 2{\bf G}_{\rm M}^{\bf 1}$.

Hence, the \( k \)-th component \( u_{j,k} \) of \( \mathbf{u}_j \), given by equation (\ref{F_b2q}), can be rewritten as:
\begin{equation} 
  u_{j,k} = {\mathrm F}_{p2q}(d_{j,k}) = d_{j,k} \cdot \alpha^{l_{j,1}}
  = \left\{
    \begin{aligned}
      0, \quad d_{j,k} = (0)_3,  \\
      \alpha^{l_{j,1}}, \quad d_{j,k} = (1)_3,  \\
      2\alpha^{l_{j,1}}, \quad d_{j,k} = (2)_3.  \\
    \end{aligned}
  \right.
\end{equation}
Here, the switching operation \( d_{j,k} \odot C_j \) has been replaced by a simple scalar multiplication, i.e., \( d_{j,k} \cdot \alpha^{l_{j,1}} \).

Thus, the FFSP given by equation (\ref{e.w_k}) simplifies to:
\begin{equation} \label{e.w_k_ai}
  \begin{aligned}
  w_k = {\bf d}[k] \cdot {\mathbf G}_{\mathrm M}^{{\bf d}[k]}
      = {\bf d}[k] \cdot {\mathbf G}_{\mathrm M}^{\bf 1}, 
  \end{aligned}
\end{equation}
which is equivalent to the user block \( \mathbf{d}[k] = (d_{1,k}, d_{2,k}, \ldots, d_{M,k}) \) passing through an \( M \times m \) full-one generator matrix \( \mathbf{G}_{\mathrm{M}}^{\bf 1} \).
Since \( \mathbf{G}_{\mathrm{M}}^{\bf 1} \) is constructed over GF($3$), any ternary decoding algorithm, such as the \( 3 \)-ary sum-product algorithm (QSPA), can be used to recover the user block \( \mathbf{d}[k] \).

The generator matrix \( \mathbf{G}_{\rm M}^{\mathbf{1}} \) may either be a ternary orthogonal matrix with a loading factor of $1$ or a generator matrix of a linear block code with a loading factor less than $1$. We will now discuss these two cases in detail.

\subsubsection{Based on a Ternary Orthogonal Matrix}

As introduced in \cite{FFMA2}, the $\kappa$-fold ternary orthogonal matrix ${\bf T}_{\rm o}(2^{\kappa}, 2^{\kappa})$ (or simply ${\bf T}_{\rm o}$) is a $2^{\kappa} \times 2^{\kappa}$ matrix. Based on ${\bf T}_{\rm o}$ and its additive inverse matrix ${\bf T}_{\rm o, ai}$, we can construct an AI-D-CWEA code \( \Phi_{\rm ai} \) over GF($3^m$), where \( m = 2^{\kappa} \), and the loading factor of the generator matrix is 1.
In fact, upon reviewing Example 1, it is observed that the proposed D-CWEA code \( \Phi_{\rm cw} \) is equivalent to the UD-AI-CWEP code \( \Psi_{\rm ai} \) as described in \cite{FFMA2}. Consequently, the D-CWEA code \( \Phi_{\rm cw} \) presented in Example 1 is a UD-D-CWEA code.

\textbf{Example 3:}
Now, we construct a $2$-user UD-D-CWEA code, denoted as \( \Phi_{\rm cw, eg, 1} \), with a loading factor of 1, where the subscript ``eg'' stands for ``example''. Based on the ternary orthogonal matrix ${\rm T}_{\rm o}(2,2)$ over GF($3^2$), we define the $2$-user UD-D-CWEA code \( \Phi_{\rm cw, eg, 1} \), whose full-one and full-two generator matrices are given as follows:
\begin{equation} 
{\bf G}_{\rm M, eg, 1}^{\bf 1} = 
\left[
  \begin{matrix}
  1 & 1 \\
  2 & 1 \\
  \end{matrix}
  \right], \quad  
  {\bf G}_{\rm M, eg, 1}^{\bf 2} = 
\left[
  \begin{matrix}
  2 & 2 \\
  1 & 2 \\
  \end{matrix}
  \right],
\end{equation}
which can form $3^2 = 9$ EA codewords.

For a $2$-user D-CWEA code, the input user blocks are given as follows:
\begin{equation} \label{e.d_k_2user}
  \begin{array}{cc}
  {\bf d}[k] \in \{00, 01, 02, 10, 11, 12, 20, 21, 22\}. \\
  \end{array}
\end{equation} 
According to Eq. (\ref{e.w_k_ai}), i.e., \( w_k = {\bf d}[k] \cdot {\mathbf G}_{\mathrm M}^{{\bf 1}} \), the corresponding FFSP blocks can be calculated as \( (00)_3 \), \( (21)_3 \), \( (12)_3 \), \( (11)_3 \), \( (02)_3 \), \( (20)_3 \), \( (22)_3 \), \( (10)_3 \), and \( (01)_3 \).

Next, we recalculate the FFSP blocks using the parallel generator matrix. The parallel generator matrix ${\bf G}_{\rm M, pll, eg, 1}$ for \( \Phi_{\rm cw, eg, 1} \) is given by:
\begin{equation} 
{\bf G}_{\rm M, pll, eg, 1} = 
\left[
  \begin{matrix}
  {\bf G}_{\rm M, eg, 1}^{\bf 2} \\
  \hdashline
  {\bf G}_{\rm M, eg, 1}^{\bf 1} \\
  \end{matrix}
  \right]
=
\left[
  \begin{matrix}
  2 & 2 \\
  1 & 2 \\
  \hdashline
  1 & 1 \\
  2 & 1 \\
  \end{matrix}
  \right],
\end{equation}
which is a \( 4 \times 2 \) matrix with row rank \( 2 \).

The input user blocks are given by Eq. (\ref{e.d_k_2user}), and their corresponding parallel user blocks are as follows:
\begin{equation} \label{e.a_k_2user}
  \begin{array}{cc}
  {\bf a}[k] \in \Lambda =\{0000, 0001, 0100, 0010, 0011, 0110, 1000, 1001, 1100 \}.
  \end{array}
\end{equation}

When the input parallel user blocks are taken from \( \Lambda \), the output codewords of \( {\bf G}_{\rm M, pll, eg, 1} \) are \( 00 \), \( 21 \), \( 12 \), \( 11 \), \( 02 \), \( 20 \), \( 22 \), \( 10 \), and \( 01 \), which match the FFSP blocks calculated by Eq. (\ref{e.w_k_ai}).
$\blacktriangle \blacktriangle$

\subsubsection{Based on a Generator Matrix of a Linear Block Code}

From Lemma 1, we can directly construct a UD-D-CWEA code based on the generator matrix of a binary linear block code \( {\mathcal C}_{mc} \).
Let \( \mathcal{C}_{{mc}} \) be a binary linear block code defined over GF($2$). The generator matrix of \( \mathcal{C}_{{mc}} \), denoted by \( \mathbf{G}_{{mc}} \), is an \( M \times m \) binary matrix. Next, we extend the dimensionality of \( \mathbf{G}_{{mc}} \) and generalize its elements from GF($2$) to GF($3$).
Let the \textit{additive inverse matrix} \( \mathbf{G}_{{mc,\rm ai}} \) of \( \mathbf{G}_{{mc}} \) be defined as
\begin{equation}
  \mathbf{G}_{{mc, \rm ai}} = \mathbf{P} - \mathbf{G}_{mc} \overset{(a)}{=} 2 \cdot \mathbf{G}_{mc},
\end{equation}
where the relation (a) follows from Property 1 in \cite{FFMA2}. Each element of \( \mathbf{G}_{mc,\rm ai} \) is either \( (0)_3 \) or \( (2)_3 \).

Subsequently, we obtain a UD-D-CWEA code \( \Phi_{\rm ai} \), where the full-one and full-two matrices are given by \( {\bf G}_{\rm M}^{\bf 1} = {\bf G}_{mc} \) and \( {\bf G}_{\rm M}^{\bf 2} = {\bf G}_{mc, \rm ai} \), respectively.
It is important to note that the additive inverse matrix \( {\bf G}_{mc, \rm ai} \) of the linear code \( {\mathcal C}_{mc} \) has the same minimum distance as the generating matrix \( {\bf G}_{mc} \).
Summarizing the above process, we derive the following corollary:

\begin{corollary}
  Let \( \Phi_{\rm ai} \) be a D-CWEA code over GF($3^m$), and let \( {\mathcal C}_{mc} \) denote a binary linear block code. If the D-CWEA code \( \Phi_{\rm ai} \) has the full-one generator matrix \( {\bf G}_{\rm M}^{\bf 1} \) and the full-two generator matrix \( {\bf G}_{\rm M}^{\bf 2} \), where \( {\bf G}_{\rm M}^{\bf 1} = {\bf G}_{mc} \) and \( {\bf G}_{\rm M}^{\bf 2} = {\bf G}_{mc, \rm ai} \), with \( {\bf G}_{mc} \) being the \( M \times m \) generator matrix of the linear code \( {\mathcal C}_{mc} \) and \( {\bf G}_{mc,\rm ai} \) being the \( M \times m \) additive inverse matrix of \( {\bf G}_{mc} \), then \( \Phi_{\rm ai} \) is an \( M \)-user UD-D-CWEA code over GF($3^m$).
\end{corollary}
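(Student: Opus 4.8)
The plan is to reduce the claim to the USPM constraint of Theorem~\ref{theorem0} via Lemma~\ref{lemma0}, and then to discharge the rank condition that results. First I would observe that, since every entry of ${\bf G}_{\rm M}^{\bf 1}={\bf G}_{mc}$ lies in $\{(0)_3,(1)_3\}$, its additive inverse satisfies ${\bf G}_{\rm M}^{\bf 2}={\bf G}_{mc,{\rm ai}}={\bf P}-{\bf G}_{mc}=2\,{\bf G}_{mc}$ over ${\rm GF}(3)$; hence the pair $({\bf G}_{\rm M}^{\bf 1},{\bf G}_{\rm M}^{\bf 2})$ is exactly an additive-inverse pair, so $\Phi_{\rm ai}$ is the D-CWEA code attached to the AI-CWEP code $\Psi_{\rm ai}$ over ${\rm GF}(3^m)$ whose full-one generator matrix is ${\bf G}_{mc}$. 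By Lemma~\ref{lemma0}, it then suffices to show that this $\Psi_{\rm ai}$ is uniquely decodable, i.e.\ that it satisfies the USPM constraint.

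Next I would invoke the characterization of AI-CWEP codes recalled just before Lemma~\ref{lemma0}: for such a code, every matrix in the generator-matrix set ${\mathcal G}_{\rm M}$ has row rank $M$ as soon as the single matrix ${\bf G}_{\rm M}^{\bf 1}$ has full row rank $M$ over ${\rm GF}(3)$. Equivalently, using the simplified AI-D-CWEA encoding $w_k={\bf d}[k]\cdot{\bf G}_{\rm M}^{\bf 1}$ of Eq.~(\ref{e.w_k_ai}), the assignment ${\bf d}[k]\mapsto w_k$ is a ${\rm GF}(3)$-linear map ${\mathbb T}^{1\times M}\to{\rm GF}(3^m)$, and it is injective --- hence realises the one-to-one correspondence ${\bf d}[k]\leftrightarrow w_k$ required by Theorem~\ref{theorem0} --- precisely when its kernel is trivial, that is, when ${\bf G}_{\rm M}^{\bf 1}$ has full row rank. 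So the whole argument comes down to verifying that ${\bf G}_{mc}$, viewed over ${\rm GF}(3)$, still has rank $M$.

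For that step I would take ${\bf G}_{mc}$ in systematic form $[\,{\bf I}_M\mid{\bf B}\,]$ with ${\bf B}\in{\mathbb B}^{M\times(m-M)}$ --- a legitimate choice of generator matrix for the $[m,M]$ code ${\mathcal C}_{mc}$, available up to a column permutation that merely relabels the symbols of $w_k$ and hence affects neither unique decodability nor the minimum distance. The embedded identity block gives ${\rm Rank}({\bf G}_{mc})=M$ over ${\rm GF}(3)$ at once, and $M\le m$ is automatic. Chaining the implications: ${\bf G}_{\rm M}^{\bf 1}$ has full row rank, so every matrix of ${\mathcal G}_{\rm M}$ has row rank $M$, so Theorem~\ref{theorem0} applies and $\Phi_{\rm ai}$ is a UD-EA code; by Lemma~\ref{lemma0} it is a UD-D-CWEA code, and since ${\bf G}_{\rm M}^{\bf 1}$ is $M\times m$ it serves $M$ users. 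I would also record the minimum-distance remark stated just above the corollary: since ${\bf G}_{mc,{\rm ai}}=2\,{\bf G}_{mc}$ scales every codeword of ${\mathcal C}_{mc}$ by the unit $2\in{\rm GF}(3)^{*}$, the code it generates has the same weight distribution, so ${\bf G}_{\rm M}^{\bf 2}$ inherits the minimum distance of ${\mathcal C}_{mc}$.

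The step I expect to be the real obstacle is this ${\rm GF}(2)$-to-${\rm GF}(3)$ rank transfer: a generic binary generator matrix has full row rank over ${\rm GF}(2)$ but not necessarily over ${\rm GF}(3)$ --- for instance, on $M=4$ rows the matrix ${\bf J}-{\bf I}$ (all-ones minus identity) is invertible modulo $2$ yet has rank $3$ modulo $3$, because its rows each sum to $3\equiv 0$. So the corollary is really being asserted for ${\bf G}_{mc}$ in systematic form (or, more generally, under the hypothesis that ${\bf G}_{mc}$ keeps rank $M$ over ${\rm GF}(3)$); I would make that hypothesis explicit in the statement, since without it the USPM constraint --- and hence unique decodability --- can fail.
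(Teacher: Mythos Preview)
Your approach coincides with the paper's: the corollary is not given a separate proof there but is presented as a summary of the preceding paragraph, which amounts to exactly your chain --- invoke Lemma~\ref{lemma0}, reduce to the AI-CWEP USPM criterion, and conclude from full row rank of ${\bf G}_{\rm M}^{\bf 1}$ via $w_k={\bf d}[k]\cdot{\bf G}_{\rm M}^{\bf 1}$.

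Where you go further than the paper is precisely the point you flag as the obstacle: the paper tacitly treats ``${\bf G}_{mc}$ is a generator matrix of an $[m,M]$ binary code'' as sufficient for full row rank over ${\rm GF}(3)$, without justification. Your ${\bf J}-{\bf I}$ counterexample shows that this inference is not automatic, and your remedy --- take ${\bf G}_{mc}$ in systematic form $[\,{\bf I}_M\mid{\bf B}\,]$ so that the identity block forces rank $M$ over any field --- is exactly the right patch. It is worth noting that the paper does adopt the systematic-form convention later (Section~VII, Eq.~(\ref{e.assumptions})), and all of the worked examples around the corollary either contain an explicit identity block or are otherwise visibly full rank over ${\rm GF}(3)$; so your reading that the corollary is implicitly stated for systematic ${\bf G}_{mc}$ (or, more generally, for ${\bf G}_{mc}$ retaining rank $M$ modulo $3$) is consistent with how the paper actually uses it. Your observation that ${\bf G}_{mc,{\rm ai}}=2\,{\bf G}_{mc}$ preserves the minimum distance is also the same remark the paper makes just before the corollary.
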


Recall Example 2, where the rank of the matrix \( {\bf G}_{\rm M}^{\bf 1} \) for the D-CWEA code \( \Phi_{\rm cw} \) is $4$, indicating that it has full row rank. Since \( {\bf G}_{\rm M}^{\bf 2} = 2 \cdot {\bf G}_{\rm M}^{\bf 1} \), it follows that \( {\bf G}_{\rm M}^{\bf 2} \) is the additive inverse matrix of \( {\bf G}_{\rm M}^{\bf 1} \). Therefore, the D-CWEA code \( \Phi_{\rm cw} \) from Example 2 is a UD-D-CWEA code with a loading factor of 0.5. Furthermore, the minimum distance of \( {\bf G}_{\rm M}^{\bf 2} \) remains $4$.

\textbf{Example 4:}
Given a $(7, 4)$ linear cyclic block code over GF($2$), whose generator polynomial is $g(X) = 1+X^2+X^3$. We can construct a $2$-user UD-D-CWEA code ${\Phi}_{\rm cw, eg, 2}$ based on the linear block code.
The full-one generator matrix ${\bf G}_{\rm M, eg, 2}^{\bf 1}$ and the full-two generator matrix ${\bf G}_{\rm M, eg, 2}^{\bf 2}$ of the UD-D-CWEA code ${\Phi}_{\rm cw, eg, 2}$ can be given as
\begin{equation*}
{\bf G}_{\rm M, eg, 2}^{\bf 1} =\left[
  \begin{matrix}
    g(X)\\
    x \cdot g(X)\\
   \end{matrix}
\right]
= \left[
  \begin{matrix}
    1 & 0 & 1 & 1 & 0 & 0 & 0\\
    0 & 1 & 0 & 1 & 1 & 0 & 0\\
  \end{matrix}
\right], 
{\bf G}_{\rm M, eg, 2}^{\bf 2} =  2{\bf G}_{\rm M}^{\bf 1}
= \left[
  \begin{matrix}
    2 & 0 & 2 & 2 & 0 & 0 & 0\\
    0 & 2 & 0 & 2 & 2 & 0 & 0\\
  \end{matrix}
\right],
\end{equation*}
which are two $2 \times 7$ matrices, and form a $2$-user UD-D-CWEA code $\Psi_{\rm cw, eg, 2}$ over GF($3^7$) with $3^2 = 9$ EA codewords.
According to Eq.~(\ref{e.w_k_ai}), where the input user blocks are defined by Eq.~(\ref{e.d_k_2user}), we calculate the corresponding FFSP blocks as follows:
\(
0000000, 0101100, 0202200, 1011000, 1112100, 1210200, \\ 2022000, 2120100, 2221200.
\)
The loading factor of $\Psi_{\rm cw, eg, 2}$ is equal to $\frac{2}{7}$.
$\blacktriangle \blacktriangle$

\vspace{-0.1in}
\subsection{Construction of BD-D-CWEA Codes}

As presented in \cite{FFMA}, the finite field GF($p^m$) is a vector space ${\mathbb V}_p(m)$ over GF($p$) with dimension $m$. Each vector in ${\mathbb V}_p(m)$ is an $m$-tuple over GF($p$). 
For a $J$-dimensional subspace $\mathbb S$ of the vector space ${\mathbb V}_p(m)$ over GF($p$), there exists a basis consisting of $J$ linearly independent vectors that span the subspace $\mathbb S$, where $J \le m$. Let the basis set of the subspace $\mathbb S$ be ${\mathcal B} = \{ \alpha^{l_0}, \alpha^{l_1}, \ldots, \alpha^{l_{J-1}} \}$, satisfying the condition
\begin{equation*}
c_0 \cdot \alpha^{l_0} + c_1 \cdot \alpha^{l_1} + \ldots + c_{J-1} \cdot \alpha^{l_{J-1}} \neq 0,
\end{equation*}
where $\alpha^{l_j}$ is an $m$-tuple in ${\mathbb V}_p(m)$, and $c_j \in$ GF($p$) for $0 \le j < J$.
Clearly, based on the basis set ${\mathcal B}$, we can construct a channel code ${\mathcal C}_{lc}$, with a loading factor equal to $J/m$.

Next, we decompose the basis set $\mathcal{B}$ into $N_d$ disjoint subsets $\mathcal{B}_1, \mathcal{B}_2, \ldots, \mathcal{B}_{N_d}$, where each subset contains $M$ elements with $M = J/N_d$. The $N_d$ subsets are explicitly expressed as:
${\mathcal B}_1 = \{\alpha^{l_{1,1}}, \ldots, \alpha^{l_{j,1}}, \ldots, \alpha^{l_{M,1}}\},$
${\mathcal B}_2 = \{\alpha^{l_{1,2}}, \ldots, \alpha^{l_{j,2}}, \ldots, \alpha^{l_{M,2}}\},$
$\ldots,$
${\mathcal B}_{N_d} = \{\alpha^{l_{1,N_d}}, \ldots, \alpha^{l_{j,N_d}}, \ldots, \alpha^{l_{M,N_d}}\}$,
where ${\mathcal B} = {\mathcal B}_1 \cup {\mathcal B}_2 \cup \ldots \cup {\mathcal B}_{N_d}$.
The $N_d$ subsets ${\mathcal B}_1, {\mathcal B}_2, \ldots, {\mathcal B}_{N_d}$ satisfy the following conditions:
\begin{itemize}
  \item
  The elements in ${\mathcal B}_1, {\mathcal B}_2, \ldots, {\mathcal B}_{N_d}$ are all equal to $M$, i.e.,
  $|{\mathcal B}_1| = |{\mathcal B}_2| = \ldots = |{\mathcal B}_{N_d}| = M$.
  \item
  For $s \neq t$ and $1 \le s, t \le M$, suppose $\alpha^{l_{j,s}}$ is an element in the subset ${\mathcal B}_s$, i.e., $\alpha^{l_{j,s}} \in {\mathcal B}_s$, and $\alpha^{l_{j,t}}$ is an element in the subset ${\mathcal B}_{t}$, i.e., $\alpha^{l_{j,t}} \in {\mathcal B}_t$. Then, we set $\alpha^{l_{j,s}} \neq \alpha^{l_{j,t}}$,
  and ${\mathcal B}_s \cap {\mathcal B}_t = \emptyset$.
\end{itemize}
Based on the subsets ${\mathcal B}_1, {\mathcal B}_2, \ldots, {\mathcal B}_{N_d}$, we can construct $N_d$ independent channel codes ${\mathcal C}_{lc,1}, {\mathcal C}_{lc,2}, \ldots, {\mathcal C}_{lc,N_d}$, whose loading factors are equal to $M/m$.

To construct a $(J/2)$-user D-CWEA code $\Phi_{\rm cw}$, we set $N_d = 2$. Let the full-one generator matrix ${\bf G}_{\rm M}^{\bf 1}$ be the generator matrix of ${\mathcal C}_{lc,1}$, and the full-two generator matrix ${\bf G}_{\rm M}^{\bf 2}$ be the generator matrix of ${\mathcal C}_{lc,2}$. 
Clearly, the constructed D-CWEA code $\Phi_{\rm cw}$ satisfies the USPM constraint, and thus, it is a UD-D-CWEA code.
The aforementioned D-CWEA construction method is referred to as \textit{basis-decomposition (BD)}, which involves decomposing a large basis set into several smaller subsets. The resulting D-CWEA code is also called a \textit{BD-D-CWEA code}, denoted by $\Phi_{\rm bd}$. Based on the original basis set ${\mathcal B}$ and its decomposed subsets ${\mathcal B}_1, {\mathcal B}_2, \ldots, {\mathcal B}_{N_d}$, we can obtain a channel code with a relatively higher loading factor, along with several channel codes that have relatively lower loading factors.

For example, the channel code ${\mathcal C}_{lc}$ with a loading factor of $J/m$ can be decomposed into two codes, ${\mathcal C}_{lc,1}$ and ${\mathcal C}_{lc,2}$, each with a loading factor of $J/(2m)$. 
The basis-decomposing method also serves as the foundation for addressing the finite block length (FBL) issue, as discussed in \cite{FFMA}.

In summarize, we can use the channel codes over GF($2^m$) and/or GF($3^m$) for constructing UD-D-CWEA codes.
Thus, the output FFSP block is either an element in GF($2^m$) or an element in GF($3^m$).


\textbf{Example 5:}
Given a $(7, 4)$ linear cyclic code over GF($2$), whose generator polynomial is $g(X) = 1+X^2+X^3$.
The basis set is given as ${\mathcal B} = \{1 0 1 1 0 0 0, 0 1 0 1 1 0 0, 0 0 1 0 1 1 0, 0 0 0 1 0 1 1\}$, 
which can be divided into two subsets ${\mathcal B}_1$ and ${\mathcal B}_2$,
i.e.,
${\mathcal B}_1 = \{1 0 1 1 0 0 0, 0 1 0 1 1 0 0\}$ and 
${\mathcal B}_2 = \{0 0 1 0 1 1 0, 0 0 0 1 0 1 1\}$.

We construct a $2$-user D-CWEA code ${\Phi}_{\rm cw, eg, 3}$ based on the subsets ${\mathcal B}_1$ and ${\mathcal B}_2$.
The full-one generator matrix ${\bf G}_{\rm M}^{\bf 1}$ and the full-two generator matrix ${\bf G}_{\rm M}^{\bf 2}$ of the $2$-user D-CWEA code ${\Phi}_{\rm cw, eg, 3}$ can be given as
\begin{equation*}
{\bf G}_{\rm M, eg, 3}^{\bf 2} =\left[
  \begin{matrix}
    1 & 0 & 1 & 1 & 0 & 0 & 0\\
    0 & 1 & 0 & 1 & 1 & 0 & 0\\
  \end{matrix}
\right], 
{\bf G}_{\rm M, eg, 3}^{\bf 1} =  
 \left[
  \begin{matrix}
    0 & 0 & 1 & 0 & 1 & 1 & 0\\
    0 & 0 & 0 & 1 & 0 & 1 & 1\\
  \end{matrix}
\right],
\end{equation*}
which are two $2 \times 7$ matrices.
Based on the full-one and full-two generator matrices, we can form a $2$-user D-CWEA code $\Psi_{\rm cw, eg, 3}$ with $3^2 = 9$ EA codewords.

Next, the parallel generator matrix ${\bf G}_{\rm M, pll, eg, 3}$ of $\Phi_{\rm cw, eg, 3}$ is given as
\begin{equation}
{\bf G}_{\rm M, pll, eg, 3} 
= \left[
  \begin{matrix}
  {\bf G}_{\rm M, eg, 3}^{\bf 2} \\
  \hdashline
  {\bf G}_{\rm M, eg, 3}^{\bf 1} \\
  \end{matrix}
  \right]
= \left[
  \begin{matrix}
    1 & 0 & 1 & 1 & 0 & 0 & 0\\
    0 & 1 & 0 & 1 & 1 & 0 & 0\\
    \hdashline
    0 & 0 & 1 & 0 & 1 & 1 & 0\\
    0 & 0 & 0 & 1 & 0 & 1 & 1\\
  \end{matrix}
\right],
\end{equation}
which is a $4 \times 7$ matrix. The rank of ${\bf G}_{\rm M, pll, eg, 3}$ is equal to $4$, which is a full row rank matrix.

To obtain the FFSP blocks, there are two cases determined by the calculation of finite fields, namely GF($2$) and GF($3$).
\begin{itemize}
  \item
  If the FFSP block is calculated over GF($2$), the corresponding FFSP blocks are given as
  $0000000$, $0101100$, $0001011$, $1011000$, $1110100$, $1010011$, $0010110$, $0111010$, 
  $0011101$, whose elements are from GF($2^7$). 
  \item
  If the FFSP block is calculated over GF($3$), the corresponding FFSP blocks are given as
  $0000000$, $0101100$, $0001011$, $1011000$, $1112100$, $1012011$, $0010110$, $0111210$,
  $0011121$, whose elements are from GF($3^7$). 
\end{itemize}
The loading factor of the $2$-user D-CWEA code $\Psi_{\rm cw, eg, 3}$ is equal to $2/7$, which is half of the $(7, 4)$ linear cyclic code.
$\blacktriangle \blacktriangle$

From Example 5, it is shown that the FFSP blocks are determined by the generator matrices. If both generator matrices (${ \bf G}_{\rm M}^{\bf 1}$ and ${ \bf G}_{\rm M}^{\bf 2}$) are derived from the binary field, then the resulting FFSP blocks can belong to either the binary field or the ternary field. However, if one or both of the generator matrices (${ \bf G}_{\rm M}^{\bf 1}$ and ${ \bf G}_{\rm M}^{\bf 2}$) are derived from the ternary field, then the resulting FFSP blocks must belong to the ternary field.

\textbf{Example 6:}
Now, we use the additive inverse matrix of the full-one generator matrix ${\bf G}_{\rm M, eg, 3}^{\bf 1}$, namely $2{\bf G}_{\rm M, eg, 3}^{\bf 1}$, to reconstruct the original full-one generator matrix ${\bf G}_{\rm M, eg, 3}^{\bf 1}$. Using this, we can construct another $2$-user UD-D-CWEA code, denoted as $\Phi_{\rm cw, eg, 4}$, given by:
\begin{equation*}
{\bf G}_{\rm M, eg, 4}^{\bf 2} =
\left[
  \begin{matrix}
    1 & 0 & 1 & 1 & 0 & 0 & 0\\
    0 & 1 & 0 & 1 & 1 & 0 & 0\\
  \end{matrix}
\right],
{\bf G}_{\rm M, eg, 4}^{\bf 1} = 
\left[
  \begin{matrix}
     0 & 0 & 2 & 0 & 2 & 2 & 0\\
     0 & 0 & 0 & 2 & 0 & 2 & 2\\
  \end{matrix}
\right],
\end{equation*}
which can form another $2$-user double UD-CWEA code $\Psi_{\rm cw, eg, 4}$ over GF($3^7$) with $3^2 = 9$ EA codewords.

The parallel generator matrix ${\bf G}_{\rm M, pll, eg, 4}$ of $\Phi_{\rm cw, eg, 4}$ is given as
\begin{equation}
{\bf G}_{\rm M, pll, eg, 4} 
= \left[
  \begin{matrix}
  {\bf G}_{\rm M, eg, 4}^{\bf 2} \\
  \hdashline
  {\bf G}_{\rm M, eg, 4}^{\bf 1} \\
  \end{matrix}
  \right]
= \left[
  \begin{matrix}
    1 & 0 & 1 & 1 & 0 & 0 & 0\\
    0 & 1 & 0 & 1 & 1 & 0 & 0\\
    \hdashline
    0 & 0 & 2 & 0 & 2 & 2 & 0\\
    0 & 0 & 0 & 2 & 0 & 2 & 2\\
  \end{matrix}
\right],
\end{equation}
which is a $4 \times 7$ matrix. The rank of ${\bf G}_{\rm M, pll, eg, 4}$ is equal to $4$, which is a full row rank matrix.
Based on the nine EA codewords, their corresponding FFSP blocks are $0000000$, $0101100$, $0002022$, $1011000$, $1112100$, $1010022$, $0020220$, $0121020$, $0022212$, whose elements are from GF($3^7$). 
$\blacktriangle \blacktriangle$

\vspace{-0.2in}
\subsection{Decoding of D-CWEA codes}

In general, the decoding scheme is determined by the generator matrix set ${\mathcal G}_{\rm M}$ of the proposed CWEA codes. However, the generator matrix set ${\mathcal G}_{\rm M}$ consists of $3^M$ matrices of size $M \times m$, which presents significant challenges for decoding of the CWEA codes. To mitigate the decoding complexity, this paper focuses solely on the decoding of the AI-D-CWEA and BD-D-CWEA codes.

\begin{itemize}
  \item \textbf{Decoding of AI-D-CWEA codes:} The full-one generator matrix ${\bf G}_{\rm M}^{\bf 1}$ can be directly used for decoding the FFSP blocks of AI-D-CWEA codes. Since the FFSP block is a codeword of the full-one generator matrix ${\bf G}_{\rm M}^{\bf 1}$, we have $w_k = {\bf d}[k] \cdot {\bf G}_{\rm M}^{\bf 1}$. If the full-one generator matrix ${\bf G}_{\rm M}^{\bf 1}$ is constructed based on an LDPC code, we can apply the QSPA algorithm to recover the user block ${\bf d}[k]$, which is the codeword of the multiuser code ${\mathcal C}_{mc}$.

  \item \textbf{Decoding of BD-D-CWEA codes:} For decoding BD-D-CWEA codes, the parallel generator matrix ${\bf G}_{\rm M, pll}^{\bf 1}$ can be used. Since the FFSP block is a codeword of the parallel generator matrix ${\bf G}_{\rm M, pll}$, we have $w_k = {\bf a}[k] \cdot {\bf G}_{\rm M, pll}$. Similarly, if the parallel generator matrix ${\bf G}_{\rm M, pll}$ is determined by an LDPC code, the QSPA decoding algorithm can be used to recover the parallel user block ${\bf a}[k]$, which is the codeword of the parallel multiuser code ${\mathcal C}_{mc, \rm pll}$. Afterward, the user block ${\bf d}[k]$ can be obtained using the inverse transform function of ${\rm F}_{\rm T2B}$ as given in Eq.~(\ref{e.T2B}). 
  Note that if $(d_{j,k}^{(2)}, d_{j,k}^{(1)})_2 = (1,1)_2$, it indicates an error has occurred. In this case, we can detect $d_{j,k}$ as $(1)_3$ or $(2)_3$ randomly.
\end{itemize}

\vspace{-0.1in}
\section{From Binary EP-coding to Ternary EA-coding}
For binary source transmission, we introduce four modes of FFMA, which are FF-TDMA, FF-CDMA, FF-CCMA, and FF-NOMA \cite{FFMA2}. These four modes are based on different types of EP codes.
For ternary source transmission, this section introduces EA codes for supporting different modes of FFMA systems.

\vspace{-0.1in}
\subsection{Orthogonal EA Codes for FF-TDMA}

For binary source transmission, orthogonal EP codes \(\Phi_{\rm o, B}\) are employed to support FF-TDMA mode. The EP of the orthogonal EP code \(\Psi_{\rm o, B}\) is defined as 
\(
C_j = \alpha^i \cdot C_{\rm B}, \quad \text{where} \quad C_{\rm B} = (0, 1).
\)
For ternary source transmission, the EA of the orthogonal EA code \(\Phi_{\rm o, T}\) is defined as 
\(
C_j = \alpha^i \cdot C_{\rm T}, \quad \text{where} \quad C_{\rm T} = (0, 1, 2).
\)
Both \(\Psi_{\rm o, B}\) and \(\Phi_{\rm o, T}\) have a loading factor of \(1\). Consequently, the orthogonal EA code \(\Phi_{\rm o, T}\) can also be utilized to support FF-TDMA mode, which will be further investigated in a subsequent paper. 

\vspace{-0.1in}
\subsection{D-CWEA Codes for FF-CCMA}

We have introduced the construction, encoding, and decoding of AI-D-CWEA and BD-D-CWEA codes. These constructed D-CWEA codes can be effectively utilized to support FF-CCMA mode.

An intriguing phenomenon arises in the FF-CCMA mode. For a ternary source, the input digit $d_{j,k}$ belongs to GF($3$). However, when the generator matrices of $\Phi_{\rm cw}$ are constructed over GF($2^m$), the output element $u_{j,k}$ becomes an $m$-tuple over GF($2^m$). This implies that the EA-encoding function ${\rm F}_{p2q}$ transforms a ternary digit into a binary $m$-tuple.

For a given extension factor $m$ and a fixed number of users $J$, increasing the prime factor dimensionality (e.g., from $p = 2$ to $p = 3$) allows for the accommodation of more information. Conversely, decreasing the PF dimensionality (e.g., from $p = 3$ to $p = 2$) reduces the number of users that can be supported, while significantly lowering the design and decoding complexity.

\vspace{-0.1in}
\subsection{Open Issues in Designing CWEA Codes for FF-CDMA}

For binary transmission, the FF-CDMA mode relies on the AI-CWEP code \( \Psi_{\rm ai} \), which is constructed over \( \text{GF}(3^m) \). Although the loading factor of the \( \Psi_{\rm ai} \) code is 1, the number of EP codewords equals \( 2^m \), which is smaller than \( 3^m \). Thus, there exists a coding gain or spreading gain in the AI-CWEP code \( \Psi_{\rm ai} \).

However, for ternary source transmission, the number of EA codewords in the AI-D-CWEA code \( \Phi_{\rm ai} \) equals \( 3^m \). Therefore, there is no further available space for additional codewords. Consequently, for ternary transmission, if the EA code is constructed over \( \text{GF}(3^m) \), there is no FF-CDMA mode. 

One potential solution is to increase the PF dimensionality. For instance, an EA code could be constructed over \( \text{GF}(5^m) \), providing a larger field and enabling the extension of the FF-CDMA mode.

\vspace{-0.1in}
\subsection{NO-CWEA codes for FF-NOMA}

If the \( M \times m \) generator matrix \( {\bf G}_{\rm M}^{{\bf d}[k]} \) of an CWEA code satisfies \( M > m \) and \( m \geq 2 \), then the EA code is referred to as a \textit{non-orthogonal CWEA (NO-CWEA)} code. 
For a \( p \)-ary transmission system, let the \( M \times m \) full-\( \varsigma \) generator matrix of a non-orthogonal CWEA (NO-CWEA) code \( \Phi_{\rm no} \) be denoted as
\begin{equation}
{\bf G}_{\rm M}^{\varsigma} = \left[
    \begin{array}{ccccc}
      g_{1,0}^{(\varsigma)} & g_{1,1}^{(\varsigma)} & \ldots & g_{1,m-1}^{(\varsigma)} \\
      g_{2,0}^{(\varsigma)} & g_{2,1}^{(\varsigma)} & \ldots & g_{2,m-1}^{(\varsigma)} \\
      \vdots                & \vdots                & \ddots & \vdots \\
      g_{M,0}^{(\varsigma)} & g_{M,1}^{(\varsigma)} & \ldots & g_{M,m-1}^{(\varsigma)} \\
    \end{array}
  \right],
\end{equation}
where \( g_{j,i}^{(\varsigma)} \in \text{GF}(\breve{p}) \) for $0 \le \varsigma <p$, \( 1 \le j \le M \) and \( 0 \le i < m \). Here, \( \breve{p} \) can either be equal to \( p \) or differ from \( p \). The subscript ``no'' indicates that the code is non-orthogonal.

Then, an \( M \)-user NO-CWEA code \( \Phi_{\rm no} = \{C_1^{}, C_2^{}, \ldots, C_j^{}, \ldots, C_M^{}\} \) over \( \text{GF}(\breve{q}) \) is obtained, where $\breve{q} = \breve{p}^m$, \(C_j^{} = (\alpha^{l_{j,0}}, \alpha^{l_{j,1}}, ..., \alpha^{l_{j,\varsigma}}, ..., \alpha^{l_{j,p-1}})\) for $1 \le j \le M$.
In this case, \( \alpha^{l_{j,\varsigma}} \) is the \( j \)-th rows of \( {\bf G}_{\rm M}^{\varsigma} \), i.e,
\(\alpha^{l_{j,\varsigma}} = (g_{j,0}^{(\varsigma)}, g_{j,1}^{(\varsigma)}, \ldots, g_{j,m-1}^{(\varsigma)})\).
The loading factor \( \eta \) of the CWEA code \( \Phi_{\rm no} \) is given by \( \eta = M/m \). When \( M > m \), it is evident that the loading factor satisfies \( \eta > 1 \).


It is crucial to design the \textit{finite-field to complex-field transform function} ${\rm F}_{\rm F2C}$, which maps a finite-field symbol to a complex-field signal. By applying this transform function to the \textit{finite-field full-$\varsigma$ generator matrix}, the resulting \textit{complex-field full-$\varsigma$ generator matrix} is denoted as ${\bf S}_{\rm M}^{\varsigma}$, given by
\begin{equation} 
  {\bf S}_{\rm M}^{\varsigma} = {\rm F}_{\rm F2C}({\bf G}_{\rm M}^{\varsigma})
  = \left[
    \begin{array}{cc}
      {\bf s}_{1}^{(\varsigma)}\\
      {\bf s}_{2}^{(\varsigma)}\\
      \vdots\\
      {\bf s}_{M}^{(\varsigma)}\\
    \end{array}
    \right]
  = \left[
    \begin{array}{ccccc}
      s_{1,0}^{(\varsigma)} & s_{1,1}^{(\varsigma)} & \ldots & s_{1,m-1}^{(\varsigma)}\\
      s_{2,0}^{(\varsigma)} & s_{2,1}^{(\varsigma)} & \ldots & s_{2,m-1}^{(\varsigma)}\\
      \vdots        & \vdots        & \ddots & \vdots  \\
      s_{M,0}^{(\varsigma)} & s_{M,1}^{(\varsigma)} & \ldots & s_{M,m-1}^{(\varsigma)}\\
    \end{array}
  \right],
\end{equation}
which is an $M \times m$ matrix, and ${\bf s}_{j}^{(\varsigma)} = {\rm F}_{\rm F2C}(\alpha^{l_{j,\varsigma}})$ for $1 \le j \le M$, $0 \le i < m$ and $0 \le \varsigma < p$.

As presented early, the user block of $M$ users of the $k$-th component is ${\bf d}[k] = (d_{1,k}, d_{2,k}, \ldots, d_{j,k}, ..., d_{M,k})$. 
Consider the relationship between $\alpha^{l_{j,\varsigma}}$ and ${\bf s}_{j}^{(\varsigma)}$,
it is able to know 
\begin{equation} 
  {\bf s}_{j}^{(d_{j,k})} \triangleq {\rm F}_{\rm F2C} \left({\mathrm F}_{p2q}(d_{j,k})\right)  = {\rm F}_{\rm F2C}(\alpha^{l_{j,\varsigma}}), 
\end{equation}
indicating there is a mapping between the input digit $d_{j,k}$ and the output signal ${\bf s}_{j}^{(d_{j,k})}$. Hence, we can calculate the complex-field sum-pattern (CFSP) block ${\bf r}[k]$ of the user block ${\bf d}[k]$ as
\begin{equation}
  {\bf r}[k] = \sum_{j=1}^{M} {\bf s}_{j}^{(d_{j,k})}.
\end{equation}
If there exists a one-to-one mapping between the user block ${\bf d}[k]$ and CFSP block ${\bf r}[k]$, the user block can be recovered without ambiguity. This result is summarized by the following theorem.

\begin{theorem} \label{theorem.USPM_CF}
(\textbf{USPM Constraint in Complex Field})
For \( p \)-ary source transmission, let the finite-field and complex-field generator matrices of an NO-CWEA code \( \Phi_{\rm no} \) be denoted as \( {\bf G}_{\rm M}^{\varsigma} \) and \( {\bf S}_{\rm M}^{\varsigma} \), respectively, where \( 0 \le \varsigma < p \). Here, \( {\bf S}_{\rm M}^{\varsigma} = {\rm F}_{\rm F2C}({\bf G}_{\rm M}^{\varsigma}) \), and \( {\rm F}_{\rm F2C} \) represents the finite field to the complex field transform function. To ensure the unique decoding of the user block \( {\bf d}[k] \), there must exist a one-to-one mapping between the input user block \( {\bf d}[k] \) and the complex-field sum-pattern (CFSP) block \( {\bf r}[k] \), i.e., \( {\bf d}[k] \leftrightarrow {\bf r}[k] \) for \( 0 \le k < K \). This indicates that the system possesses the unique sum-pattern mapping (USPM) structural property in the complex field.
\end{theorem}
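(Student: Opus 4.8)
The plan is to mirror the short argument used for Theorem~\ref{theorem0}, transporting it from $\text{GF}(p^m)$ to $\mathbb{C}$. First I would write the CFSP block explicitly in terms of the complex-field generator matrices: since $\mathbf{s}_j^{(d_{j,k})} = {\rm F}_{\rm F2C}(\alpha^{l_{j,d_{j,k}}})$ is the $d_{j,k}$-indexed row selected from $\{\mathbf{s}_j^{(0)}, \ldots, \mathbf{s}_j^{(p-1)}\}$, we have $\mathbf{r}[k] = \sum_{j=1}^{M} \mathbf{s}_j^{(d_{j,k})}$, which is a deterministic function of the input user block $\mathbf{d}[k] \in \mathbb{P}^{1\times M}$. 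In the absence of noise the GMAC receiver observes exactly $\mathbf{r}[k]$, so $\mathbf{d}[k]$ is recoverable if and only if the encoding map $\mathbf{d}[k] \mapsto \mathbf{r}[k]$ is injective on $\mathbb{P}^{1\times M}$.

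Next I would establish both directions of this equivalence. For necessity, if two distinct user blocks $\mathbf{d}[k] \neq \mathbf{d}'[k]$ produced the same CFSP block, then even a genie-aided, noise-free decoder could not separate them, contradicting unique decodability; hence injectivity is required. For sufficiency, if the map is injective, the CFSP block determines $\mathbf{d}[k]$ uniquely (e.g.\ by a finite lookup over the $p^M$ candidate blocks), so the code is uniquely decodable. Declaring this injectivity to be the USPM structural property \emph{in the complex field} then yields the statement. I would also record the equivalent ``difference'' form, which makes the constraint checkable and exposes its parallel with Theorem~\ref{theorem0}: the map is injective precisely when $\sum_{j=1}^{M}\bigl(\mathbf{s}_j^{(d_{j,k})} - \mathbf{s}_j^{(d'_{j,k})}\bigr) \neq \mathbf{0}$ for every pair $\mathbf{d}[k] \neq \mathbf{d}'[k]$ in $\mathbb{P}^{1\times M}$. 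When ${\rm F}_{\rm F2C}$ acts by scalar amplitudes on a common per-user signature, i.e.\ $\mathbf{s}_j^{(\varsigma)} = \epsilon_\varsigma \mathbf{s}_j$, this collapses to a linear-independence condition on $\{\mathbf{s}_j\}$ relative to the coefficient-difference set, mirroring the full-row-rank requirement on ${\bf G}_{\rm M}^{\mathbf{d}[k]}$.

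The step I expect to be the real (if modest) obstacle is not a calculation but getting the framing exactly right: the complex-field superposition can identify vectors that the finite-field sum keeps apart, and conversely, for overloaded $M>m$ NO-CWEA codes it can keep apart blocks that a rank-$m$ finite-field matrix cannot. Hence the USPM constraint in $\mathbb{C}$ is genuinely a different condition from Theorem~\ref{theorem0}, not a corollary of it. I would therefore phrase the necessity direction purely in terms of injectivity of the real/complex-valued map $\mathbf{d}[k]\mapsto\mathbf{r}[k]$, and note explicitly that verifying it amounts to examining the pairwise differences above rather than evaluating a single rank.
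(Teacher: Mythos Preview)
Your proposal is correct, and in fact it goes well beyond what the paper offers. In the paper, Theorem~\ref{theorem.USPM_CF} is not accompanied by a separate proof environment at all; the only justification is the single sentence preceding the theorem statement, namely that if a one-to-one mapping exists between $\mathbf{d}[k]$ and $\mathbf{r}[k]$ then the user block can be recovered without ambiguity. The theorem is really being stated as a \emph{definition} of the USPM constraint in the complex field rather than as a result to be derived.

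Your argument supplies what the paper omits: an explicit necessity direction (two distinct user blocks with the same CFSP are indistinguishable even noise-free) and an explicit sufficiency direction (injectivity allows recovery by exhaustive lookup). Your ``difference'' reformulation, $\sum_{j=1}^{M}(\mathbf{s}_j^{(d_{j,k})}-\mathbf{s}_j^{(d'_{j,k})})\neq\mathbf{0}$ for all $\mathbf{d}[k]\neq\mathbf{d}'[k]$, and your remark that for overloaded $M>m$ codes this is genuinely distinct from the finite-field rank condition of Theorem~\ref{theorem0}, are both useful clarifications that the paper does not make explicit. So there is no gap to flag; if anything, you should be aware that the paper treats this theorem as essentially self-evident and you are providing more structure than the authors did.
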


Based on Theorem \ref{theorem.USPM_CF}, we derive the following key features:
\begin{itemize}
  \item 
  A \( p \)-ary NOMA system can be directly constructed by designing \( p \) complex-field generator matrices \( {\bf S}_{\rm M}^{\bf 0}, {\bf S}_{\rm M}^{\bf 1}, \ldots, {\bf S}_{\rm M}^{\bf p-1} \), satisfying the USPM constraint in the complex field. Each generator matrix is an \( M \times m \) matrix with \( M > m \).
  \item 
  The \( p \) finite-field generator matrices \( {\bf G}_{\rm M}^{\bf 0}, {\bf G}_{\rm M}^{\bf 1}, \ldots, {\bf G}_{\rm M}^{\bf p-1} \) can be constructed over GF(\( \breve{p} \)), where \( \breve{p} \) may either equal \( p \) or differ from \( p \).
  \item 
  The finite-field to complex-field transform function \( {\rm F}_{\rm F2C} \) can be flexibly designed. A finite-field symbol can be mapped to a complex-field signal, which may be either a real number or an imaginary number.
\end{itemize}

\textbf{Example 7:}  
We construct a ternary NO-CWEA code $\Phi_{\rm no}$ over GF($5^2$), which supports 3 users with a codeword length of 2. The full-zero generator matrix is defined as the zero matrix ${\bf G}_{\rm M}^{\bf 0} = {\bf 0}$, while the full-one generator matrix ${\bf G}_{\rm M}^{\bf 1}$ and the full-two generator matrix ${\bf G}_{\rm M}^{\bf 2}$ of the NO-CWEA code $\Phi_{\rm no}$ are 
\begin{equation} \label{e.T_no}
{\bf G}_{\rm M}^{\bf 1} = 
\left[
  \begin{array}{cc}
    1 & 1\\
    4 & 1\\
    0 & 1\\
  \end{array}
\right],
\quad
{\bf G}_{\rm M}^{\bf 2} = 
\left[
  \begin{array}{cc}
    4 & 4\\
    1 & 4\\
    2 & 4\\
  \end{array}
\right].
\end{equation}
From these generator matrices, we obtain the \( 3 \)-user NO-CWEA code
\[
\Phi_{\rm no} = \left\{ C_1^{\rm no} = (00, 11, 44), \, C_2^{\rm no} = (00, 41, 14), \, C_3^{\rm no} = (00, 01, 24) \right\}
\]
over GF(\( 5^2 \)).
Next, we define the finite-field to complex-field transform function \( {\rm F}_{\rm F2C} \) as follows:
\[
+1 = {\rm F}_{\rm F2C}(1), \quad -1 = {\rm F}_{\rm F2C}(4), \quad 0 = {\rm F}_{\rm F2C}(0), \quad \text{and} \quad +1i = {\rm F}_{\rm F2C}(2),
\]
where \( 1i \) denotes the imaginary unit. Using this transform, the complex-field generator matrices \( {\bf S}_{\rm M}^{\bf 1} \) and \( {\bf S}_{\rm M}^{\bf 2} \) are derived as
\vspace{-0.1in}
\begin{equation} \label{e.NO_CWEA_CF_GF5}
{\bf S}_{\rm M}^{\bf 1} = {\rm F}_{\rm F2C}({\bf G}_{\rm M}^{\bf 1}) = 
\left[
  \begin{array}{cc}
    +1 & +1\\
    -1 & +1\\
    0 & +1\\
  \end{array}
\right],
\quad
{\bf S}_{\rm M}^{\bf 2} = {\rm F}_{\rm F2C}({\bf G}_{\rm M}^{\bf 2}) = 
\left[
  \begin{array}{cc}
    -1 & -1\\
    +1 & -1\\
    \textcolor{blue}{1i} & -1\\
  \end{array}
\right].
\end{equation}
It can be proven that, based on (\ref{e.NO_CWEA_CF_GF5}), there exists a one-to-one mapping between the CFSP signal block \( {\bf r}[k] \) and the user block \( {\bf d}[k] \).  
$\blacktriangle \blacktriangle$

\section{Construction of $p$-ary CWEA Codes via Basis Decomposition}

In the previous sections, our primary focus has been on $3$-ary CWEA codes, where the input digits are defined over GF($3$). In this section, we extend our discussion to the construction of $p$-ary CWEA codes using the basis decomposition method, where $p$ is a prime number greater than $3$.

As discussed in Section II, an $M$-user $p$-ary EA code is represented as $\Phi_{\rm cw} = \{C_1, C_2, \ldots, C_j, \ldots, C_M\}$, where the $j$-th EA is given by 
\(
C_j = (\alpha^{l_{j,0}}, \alpha^{l_{j,1}}, \ldots, \alpha^{l_{j,\varsigma}}, \ldots, \alpha^{l_{j,p-1}})
\)
for $0 \le \varsigma < p$ and $1 \le j \le M$. If each element $\alpha^{l_{j,\varsigma}}$ of the EA $C_j$ is a codeword generated by a generator matrix ${\bf G}_{\rm M}^{\varsigma}$, the code is referred to as a $p$-ary CWEA code.
To construct a $p$-ary CWEA code, $p$ generator matrices must be designed, one for each digit $\varsigma$, where $\varsigma$ maps uniquely to its corresponding generator matrix ${\bf G}_{\rm M}^{\varsigma}$. These generator matrices, denoted as ${\bf G}_{\rm M}^{\bf 0}, {\bf G}_{\rm M}^{\bf 1}, \ldots, {\bf G}_{\rm M}^{\bf p-1}$, must satisfy the USPM constraint. However, designing such matrices is a highly challenging task, particularly for large values of $p$. To address this issue, we propose the use of basis decomposition for constructing $p$-ary CWEA codes, as demonstrated in the construction of a $3$-ary CWEA code. The BD method leverages low-dimensional structures to achieve high-dimensional transmission. Specifically, if a large integer $p$ can be decomposed into a sum of smaller integers, the dimensionality of the problem is reduced, thereby simplifying the design of the corresponding generator matrices ${\bf G}_{\rm M}^{\bf 0}, {\bf G}_{\rm M}^{\bf 1}, \ldots, {\bf G}_{\rm M}^{\bf p-1}$.

It is well-known that any integer \( p > 3 \) can be expressed as a linear combination of \( 2 \)s and \( 3 \)s. Consequently, binary (base \( 2 \)) and ternary (base \( 3 \)) systems form the foundational bases for integer representation. While binary decomposition has been extensively studied in Section IV, our focus here is on \textit{ternary decomposition}, which offers a faster convergence rate.

To construct a \( p \)-ary EA code using the BD method, we first determine the number of subsets \( N_d \). For a given integer \( p \), the number of subsets \( N_d \) is set as \( N_d = \lceil \log_3(p) \rceil \). The basis of dimension \( J \) is then divided into \( N_d \) subsets, each with a dimension of \( M = J / N_d \). With \( N_d \) determined, both the digits and generator matrices can be processed in a low-dimensional framework.

For the digits in such a \( p \)-ary system, let \( d_{j,k} \) denote the \( p \)-ary digit of the \( j \)-th user. The digit \( d_{j,k} \) is transformed into its ternary form using the \( p \)-ary to ternary transform function \( {\rm F}_{p2{\rm T}} \), denoted as
\[
\left(d_{j,k}^{(N_d)}, \ldots, d_{j,k}^{(1)} \right)_3 = {\rm F}_{p2{\rm T}}(d_{j,k}),
\]
where \( d_{j,k} = \sum_{n_d=1}^{N_d} d_{j,k}^{(n_d)} \cdot 3^{n_d-1} \), with \( d_{j,k}^{(n_d)} \in \mathbb{T} \). The transform function \( {\rm F}_{p2{\rm T}} \) performs an integer decomposition in ternary form. The superscripts \((1), (2), \ldots, (N_d)\) represent the integer-decomposing positions from \( 1 \) to \( N_d \).

For the generator matrices, the required \( p \) generator matrices \( {\bf G}_{\rm M}^{\bf 0}, {\bf G}_{\rm M}^{\bf 1}, \ldots, {\bf G}_{\rm M}^{\bf p-1} \) can be reduced to \( N_d \) generator matrices \( {\bf G}_{\rm M}^{(0)}, {\bf G}_{\rm M}^{(1)}, \ldots, {\bf G}_{\rm M}^{(N_d)} \). For example,
\[
\underbrace{{\bf G}_{\rm M}^{\bf 0}, {\bf G}_{\rm M}^{\bf 1}, {\bf G}_{\rm M}^{\bf 2}}_{ \textcolor{blue}{{\bf G}_{\rm M}^{(0)}}}, 
\underbrace{{\bf G}_{\rm M}^{\bf 3}, {\bf G}_{\rm M}^{\bf 4}, {\bf G}_{\rm M}^{\bf 5}}_{ \textcolor{blue}{{\bf G}_{\rm M}^{(1)}}}, 
\ldots, 
\underbrace{..., {\bf G}_{\rm M}^{\bf p-1}}_{ \textcolor{blue}{{\bf G}_{\rm M}^{(N_d)}}}.
\]
This reduction is enabled by the property of the \textit{additive inverse calculation} in the ternary system: for any \( a \in \mathbb{T} \), the relationship
\(
a_{\rm ai} = 3 - a = 2 \cdot a
\)
holds. Consequently, leveraging the properties of the generator matrix in the AI-D-CWEA code, every three generator matrices can be replaced by a single unified generator matrix. For instance, \( {\bf G}_{\rm M}^{\bf 0}, {\bf G}_{\rm M}^{\bf 1}, {\bf G}_{\rm M}^{\bf 2} \) are replaced by \( {\bf G}_{\rm M}^{(0)} \).

Next, according to the BD method, a \( J \)-dimensional basis set \( \mathcal{B} \) in \( \mathbb{V}_3(m) \) is first divided into \( N_d \) subsets, denoted as \( \mathcal{B}_1, \mathcal{B}_2, \ldots, \mathcal{B}_{N_d} \), where \( \mathcal{B} = \mathcal{B}_1 \cup \mathcal{B}_2 \cup \ldots \cup \mathcal{B}_{N_d} \). Based on these \( N_d \) subsets, we can construct \( N_d \) generator matrices, denoted as \( {\bf G}_{\rm M}^{(1)}, \ldots, {\bf G}_{\rm M}^{(N_d)} \). In this case, the parallel generator matrix \( {\bf G}_{\rm M, pll} \) is defined as
\[
{\bf G}_{\rm M, pll} = 
\left[
  {\bf G}_{\rm M}^{(N_d)}, \ldots, {\bf G}_{\rm M}^{(1)}
\right]^{\rm T},
\]
which is a \( J \times m \) matrix over GF(\( 3 \)). The dimensionality of the parallel generator matrix \( {\bf G}_{\rm M, pll} \) is reduced from \( p \) to \( 3 \), resulting in a low-dimensional structure.

Suppose there are \( M \) users. We define the parallel user block \( {\bf a}[k] \) as

\vspace{-0.15in}
\begin{small}
  \[
{\bf a}[k] = {\rm F}_{p2{\rm B}}({\bf d}[k]) = 
\left(
\underbrace{d_{1,k}^{(N_d)}, d_{2,k}^{(N_d)}, \ldots, d_{j,k}^{(N_d)}, \ldots, d_{M,k}^{(N_d)}}_{M}, 
\ldots,
\underbrace{d_{1,k}^{(1)}, d_{2,k}^{(1)}, \ldots, d_{j,k}^{(1)}, \ldots, d_{M,k}^{(1)}}_{M}  
\right), 
\]
\end{small}
which is a ternary vector of length \( M \cdot N_d \).

The FFSP block is obtained by encoding the parallel user block \( {\bf a}[k] \) using the parallel generator matrix \( {\bf G}_{\rm M, pll} \), yielding
\begin{equation} \label{e.W_k_pll}
  w_k = {\bf a}[k] \cdot {\bf G}_{\rm M, pll},
\end{equation}
where \( w_k \) is an \( m \)-tuple over GF(\( 3 \)), representing the codeword generated by the parallel generator matrix \( {\bf G}_{\rm M, pll} \). By leveraging the parallel generator matrix \( {\bf G}_{\rm M, pll} \) over GF(\( 3 \)), we can efficiently encode and decode the \( p \)-ary BD-CWEA codes in ternary form.

\textbf{Example 8:}
For a $5$-ary source transmission system, we construct a $2$-user $5$-ary CWEA code $\Phi_{\rm cw}$.
Since $p = 5$, it is derived that $N_d = \lceil \log_3(p) \rceil = 2$.
It means each digit $d_{j,k}$ of the $j$-th user can be expressed by two ternary digits, 
i.e., $\left(d_{j,k}^{(2)}, d_{j,k}^{(1)}\right)_3$.
Hence, the transform function ${\rm F}_{p2{\rm T}}$ is given as

\vspace{-0.1in}
\begin{small}
  \begin{equation} 
  {\rm F}_{p2{\rm T}}(d_{j,k}) = \left(d_{j,k}^{(2)}, d_{j,k}^{(1)}\right)_3 = 
  \left\{
    \begin{array}{cc}
      (0, 0)_3, & d_{j,k} = (0)_5\\
      (0, 1)_3, & d_{j,k} = (1)_5\\
      (0, 2)_3, & d_{j,k} = (2)_5\\
      (1, 0)_3, & d_{j,k} = (3)_5\\
      (1, 1)_3, & d_{j,k} = (4)_5\\
    \end{array},
  \right.
\end{equation}
\end{small}

Regarding as $J = 2$, we can list all the possible combinations of the input user block ${\bf d}[k]$ and its corresponding parallel user block ${\bf a}[k]$, given as
\begin{equation*} 
  \begin{array}{cc}
    {\bf d}[k] \in \{0\textcolor{red}{0}, 0\textcolor{red}{1}, 0\textcolor{red}{2}, 
                     0\textcolor{red}{3}, 0\textcolor{red}{4}\} \Rightarrow 
    {\bf a}[k] \in \{0\textcolor{red}{0}0\textcolor{red}{0}, 0\textcolor{red}{0}0\textcolor{red}{1}, 
                     0\textcolor{red}{0}0\textcolor{red}{2}, 0\textcolor{red}{1}0\textcolor{red}{0}, 
                     0\textcolor{red}{1}0\textcolor{red}{1}\},\\
    {\bf d}[k] \in \{1\textcolor{red}{0}, 1\textcolor{red}{1}, 1\textcolor{red}{2}, 
                     1\textcolor{red}{3}, 1\textcolor{red}{4}\} \Rightarrow 
    {\bf a}[k] \in \{0\textcolor{red}{0}1\textcolor{red}{0}, 0\textcolor{red}{0}1\textcolor{red}{1}, 
                     0\textcolor{red}{0}1\textcolor{red}{2}, 0\textcolor{red}{1}1\textcolor{red}{0}, 
                     0\textcolor{red}{1}1\textcolor{red}{1}\},\\
    {\bf d}[k] \in \{2\textcolor{red}{0}, 2\textcolor{red}{1}, 2\textcolor{red}{2}, 
                     2\textcolor{red}{3}, 2\textcolor{red}{4}\} \Rightarrow 
    {\bf a}[k] \in \{0\textcolor{red}{0}2\textcolor{red}{0}, 0\textcolor{red}{0}2\textcolor{red}{1}, 
                     0\textcolor{red}{0}2\textcolor{red}{2}, 0\textcolor{red}{1}2\textcolor{red}{0}, 
                     0\textcolor{red}{1}2\textcolor{red}{1}\},\\
    {\bf d}[k] \in \{3\textcolor{red}{0}, 3\textcolor{red}{1}, 3\textcolor{red}{2}, 
                     3\textcolor{red}{3}, 3\textcolor{red}{4}\} \Rightarrow 
    {\bf a}[k] \in \{1\textcolor{red}{0}0\textcolor{red}{0}, 1\textcolor{red}{0}0\textcolor{red}{1}, 
                     1\textcolor{red}{0}0\textcolor{red}{2}, 1\textcolor{red}{1}0\textcolor{red}{0}, 
                     1\textcolor{red}{1}0\textcolor{red}{1}\},\\
    {\bf d}[k] \in \{4\textcolor{red}{0}, 4\textcolor{red}{1}, 4\textcolor{red}{2}, 
                     4\textcolor{red}{3}, 4\textcolor{red}{4}\} \Rightarrow 
    {\bf a}[k] \in \{1\textcolor{red}{0}1\textcolor{red}{0}, 1\textcolor{red}{0}1\textcolor{red}{1}, 
                     1\textcolor{red}{0}1\textcolor{red}{2}, 1\textcolor{red}{1}1\textcolor{red}{0}, 
                     1\textcolor{red}{1}1\textcolor{red}{1}\}.\\

  \end{array}
\end{equation*}

We still use the two subsets ${\mathcal B}_1$ and ${\mathcal B}_2$ given by Example 5. Then, the parallel generator matrix ${\bf G}_{\rm M, pll}$ of the $2$-user $5$-ary CWEA code $\Phi_{cw}$ is given as
\begin{equation}
{\bf G}_{\rm M, pll} 
= \left[
  \begin{matrix}
  {\bf G}_{\rm M}^{(2)} \\
  \hdashline
  {\bf G}_{\rm M}^{(1)} \\
  \end{matrix}
  \right]
= \left[
  \begin{matrix}
    1 & 0 & 1 & 1 & 0 & 0 & 0\\
    0 & 1 & 0 & 1 & 1 & 0 & 0\\
    \hdashline
    0 & 0 & 1 & 0 & 1 & 1 & 0\\
    0 & 0 & 0 & 1 & 0 & 1 & 1\\
  \end{matrix}
\right].
\end{equation}
Since $p = 5$ and $J = 2$, there are totally $5^2 = 25$ FFSP blocks (or codewords of the parallel generator matrix ${\bf G}_{\rm M, pll}$).
By using (\ref{e.W_k_pll}), the FFSP blocks are given as following:
$0000000$, $0001011$, 
$0002022$, $0101100$,
$0102111$, $0010110$,
$0011121$, $0012102$,
$0111210$, $0112221$,
$0020220$, $0021201$,
$0022212$, $0121020$,
$0122001$, $1011000$,
$1012011$, $1010022$,
$1112100$, $1110111$,
$1021110$, $1022121$,
$1020102$, $1122210$,
$1120221$,
which are $25$ $7$-tuples over GF($3$).
$\blacktriangle \blacktriangle$

\section{Loading Factor and Multirate Sequence}

In this section, we begin by outlining some basic assumptions and preliminaries. Next, we redefine the concepts of the loading factor and coding rate, which apply to both multiuser codes (or UD-CWEA codes) and general channel codes, in order to unify these two types of codes. We then explore the rate characteristics of the transmitted encoded sequence. 

\vspace{-0.1in}
\subsection{Preliminaries}

Let the digit sequence of the $j$-th user be denoted by ${\bf d}_j \in \mathbb{P}^{1 \times K}$, where $K$ represents the length of the sequence. This means that the $j$-th user occupies $K$ degrees of freedom (DoFs) for transmitting information symbols.

We assume that the EA code is constructed based on an $(m, M)$ linear block code ${\mathcal C}_{mc}$ over GF($p$), with its generator matrix in systematic form denoted as ${\bf G}_{mc}$, an $M \times m$ matrix. Suppose the EA encoder operates in parallel mode to encode the digit sequence of each user. We assume that $K \le M$, so we append a zero vector of length $M - K$, denoted as ${\bf 0}$, to the digit sequence ${\bf d}_j$. This results in the $1 \times M$ information vector ${\bf u}_{j,D} = ({\bf d}_j, {\bf 0}) \in \mathbb{P}^{1 \times M}$.
The encoded codeword is then given by
\(
{\bf c}_j = {\bf u}_{j,D} \cdot {\bf G}_{mc} = ({\bf u}_{j,D}, {\bf c}_{j,\rm red}),
\)
which is an $m$-tuple over GF($p^m$). Here, ${\bf c}_{j,\rm red}$ represents the parity vector of ${\mathcal C}_{mc}$, and its length is $Q = m - M$, meaning that the $j$-th user occupies $Q$ DoFs for transmitting parity symbols. In this case, the encoded codeword ${\bf c}_j$ is the output element of the EA encoder in parallel mode. If the output element ${\bf c}_j$ is directly transmitted to the AWGNC, the FFMA system operates in either FF-TDMA mode or FF-CCMA without channel coding mode.

The corresponding parameters are summarized as follows:

\begin{small}
\begin{equation} \label{e.assumptions}
  \begin{array}{ll}
    \text{Digit sequence:} & {\bf d}_j \in {\mathbb P}^{1 \times K}, \\
    \hdashline
    \text{Information vector of ${\mathcal C}_{mc}$:} & {\bf u}_{j,D} = ({\bf d}_j, {\bf 0}) \in {\mathbb P}^{1 \times M}, \\
    \text{Parity vector of ${\mathcal C}_{mc}$:} & {\bf c}_{j,\rm red} \in {\mathbb P}^{1 \times Q}, \\
    \text{Codeword of ${\mathcal C}_{mc}$:} & {\bf c}_j = {\bf u}_{j,D} \cdot {\bf G}_{mc} = ({\bf u}_{j,D}, {\bf c}_{j,\rm red}) \in {\mathbb P}^{1 \times m},  \\
    \text{Codeword of ${\mathcal C}_{mc}$:} & {w} = \sum_{j=1}^{J} {\bf c}_j \in {\mathbb P}^{1 \times m},  \\
    \text{Polarization adjusted vector:} & {\mu}_{\rm reg}^{\rm td} = (\mu_{1}, \mu_{2}),  \\
    \text{Polarization adjusted scaling:} & {\mu}_{\rm pas} = \frac{\mu_{1}}{\mu_{2}},  \\
    \hdashline
    \text{Information vector of ${\mathcal C}_{gc}$:} & {\bf c}_{j,D} = ({\bf c}_j, {\bf 0}) \in {\mathbb P}^{1 \times K_{gc}}, \\
    \text{Parity vector of ${\mathcal C}_{gc}$:} & {\bf v}_{j,\rm red} \in {\mathbb P}^{1 \times R}, \\
    \text{Codeword of ${\mathcal C}_{gc}$:} & {\bf v}_j = {\bf c}_{j,D} \cdot {\bf G}_{mc} = ({\bf c}_{j,D}, {\bf v}_{j,\rm red}) \in {\mathbb P}^{1 \times N},  \\
    \text{Polarization adjusted vector:} & {\mu}_{\rm reg}^{\rm cc} = (\mu_{1}, \mu_{2}, \mu_{c}).  \\
  \end{array}
\end{equation}
\end{small}

If the EA code is concatenated with a channel code ${\mathcal C}_{gc}$, the FFMA system operates in the FF-CCMA with channel coding mode. Let ${\mathcal C}_{gc}$ be an $(N, K_{gc})$ linear block code over GF($p$). Assume that $m \le K_{gc}$, and similarly to the EA encoding process, we append a zero vector of length $K_{gc} - m$, denoted as ${\bf 0}$, to the vector ${\bf c}_j$, thereby obtaining the information vector ${\bf c}_{j,D} = ({\bf c}_j, {\bf 0}) \in \mathbb{P}^{1 \times K_{gc}}$.
The vector ${\bf c}_{j,D}$ is then encoded using the systematic generator matrix of the channel code ${\bf G}_{gc}$. The resulting encoded codeword is given by:
\(
{\bf v}_j = {\bf c}_{j,D} \cdot {\bf G}_{gc} = ({\bf c}_{j,D}, {\bf v}_{j,\rm red}),
\)
which is a $1 \times N$ vector over GF($p$). Here, ${\bf v}_{j,\rm red}$ represents the parity vector of the channel code ${\mathcal C}_{gc}$, with length $R = N - K_{gc}$, indicating that the $j$-th user occupies an additional $R$ DoFs for transmitting parity symbols.

In the FF-TDMA mode, the polarization-adjusted vector (PAV) in its regular form (or regular PAV) is expressed as $\mu_{\rm reg}^{\rm td} = (\mu_{1}, \mu_{2})$, as defined in \cite{FFMA2}. The polarization-adjusted scaling (PAS) is defined as $\mu_{\rm pas} = \frac{\mu_{1}}{\mu_{2}}$, as detailed in \cite{FFMA}. 
For the FF-CCMA mode, the regular form of the PAV is expressed as
$\mu_{\rm reg}^{\rm cc} = (\mu_{1}, \mu_{2}, \mu_{c})$ \cite{FFMA2}. 

In the FF-TDMA mode or the FF-CCMA without channel coding mode, we examine the output element ${\bf c}_j \in {\mathbb P}^{1 \times m}$, where ${\bf c}_j = ({\bf d}_j, {\bf 0}, {\bf c}_{j,\rm red})$. The transmit vector consists of three distinct sub-sequences: ${\bf d}_j \in {\mathbb P}^{1 \times K}$, ${\bf 0} \in {\mathbb P}^{1 \times (m-K-Q)}$, and ${\bf c}_{j,\rm red} \in {\mathbb P}^{1 \times Q}$. It is clear that the effect of the zero vector ${\bf 0}$ can be disregarded. Therefore, the power allocation scheme focuses on redistributing the remaining power $(m-K-Q) \cdot P_{avg}$ to optimize the performance metric. Additionally, since the information part ${\bf d}_j$ and the parity part ${\bf c}_{j,\rm red}$ typically operate at different rates, the transmit element ${\bf c}_j$ can be regarded as a \textit{multirate sequence}. 
It is important to note that the codeword ${\bf c}_j$ is the sum of $K$ codewords (or elements) from ${\mathcal C}_{mc}$, which can also be interpreted as an FFSP block. Furthermore, the FFSP block $w$ for a $J$-user system is the sum of the codewords ${\bf c}_1, {\bf c}_2, \dots, {\bf c}_J$, i.e., ${w} = \sum_{j=1}^{J} {\bf c}_j$, which is also a codeword of ${\mathcal C}_{mc}$. Specifically, when $J = 1$, we have $w = {\bf c}_j$.

Similarly, in the FF-CCMA with channel coding mode, the codeword ${\bf v}_j \in {\mathbb P}^{1 \times N}$, where ${\bf v}_j = ({\bf d}_j, {\bf 0}, {\bf c}_{j,\rm red}, {\bf 0}, {\bf v}_{j,\rm red})$. The transmit vector now consists of five sub-sequences: ${\bf d}_j \in {\mathbb P}^{1 \times K}$, ${\bf 0} \in {\mathbb P}^{1 \times (m-K-Q)}$, ${\bf c}_{j,\rm red} \in {\mathbb P}^{1 \times Q}$, ${\bf 0} \in {\mathbb P}^{1 \times (K_{gc}-m)}$, and ${\bf v}_{j,\rm red} \in {\mathbb P}^{1 \times R}$. Hence, the total length of the zero vector ${\bf 0}$ is equal to $K_{gc} - K - Q$, which implies that the remaining power to be allocated is $(K_{gc}-K-Q) \cdot P_{avg}$. Clearly, the transmit codeword ${\bf v}_j$ also represents a multirate sequence.

\vspace{-0.15in}
\subsection{Multirate Sequence}

For a \( p \)-ary source transmission system, assume that both the multiuser codes \( {\mathcal C}_{mc} \) (or UD-EA codes) and channel codes are constructed over \( \text{GF}(p) \). Since a multiuser code can be derived from a channel code, we can use the multiuser code as an example to analyze the loading factor and coding rate.

\vspace{-0.05in}
\begin{definition} (\textbf{Loading Factor})
  The loading factor of a multiuser code \( {\mathcal C}_{mc} \) is defined as the ratio of the number of occupied degrees of freedom (DoFs), denoted by \( M \), to the total number of available DoFs, denoted by \( m \), in the extension axis (also referred to as the E-axis). This is expressed as:
  \begin{equation}
    \eta_{mc} = \frac{M}{m},
  \end{equation}
  where \( {\mathcal C}_{mc} \) is constructed based on an \( (m, M) \) linear block code over \( \text{GF}(p) \).
\end{definition}

\begin{definition} (\textbf{Coding Rate})
  For a multiuser code \( {\mathcal C}_{mc} \) constructed over \( \text{GF}(p) \), if it occupies \( m \) degrees of freedom (DoFs) in the E-axis to transmit \( p^M \) codewords, the coding rate of \( {\mathcal C}_{mc} \) is defined as:
  \[
  R_{q, mc} = \frac{\log_2 (p^M)}{m} = \frac{M}{m} \cdot \log_2{p}
  = \eta_{mc} \cdot \log_2{p},
  \]
  where \( \eta_{mc} \) is the loading factor of the multiuser code \( {\mathcal C}_{mc} \).
\end{definition}

For an \( (N, K_{gc}) \) channel code \( {\mathcal C}_{gc} \) constructed over GF($p$), the codeword length \( N \) and information length \( K_{gc} \) correspond to the E-axis of a finite field. Therefore, the loading factor is given by
\(
\eta_{gc} = \frac{K_{gc}}{N}.
\)
Similarly, the coding rate of the channel code \( {\mathcal C}_{gc} \) is 
\(
R_{q, gc} = \eta_{gc} \cdot \log_2{p}.
\)
If the channel code \( {\mathcal C}_{gc} \) is constructed over \( \text{GF}(2) \), which corresponds to a binary channel code (i.e., when \( p = 2 \)), the coding rate equals the loading factor, i.e., $R_{q, gc} = \eta_{gc}$.

Based on these definitions, we proceed to investigate the rate characteristics of the sequence \( {\bf c}_j \) in the FF-TDMA mode as an example. This analysis can be easily extended to the sequence \( {\bf v}_j \) in the FF-CCMA mode. Therefore, the sequence \( {\bf v}_j \) will not be repeated anymore.

In the FF-TDMA mode, the transmit vector ${\bf c}_j$ consists of the information vector ${\bf d}_j \in {\mathbb P}^{1 \times K}$ and the parity vector ${\bf c}_{j,\rm red} \in {\mathbb P}^{1 \times Q}$. 
The coding rate of the information vector ${\bf d}_{j} \in {\mathbb P}^{1 \times K}$ of the $j$-th user, defined by $R_{j, 1}$, is equal to 
  \begin{equation} \label{e.R_j1}
    R_{j, 1} = \frac{\log_2 (p^K)}{K} = \log_2{p},
  \end{equation}
where $\eta_{j,1} = 1$ is the loading factor of the information vector.

The parity vector \( {\bf c}_{j, \rm red} \) is determined by both the information vector \( {\bf d}_j \) and the generator matrix \( {\bf G}_{mc} \) of \( {\mathcal C}_{gc} \), i.e., \( {\bf c}_{j, \rm red} = {\bf d}_j \cdot {\bf F}_{\rm red} \), where \( {\bf F}_{\rm red} \) is a \( K \times Q \) linearly independent submatrix of the generator matrix \( {\bf G}_{mc} \), as defined in \cite{FFMA2}. The coding rate of the parity vector \( {\bf c}_{j, \rm red} \) can be summarized in the following Lemma \ref{lemma.SU_parity}.

\begin{lemma} \label{lemma.SU_parity}
  Suppose \( {\bf d}_j \) is a \( 1 \times K \) input vector over GF(\( p \)), and \( {\bf F}_{\rm red} \) is a \( K \times Q \) linearly independent matrix. The output vector \( {\bf c}_{j, \rm red} \) is given by \( {\bf c}_{j, \rm red} = {\bf d}_j \cdot {\bf F}_{\rm red} \). Thus, the coding rate of the output vector \( {\bf c}_{j, \rm red} \), which is a \( 1 \times Q \) vector over GF(\( p \)), is given by
  \begin{equation} \label{e.R_j2}
    R_{j, 2} = \frac{\log_2 |{\bf c}_{j, \rm red}|}{Q} 
    = \frac{\min\{K, Q\}}{Q} \cdot (\log_2 p) 
    = \eta_{j,2} \cdot \log_2 p 
  \end{equation}
  where \( |{\bf c}_{j, \rm red}| \) denotes the number of codewords of \( {\bf c}_{j, \rm red} \), and \( \eta_{j,2} = \frac{\min\{K, Q\}}{Q} \) is the loading factor of \( {\bf c}_{j, \rm red} \). Considering the relationship between \( K \) and \( Q \), there are two cases:
  \begin{enumerate}
    \item If \( K < Q \), the coding rate is \( R_{j, 2} = \frac{K}{Q} \cdot \log_2 p \).
    \item If \( K \ge Q \), the coding rate is \( R_{j, 2} = \log_2 p \).
  \end{enumerate}
\end{lemma}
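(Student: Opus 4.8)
The plan is to reduce the claim to a rank computation for the $\mathrm{GF}(p)$-linear map $\phi:\ {\bf d}_j \mapsto {\bf d}_j\cdot{\bf F}_{\rm red}$ and then count the size of its image. First I would fix the reading that ``${\bf F}_{\rm red}$ is a $K\times Q$ linearly independent matrix'' means ${\bf F}_{\rm red}$ has full rank, i.e.\ ${\rm Rank}({\bf F}_{\rm red})=\min\{K,Q\}$; this is exactly the property enjoyed by the parity-generating submatrix of a systematic generator matrix ${\bf G}_{mc}$, as used in \cite{FFMA2}, so the hypothesis is consistent with the earlier setup. Since ${\bf d}_j$ is a row vector, the set of output vectors $\{{\bf c}_{j,\rm red}\}$ is precisely the row space of ${\bf F}_{\rm red}$, whose dimension equals ${\rm Rank}({\bf F}_{\rm red})$.

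Then I would invoke the rank--nullity theorem: $\dim({\rm Im}\,\phi)={\rm Rank}({\bf F}_{\rm red})=\min\{K,Q\}$, hence the number of distinct codewords is $|{\bf c}_{j,\rm red}|=p^{\min\{K,Q\}}$. (Equivalently, two inputs yield the same output iff their difference lies in $\ker\phi$, and since $\dim\ker\phi=K-\min\{K,Q\}$ every fiber has size $p^{K-\min\{K,Q\}}$, giving $p^{K}/p^{K-\min\{K,Q\}}=p^{\min\{K,Q\}}$ distinct outputs.) Substituting into the definition of the coding rate gives
\[
R_{j,2}=\frac{\log_2\!\big(p^{\min\{K,Q\}}\big)}{Q}=\frac{\min\{K,Q\}}{Q}\cdot\log_2 p=\eta_{j,2}\cdot\log_2 p,
\]
with $\eta_{j,2}=\min\{K,Q\}/Q$, which is the stated formula. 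The two cases then follow immediately: if $K<Q$ then $\min\{K,Q\}=K$ and $R_{j,2}=\frac{K}{Q}\log_2 p$; if $K\ge Q$ then $\min\{K,Q\}=Q$, the map $\phi$ is surjective onto ${\rm GF}(p)^{1\times Q}$, and $R_{j,2}=\log_2 p$.

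I do not anticipate a genuine obstacle here; the computation is elementary linear algebra over a finite field. The only point requiring a line of care is pinning down what ``linearly independent matrix'' means for a non-square $K\times Q$ matrix --- namely full row rank when $K<Q$ and full column rank when $K\ge Q$, both subsumed by ${\rm Rank}({\bf F}_{\rm red})=\min\{K,Q\}$ --- and confirming that the quantity $|{\bf c}_{j,\rm red}|$ is counted as the cardinality of the image of $\phi$ rather than, say, the length of a single codeword. Once those conventions are fixed, the two displayed cases drop out with no further work.
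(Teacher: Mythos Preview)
Your proposal is correct and follows essentially the same approach as the paper: both arguments identify $|{\bf c}_{j,\rm red}|=p^{{\rm rank}({\bf F}_{\rm red})}=p^{\min\{K,Q\}}$ and then substitute into the coding-rate definition. Your version is simply more explicit about the linear-algebra justification (rank--nullity, fiber counting, and the reading of ``linearly independent matrix'' as full rank), whereas the paper states the rank-to-cardinality step in one line.
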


\begin{proof}
  Since \( {\bf c}_{j, \rm red} = {\bf d}_j \cdot {\bf F}_{\rm red} \), the number of codewords of \( {\bf c}_{j, \rm red} \) is determined by the rank of the \( K \times Q \) matrix \( {\bf F}_{\rm red} \). Considering that each digit is from the finite field GF(\( p \)), the number of codewords of \( {\bf c}_{j, \rm red} \) is given by
  \[
    |{\bf c}_{j, \rm red}| = p^{{\rm rank}({\bf F}_{\rm red})} = p^{\min\{K, Q\}}.
  \]
  Substituting this into the definition of coding rate, we obtain Eq. (\ref{e.R_j2}).
\end{proof}

Similarly to the analysis of the coding rate for ${\bf c}_{j, \rm red}$, we can deduce the coding rate for the parity vector ${\bf v}_{j, \rm red}$ of the channel code ${\mathcal C}_{gc}$. The parity vector ${\bf v}_{j,\rm red}$ is solely determined by the input information vector ${\bf d}_j$, and ${\bf v}_{j,\rm red}$ is a \( 1 \times R \) vector over GF(\( p \)), where \( R = N - K_{gc} \). Therefore, the coding rate of the output vector \( {\bf v}_{j, \rm red} \) is given by
\begin{equation} \label{e.R_jc}
  R_{j, c} 
  = \frac{\min\{K, R\}}{R} \cdot (\log_2 p) 
  = \eta_{j,c} \cdot \log_2 p, 
\end{equation}
where \( \eta_{j,c} = \frac{\min\{K, R\}}{R} \) is the loading factor of \( {\bf v}_{j, \rm red} \).

Based on Lemma \ref{lemma.SU_parity}, when \( K < Q \), the sequence \( {\bf c}_j = ({\bf d}_j, {\bf 0}, {\bf c}_{j,\rm red}) \) can be viewed as a concatenation of two sub-sequences with different rates. Specifically, \( {\bf d}_j \) is a sub-sequence with rate \( \log_2 p \), and \( {\bf c}_{j,\rm red} \) is a sub-sequence with rate \( \frac{K}{Q} \cdot \log_2 p \). Therefore, we refer to this type of sequence \( {\bf c}_j \) as a \textit{multirate sequence}.
On the other hand, when \( K \ge Q \), both the sub-sequence \( {\bf d}_j \) and the sub-sequence \( {\bf c}_{j,\rm red} \) have the same rate of \( \log_2 p \). In this case, the sequence \( {\bf c}_j \) is a \textit{single-rate sequence}.
We can deduce the aforementioned result for the case of \( K = Q \) in the following Corollary \ref{cor.SU_parity}.

\begin{corollary} \label{cor.SU_parity}
  Let \( {\bf c}_j = ({\bf d}_j, {\bf c}_{j,\rm red}) \) represent a codeword of a channel code \( {\mathcal C}_{mc} \), where \( {\bf d}_j \) is the information vector and \( {\bf c}_{j,\rm red} \) is the parity vector. Assume that \( {\mathcal C}_{mc} \) is a binary code constructed over GF(2). Let the length of the information vector \( {\bf d}_j \) be \( K \), and the length of the parity vector \( {\bf c}_{j,\rm red} \) be \( Q \). If \( K \ge Q \), the codeword \( {\bf c}_j \) forms a concatenated sequence with a single rate. In this case, the coding rate of the channel code \( {\mathcal C}_{mc} \) is \( R_{mc} = \frac{K}{K+Q} \ge 0.5 \). Therefore, for a binary linear block channel code with a rate greater than or equal to $0.5$, the encoded codeword is a single-rate sequence. 
\end{corollary}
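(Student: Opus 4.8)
The plan is to specialize Lemma~\ref{lemma.SU_parity} to the binary case $p=2$ and then read off both the single-rate property and the rate bound. First I would recall the decomposition of the transmit sequence as $\mathbf{c}_j = (\mathbf{d}_j, \mathbf{c}_{j,\rm red})$, where $\mathbf{d}_j \in \mathbb{B}^{1\times K}$ carries information at rate $R_{j,1} = \log_2 p = 1$ by Eq.~(\ref{e.R_j1}), and $\mathbf{c}_{j,\rm red} \in \mathbb{B}^{1\times Q}$ is the parity sub-sequence produced by the $K\times Q$ linearly independent submatrix $\mathbf{F}_{\rm red}$ of $\mathbf{G}_{mc}$. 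Applying Lemma~\ref{lemma.SU_parity} with $p=2$ together with the hypothesis $K \ge Q$ places us in case (2), so $\min\{K,Q\} = Q$ and the parity sub-sequence has rate $R_{j,2} = \log_2 p = 1$. Since $R_{j,1} = R_{j,2} = 1$, the two concatenated sub-sequences share a common rate, which is precisely the definition of a single-rate sequence adopted in the paragraph preceding the corollary; hence $\mathbf{c}_j$ is single-rate.

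Next I would compute the overall coding rate. For a systematic binary linear block code, the codeword $\mathbf{c}_j$ has length $m = K+Q$ and there are exactly $2^K$ distinct codewords, one for each information vector $\mathbf{d}_j$. By the definition of coding rate, $R_{mc} = \frac{\log_2(2^K)}{K+Q} = \frac{K}{K+Q}$. Because the map $x \mapsto \frac{x}{x+Q}$ is strictly increasing for $x>0$, the hypothesis $K \ge Q$ yields $R_{mc} = \frac{K}{K+Q} \ge \frac{Q}{Q+Q} = \frac12$, which is the stated bound. For the converse direction implicit in the final sentence, I would simply invert this inequality: $R_{mc} = \frac{K}{K+Q} \ge \frac12$ is equivalent to $2K \ge K+Q$, i.e.\ $K \ge Q$, which by the first paragraph forces $\mathbf{c}_j$ to be single-rate, closing the equivalence between a binary linear block code of rate at least $0.5$ and a single-rate encoded codeword.

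I do not anticipate a genuine obstacle here; the only point requiring a little care is the bookkeeping that, for a \emph{systematic} code, the parity length satisfies $Q = m - K$ so that $R_{mc} = K/m = K/(K+Q)$ holds with no zero-padding contributing to the rate, together with the observation that $\min\{K,Q\} = Q$ exactly when $K \ge Q$, which is what collapses $R_{j,2}$ to $1$ in Lemma~\ref{lemma.SU_parity}. Everything else reduces to a one-line monotonicity argument.
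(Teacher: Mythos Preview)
Your proposal is correct and follows essentially the same approach as the paper: the paper states the corollary as an immediate consequence of Lemma~\ref{lemma.SU_parity} and the discussion preceding it (where it is observed that when $K \ge Q$ both sub-sequences have rate $\log_2 p$), without giving a separate formal proof. Your write-up simply makes explicit the specialization to $p=2$, the case~(2) computation $R_{j,2}=1$, and the elementary equivalence $K \ge Q \Leftrightarrow \tfrac{K}{K+Q} \ge \tfrac12$, which is exactly what the paper leaves implicit.
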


This single-rate feature explains why it is easier to design a binary LDPC code with a rate greater than or equal to \( 0.5 \), as most designs assume the codeword has a single rate by default. However, this is not the case for low-rate codes. Most current LDPC designs, particularly for low-rate codes, do not account for the multi-rate nature of the code. As a result, designing low-rate LDPC codes is more challenging.






\section{Channel Capacity of PA-FFMA Systems}

It is important to note that \textit{channel capacity} and \textit{error performance} are two fundamental performance metrics in a communication system. While the \textit{maximum channel capacity} and \textit{minimum error probability} may occasionally align, they do not always coincide. Therefore, when designing a system, these metrics should be treated as distinct entities and considered independently. 

In this section, we examine the channel capacity of the proposed PA-FFMA system along with its corresponding power allocation scheme. In the following section, we will investigate the error performance metric and the associated power allocation strategy.
We begin by presenting the channel capacity of a single-user (SU) PA-FFMA system over an AWGNC, which is equivalent to a \textit{PA-LDPC code} transmitted through an AWGNC. To illustrate, consider the $j$-th user as an example. Following this, we examine the channel capacity of multiuser (MU) PA-FFMA transmission over a GMAC and introduce the corresponding power allocation scheme.

\vspace{-0.1in}
\subsection{Channel Capacity of a Single-user FF-TDMA System}

First, we present the channel capacity of a single-user FF-TDMA system, which is equivalent to a FF-CCMA system without channel coding. Therefore, the degrees of freedom of the FF-TDMA system are equal to \( m \). In other words, the FFMA system passes through \( m \) independent AWGNCs. By analyzing the channel capacity, we can also derive the optimal power allocation for the system. The channel capacity and the optimal power allocation scheme are summarized in the following theorem.

\begin{theorem} \label{theorem.SU.TDMA}
  (\textbf{Channel Capacity of a Single-User FF-TDMA System}) 
  Consider a system with \( m \) independent AWGNCs, which results in a DoF of \( m \). Suppose one user occupies \( K \) DoFs for transmitting information symbols and \( Q \) DoFs for transmitting parity symbols. The total transmit power is set to \( m P_{\text{avg}} \), where \( P_{\text{avg}} \) denotes the average power of each symbol. The optimal channel capacity of this system is given by:
  \begin{equation} \label{e.C_su_tdma}
    C_{\rm SU}^{\rm td} = (K + Q) \cdot C\left( \frac{m}{K + Q} \gamma_a \right)
    = \frac{K + Q}{2} \log_2\left(1 + \frac{m}{K + Q} \gamma_a\right),
  \end{equation}
  where $C(x) = \frac{1}{2} \log_2 \left( 1 + x \right)$, \( \gamma_a = \frac{P_{\text{avg}}}{\sigma^2} \) represents the average signal-to-noise power ratio (SNR), and \( \sigma^2 \) is the variance of the AWGNC.
  The subscript ``SU'' stands for ``single-user'' and the superscript ``td'' means ``FF-TDMA mode''.
  Consequently, the corresponding polarization-adjusted factors are:
  \[
  \mu_{1} = \mu_{2} = \frac{m}{K + Q}.
  \]
  This result implies that an equal power allocation (EPA) between the information vector and the parity vector achieves the maximum channel capacity. At this point, the polarization-adjusted scaling \( \mu_{\text{pas}} \) equals 1, i.e., \( \mu_{\rm pas} = \frac{\mu_{1}}{\mu_{2}} = 1 \).
\end{theorem}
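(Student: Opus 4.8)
The plan is to treat the $m$ independent AWGNCs as a parallel-channel resource and to frame the question as a constrained optimization: maximize the mutual information subject to a total power budget of $mP_{\text{avg}}$, where only $K+Q$ of the $m$ channels actually carry nonzero symbols (the zero vector of length $m-K-Q$ contributes nothing and can be discarded, exactly as noted in the preliminaries of Section VII). So effectively we have $K+Q$ usable Gaussian sub-channels, each of which contributes $\frac{1}{2}\log_2(1+P_i/\sigma^2)$ to the capacity when allocated power $P_i$, and the constraint is $\sum_{i=1}^{K+Q} P_i = mP_{\text{avg}}$.

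The key steps, in order: \textbf{(i)} Reduce the $m$-channel system to a $(K+Q)$-channel system by observing that the DoFs carrying the appended zero vector can be ignored, so the full power $mP_{\text{avg}}$ is redistributed over the $K+Q$ informative DoFs. \textbf{(ii)} Write the total capacity as $C = \sum_{i=1}^{K+Q} \frac{1}{2}\log_2\!\bigl(1 + \tfrac{P_i}{\sigma^2}\bigr)$ and maximize over $(P_1,\dots,P_{K+Q})$ subject to $\sum_i P_i = mP_{\text{avg}}$. \textbf{(iii)} Apply concavity of $x \mapsto \frac{1}{2}\log_2(1+x)$ together with Jensen's inequality (or equivalently a Lagrange-multiplier / water-filling argument for identical noise levels): the sum is maximized when all $P_i$ are equal, i.e. $P_i = \frac{m}{K+Q}P_{\text{avg}}$. \textbf{(iv)} Substitute back to get $C_{\rm SU}^{\rm td} = (K+Q)\cdot\frac{1}{2}\log_2\!\bigl(1 + \tfrac{m}{K+Q}\gamma_a\bigr)$ with $\gamma_a = P_{\text{avg}}/\sigma^2$, matching Eq.~(\ref{e.C_su_tdma}). \textbf{(v)} Translate the optimal allocation into polarization-adjusted factors: since the information vector (length $K$) and parity vector (length $Q$) each receive the same per-symbol power scaling $\frac{m}{K+Q}$ relative to $P_{\text{avg}}$, we read off $\mu_1 = \mu_2 = \frac{m}{K+Q}$, hence $\mu_{\rm pas} = \mu_1/\mu_2 = 1$, which is the EPA claim.

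The main obstacle is not the optimization itself — water-filling over identical-noise parallel Gaussian channels is classical and the equal-allocation optimum is immediate from concavity — but rather being careful about the \emph{accounting}: one must justify rigorously that the $m-K-Q$ zero-symbol DoFs genuinely contribute nothing to capacity (they carry no information, so excluding them cannot decrease the achievable rate, and including them with any positive power only wastes budget), and one must correctly connect the abstract per-sub-channel power $P_i$ to the paper's polarization-adjusted-vector formalism $\mu_{\rm reg}^{\rm td} = (\mu_1,\mu_2)$ so that the identification $\mu_1 = \mu_2 = \frac{m}{K+Q}$ is unambiguous. A secondary subtlety is noting that since all usable sub-channels share the same noise variance $\sigma^2$, the water-filling solution never drives any $P_i$ to zero, so the interior equal-power solution is indeed globally optimal and no boundary cases arise.
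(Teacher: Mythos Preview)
Your proposal is correct and follows essentially the same approach as the paper: both set up the constrained optimization over parallel Gaussian sub-channels with the power budget $K\mu_1 + Q\mu_2 = m$ and conclude that equal per-symbol power maximizes the sum capacity. The only cosmetic difference is that the paper works directly in the two-block parameterization $(\mu_1,\mu_2)$ and invokes a Lagrange multiplier, whereas you treat all $K+Q$ sub-channels symmetrically and appeal to concavity/Jensen; these are equivalent here since the noise levels are identical, and your per-channel formulation makes the ``no boundary cases'' observation slightly more transparent.
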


\begin{proof}
For a single-user FF-TDMA system, we assume that the multiuser code is based on an $(m, M)$ linear block code, denoted as ${\mathcal C}_{mc}$. Each codeword $w_j \in {\mathbb P}^{1 \times m}$ from ${\mathcal C}_{mc}$ is mapped to a modulated signal vector, denoted as ${\bf x}_j = (x_{j,0}, x_{j,1}, \dots, x_{j,i}, \dots, x_{j,m-1}) \in {\mathbb C}^{1 \times m}$, where each element $x_{j,i}$ belongs to a set $\mathcal X$, i.e., $x_{j,i} \in {\mathcal X}$. Let the received signal vector be denoted as ${\bf y} = (y_{j,0}, y_{j,1} \dots, y_{j,i}, \dots, y_{j,m-1}) \in {\mathbb C}^{1 \times m}$, where each element $y_{j,i}$ belongs to a set $\mathcal Y$, i.e., $y_{j,i} \in {\mathcal Y}$. Suppose the detected codeword $\widehat{w}$ is estimated as $\widehat{w} = g({\bf y})$. Thus, the process can be described as $w_j \to {\bf x}_j \to {\bf y}_j \to \widehat{w}_j$.

According to the data processing inequality, we have the following relationship:
\begin{equation}
  {\rm I}({w}; {\widehat{w}}) \le {\rm I}({\mathcal X}^m; {\mathcal Y}^m) 
  = {\rm H}({\mathcal Y}^m) - \sum_{i=0}^{m-1}{\rm H}({\mathcal Y}_i|{\mathcal X}_i) \le \sum_{i=0}^{m-1} {\rm I}({\mathcal X}; {\mathcal Y}),
\end{equation}
which will be discussed in further detail in the following section. 
Since the codeword $w_j$ is a multirate sequence, consisting of both the information vector and the parity vector, we calculate their respective channel capacities. 
The channel capacity of the information vector is given by
\begin{equation} \label{e.C_j_1}
  C_{j, 1} = {\rm I}(x_{j,i}; y_{j,i}) = C(\mu_{1} \gamma_a)
  = \frac{1}{2} \log_2 \left( 1 + \mu_{1} \gamma_a \right) ,
\end{equation}
where $\mu_{1} \gamma_a$ represents the average SNR per information symbol. 

Similarly, the channel capacity of the parity vector is given by
\begin{equation} \label{e.C_j_2}
  C_{j, 2} = {\rm I}(x_{j,i}; y_{j,i}) = C(\mu_{2} \gamma_a)
  = \frac{1}{2} \log_2 \left( 1 + \mu_{2} \gamma_a \right),
\end{equation}
where $\mu_{2} \gamma_a$ represents the average SNR per parity symbol of the multiuser code ${\mathcal C}_{mc}$.

Hence, the channel capacity of the proposed FF-TDMA (or PA-LDPC) system over an AWGNC is given by:
  \begin{equation} \label{e.su_tdma_capacity}
      C_{\rm SU}^{\rm td} 
      = \frac{1}{2} \sum_{i=0}^{m-1} \log_2 (1 + \gamma_{i})
      = \frac{K}{2} \log_2 (1 + \mu_{1} \gamma_{a})
      + \frac{Q}{2} \log_2 (1 + \mu_{2} \gamma_{a}).
  \end{equation}
Therefore, our goal is to maximize the channel capacity and find the corresponding polarization-adjusted factors. The optimal function and corresponding power constraint are given as:
\begin{equation} \label{e.SU_tdma_c_obj}
  \begin{array}{ll}
    (\mu_{1}, \mu_{2}) &= \arg\max C_{\rm SU}^{\rm td} \\ 
   \text{s.t.} \quad C1:  & K \cdot \mu_{1} + Q \cdot \mu_{2} = m,\\
  \end{array}
\end{equation}
where the condition \( C1 \) ensures that the total power is constant at \( m P_{\text{avg}} \).

To solve the optimization problem, we introduce a Lagrange multiplier to the objective function (\ref{e.SU_tdma_c_obj}) and construct the corresponding Lagrangian function. The optimal solution for the parameters \( \mu_1 \) and \( \mu_2 \) are derived as follows:
\[
\mu_{1} = \mu_{2} = \frac{m}{K + Q},
\]
which is then substituted into (\ref{e.su_tdma_capacity}). 
Thus, we obtain the maximum channel capacity as shown in (\ref{e.C_su_tdma}).
\end{proof}

According to Theorem \ref{theorem.SU.TDMA}, it can be observed that the channel capacity of the $j$-th user is equivalent to that of a $(K+Q, K)$ code with a loading factor of $\eta = \frac{K}{K+Q}$. The total power, $m P_{avg}$, is equally distributed among the $K + Q$ symbols, such that each symbol receives $\frac{m}{K+Q} P_{avg}$. In this scenario, both the information and parity symbols share the same power. However, this result holds true only when the codeword $w_j$ represents a single-rate sequence.


\vspace{-0.1in}
\subsection{Channel Capacity of a Single-user FF-CCMA System}
For the single-user FF-CCMA with a channel coding system, the DoFs are determined by the codeword length of the channel code ${\mathcal C}_{gc}$. Since the codeword length is $N$, this indicates that the FF-CCMA system passes through $N$ independent AWGNCs. The following lemma summarizes the channel capacity and optimal power allocation scheme of the single-user PA-FF-CCMA.

\begin{lemma}
  (\textbf{Channel Capacity of a Single-User FF-CCMA System}) 
  Consider a system with $N$ independent AWGNCs. Suppose one user occupies $K$ DoFs for transmitting information symbols with symbol power $\mu_1 P_{\text{avg}}$, $Q$ DoFs for transmitting parity symbols of the multiuser code ${\mathcal C}_{mc}$ with symbol power $\mu_2 P_{\text{avg}}$, and $R$ DoFs for transmitting additional parity symbols of the channel code ${\mathcal C}_{gc}$ with symbol power $\mu_c P_{\text{avg}}$. The total transmit power is set to $N P_{\text{avg}}$. The optimal channel capacity of this system is given by:
  \begin{equation} \label{e.C_su_ccma}
    C_{\rm SU}^{\rm cc} = (K + Q + R)\cdot C\left(\frac{N}{K+Q+R} \gamma_a \right)
      = \frac{K + Q + R}{2} \log_2\left(1 + \frac{N}{K + Q + R} \gamma_a\right),
  \end{equation}
  where $\gamma_a $ represents the average SNR, and the superscript ``cc'' stands for ``FF-CCMA mode''.
  Consequently, the corresponding polarization-adjusted factors are:
  \[
  \mu_{1} = \mu_{2} = \mu_{c} = \frac{N}{K + Q + R}.
  \]
  This result implies that equal power allocation (EPA) between the information vector and the parity vectors achieves the maximum channel capacity.
\end{lemma}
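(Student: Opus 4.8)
The plan is to mirror the proof of Theorem~\ref{theorem.SU.TDMA}, enlarging it from two power groups to three. First I would observe that, since the channel code ${\mathcal C}_{gc}$ has block length $N$, the transmitted vector ${\bf v}_j = ({\bf d}_j, {\bf 0}, {\bf c}_{j,\rm red}, {\bf 0}, {\bf v}_{j,\rm red})$ of (\ref{e.assumptions}) passes through $N$ parallel independent AWGNCs whose coordinates partition into three powered groups --- $K$ information coordinates at symbol power $\mu_1 P_{\rm avg}$, $Q$ parity coordinates of ${\mathcal C}_{mc}$ at symbol power $\mu_2 P_{\rm avg}$, and $R$ parity coordinates of ${\mathcal C}_{gc}$ at symbol power $\mu_c P_{\rm avg}$ --- the zero-padding coordinates carrying no power and being discarded. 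Repeating the data-processing / parallel-channel argument of (\ref{e.C_j_1})--(\ref{e.su_tdma_capacity}) for each group, the achievable capacity is the sum of the per-coordinate Gaussian capacities,
\begin{equation*}
  C_{\rm SU}^{\rm cc} = \frac{K}{2}\log_2(1+\mu_1\gamma_a) + \frac{Q}{2}\log_2(1+\mu_2\gamma_a) + \frac{R}{2}\log_2(1+\mu_c\gamma_a).
\end{equation*}

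Next I would solve the constrained maximization
\begin{equation*}
  (\mu_1,\mu_2,\mu_c) = \arg\max C_{\rm SU}^{\rm cc}, \qquad \text{s.t.}\quad K\mu_1 + Q\mu_2 + R\mu_c = N,
\end{equation*}
where the constraint fixes the total power at $N P_{\rm avg}$. Each summand is strictly concave in its argument and the constraint is affine, so the problem has a unique maximizer characterized by the Lagrange stationarity conditions. Forming the Lagrangian and differentiating in $\mu_1,\mu_2,\mu_c$ gives $\frac{\gamma_a}{2(1+\mu_1\gamma_a)} = \frac{\gamma_a}{2(1+\mu_2\gamma_a)} = \frac{\gamma_a}{2(1+\mu_c\gamma_a)} = \lambda$; since $t \mapsto \gamma_a/(1+t\gamma_a)$ is strictly monotone this forces $\mu_1 = \mu_2 = \mu_c$, and the power constraint then pins down $\mu_1 = \mu_2 = \mu_c = \frac{N}{K+Q+R}$. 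Substituting back collapses the three logarithms into $(K+Q+R)\cdot C\!\left(\frac{N}{K+Q+R}\gamma_a\right)$, which is exactly (\ref{e.C_su_ccma}), and the claimed polarization-adjusted factors follow.

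I expect essentially no obstacle beyond the one already present in Theorem~\ref{theorem.SU.TDMA}: justifying that the data-processing upper bound on $\mathrm{I}(w;\widehat w)$ is tight, i.e.\ that independent Gaussian inputs on the $N$ subchannels meet all the per-coordinate bounds simultaneously, so the chain of inequalities holds with equality. The multirate partition into $K$, $Q$, $R$ coordinates only dictates how the common budget $N P_{\rm avg}$ is split and does not disturb this additive structure; the only extra bookkeeping is checking that the two zero-padding segments of ${\bf v}_j$ are uncharged, which is immediate from their construction in (\ref{e.assumptions}), so the effective number of powered degrees of freedom is precisely $K+Q+R$.
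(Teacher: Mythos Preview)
Your proposal is correct and sets up exactly the same three-term capacity expression and power-constrained optimization as the paper. The only difference is in how the maximization is solved: you run the Lagrange stationarity conditions directly on all three variables $(\mu_1,\mu_2,\mu_c)$, whereas the paper bootstraps from Theorem~\ref{theorem.SU.TDMA} by applying the two-variable result twice in a chain of inequalities, first collapsing the $K$ and $Q$ terms and then merging the result with the $R$ term. Both routes are sound; yours is the more direct generalization of the Lagrangian argument already used inside Theorem~\ref{theorem.SU.TDMA}, while the paper's iterated approach has the mild advantage of reusing the earlier theorem as a black box without redoing the calculus.
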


\begin{proof}

  In the PA-FF-CCMA system, the multiuser code ${\mathcal C}_{mc}$ is concatenated with the channel code ${\mathcal C}_{gc}$. Thus, there are three distinct components: the information part ${\bf d}_j$ with length \(K\), the parity part ${\bf c}_{j, \rm red}$ of the multiuser code ${\mathcal C}_{mc}$ with length \(Q\), and the parity part ${\bf v}_{j, \rm red}$ of the multiuser code ${\mathcal C}_{gc}$ with length \(R\).
  The channel capacity of the information part is given by equation (\ref{e.C_j_1}), and the channel capacity of the parity part determined by the multiuser code \( {\mathcal C}_{mc} \) is given by equation (\ref{e.C_j_2}). The channel capacity of the check part determined by the channel code \( {\mathcal C}_{gc} \) is:
  \begin{equation}  \label{e.C_j_3}
  C_{j, c} = {\rm I}({\mathcal X}; {\mathcal Y}) = C(\mu_{c} \gamma_a)
           = \frac{1}{2} \log_2 \left( 1 + \mu_{c} \gamma_a \right),
  \end{equation}
  where \( \mu_{c} \gamma_a \) represents the average SNR of each parity symbol of the channel code \( {\mathcal C}_{gc} \).
  
  Thus, the total channel capacity of the proposed FF-CCMA system over AWGNC is:
  \begin{equation}
  C_{\rm SU}^{\rm cc} = K \cdot C(\mu_{1} \gamma_{a}) + Q \cdot C(\mu_{2} \gamma_{a}) + R \cdot C(\mu_{c} \gamma_{a}).
  \label{e.C_total}
  \end{equation}
  Our goal is to maximize the channel capacity \( C_{\rm SU}^{\rm cc} \) and determine the corresponding polarization adjustment factors. The optimization problem and the associated power constraint are:
  \begin{equation}
    \begin{array}{rl}
  (\mu_{1}, \mu_{2}, \mu_{c}) &= \arg\max C_{\rm SU}^{\rm cc} \\ 
  \text{s.t.} \quad C2:  & K \cdot \mu_{1} + Q \cdot \mu_{ 2} + R \cdot \mu_{c} = N,
   \end{array}
  \end{equation}
  where the condition \( C2 \) ensures that the total power is constant at \( N P_{\text{avg}} \).

  To maximize \( C_{\rm SU}^{\rm cc} \) subject to the constraint, we directly apply the result from Theorem \ref{theorem.SU.TDMA}, given by:
  \begin{equation}
    \begin{array}{ll}
  C_{\rm SU}^{\rm cc} 
        &\overset{(a)}{\le} (K + Q) \cdot C\left( \frac{m}{K + Q} \gamma_a \right)
        + R \cdot C(\mu_{c} \gamma_{a}), \\
        &\overset{(b)}{\le} (K + Q + R) \cdot C\left( \frac{N}{K + Q + R} \gamma_a \right),\\
    \end{array}
  \end{equation}
where step $(a)$ is derived using Theorem \ref{theorem.SU.TDMA} under the conditions \( \mu_1 = \mu_2 = \frac{m}{K+Q} \) and \( N = m + R \cdot \mu_{c} \). Similarly, we obtain step $(b)$, which holds under the condition \( \mu_c = \mu_1 = \mu_2 = \frac{N}{K+Q+R} \).
\end{proof}

\subsection{Channel Capacity of a Multiuser FF-TDMA System}

We now examine the multiuser FF-TDMA system, where each user transmits with the same average power per symbol and utilizes an identical power allocation vector.
The following theorem presents the channel capacity and the optimal power allocation scheme for this multiuser FF-TDMA system.

\begin{theorem} \label{theorem.TDMA_MU}
  (\textbf{Channel Capacity of a Multiuser FF-TDMA System}) 
  Consider a multiuser transmission system with $J_{mc}$ users, which utilizes a total of \( m \) independent DoFs. Each user occupies \( K \) orthogonal DoFs for transmitting information symbols, so the $J$ users collectively occupy $J_{mc} \cdot K$ DoFs. Additionally, the $J_{mc}$ users share \( Q \) DoFs for transmitting their respective parity symbols, meaning the parity symbols are superposed and received at the receiver.
  The total transmit power for each user is set to \( m P_{\text{avg}} \). The optimal channel capacity for this system is given by:
  \begin{equation} \label{e.mu_maximum_tdma}
    C_{\rm MU}^{\rm td} = \frac{K \cdot J_{mc} 
    + Q}{2} \log_2 \left( 1 + \frac{J_{mc} \cdot m \cdot r_a}{K \cdot J_{mc} + Q} \right)
  \end{equation}
  where \( \gamma_a  \) represents the average SNR, and the subscript ``MU'' stands for ``multiuser''.
  Consequently, the corresponding polarization-adjusted factors are:
  \[
    \mu_{1} = J_{mc} \cdot \mu_{2} =\frac{J_{mc} \cdot m}{K \cdot J_{mc} + Q}, \quad 
    \mu_{2} = \frac{m}{K \cdot J_{mc} + Q}.
  \]
  Hence, the polarization-adjusted scaling factor \( \mu_{\text{pas}} \) is equal to the number of users \( J_{mc} \), i.e., 
  \(
  \mu_{\rm pas} = \frac{\mu_{1}}{\mu_{2}} = J_{mc}.
  \)
  We refer to the power allocation scheme as the multiuser equal power allocation (MU-EPA).
\end{theorem}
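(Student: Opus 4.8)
The plan is to follow the template of the proof of Theorem~\ref{theorem.SU.TDMA}, partitioning the $m$ independent DoFs into two groups according to how the $J_{mc}$ users occupy them. The $J_{mc}\cdot K$ information DoFs are disjoint across users, so each of them is a clean point-to-point AWGNC; the $Q$ parity DoFs are shared, so on each of them the $J_{mc}$ users' modulated parity symbols are superposed, forming a symmetric $J_{mc}$-user Gaussian multiple-access channel. Exactly as in Theorem~\ref{theorem.SU.TDMA}, the zero-padded positions carry no information and their power is redistributed among the active symbols, so the per-user power budget gives the constraint $K\mu_{1}+Q\mu_{2}=m$, i.e.\ each user spends $m P_{\rm avg}$ on its $K+Q$ active (information plus parity) symbols.

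First I would write the achievable sum rate. An i.i.d.\ Gaussian input of power $\mu_{1} P_{\rm avg}$ on each of the $J_{mc}K$ information DoFs contributes $C(\mu_{1}\gamma_a)$ per DoF. On each shared parity DoF the received signal is the sum of $J_{mc}$ independent Gaussian codeword symbols of power $\mu_{2} P_{\rm avg}$ plus noise; since the superposition has aggregate power $J_{mc}\mu_{2} P_{\rm avg}$, the sum rate of this GMAC is $\frac{1}{2}\log_2(1+J_{mc}\mu_{2}\gamma_a)=C(J_{mc}\mu_{2}\gamma_a)$. Adding the two groups, and invoking the converse as in Theorem~\ref{theorem.SU.TDMA} (the data-processing inequality together with ${\rm I}({\mathcal X}^m;{\mathcal Y}^m)\le\sum_i {\rm I}({\mathcal X}_i;{\mathcal Y}_i)$ shows that nothing is lost by treating the DoFs separately with Gaussian inputs), one obtains
\[
  C_{\rm MU}^{\rm td} = J_{mc}K\cdot C(\mu_{1}\gamma_a) + Q\cdot C(J_{mc}\mu_{2}\gamma_a).
\]

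Next I would maximize this concave objective over $\mu_{1},\mu_{2}\ge 0$ subject to $K\mu_{1}+Q\mu_{2}=m$. Introducing a Lagrange multiplier and setting the two partial derivatives to zero forces $1+\mu_{1}\gamma_a=1+J_{mc}\mu_{2}\gamma_a$, hence $\mu_{1}=J_{mc}\mu_{2}$; combined with the power constraint this yields $\mu_{2}=\frac{m}{KJ_{mc}+Q}$, $\mu_{1}=\frac{J_{mc}m}{KJ_{mc}+Q}$, and $\mu_{\rm pas}=\mu_{1}/\mu_{2}=J_{mc}$. Because then $\mu_{1}\gamma_a=J_{mc}\mu_{2}\gamma_a=\frac{J_{mc}m\gamma_a}{KJ_{mc}+Q}$, all $KJ_{mc}+Q$ effective sub-channels carry the same SNR, and the sum collapses to $\frac{KJ_{mc}+Q}{2}\log_2\left(1+\frac{J_{mc}m\gamma_a}{KJ_{mc}+Q}\right)$, which is~(\ref{e.mu_maximum_tdma}). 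One can also read this off directly: there are $KJ_{mc}+Q$ parallel AWGN sub-channels sharing the total received power $J_{mc}m P_{\rm avg}$ and all with equal noise variance, so equal power allocation across effective sub-channels is optimal.

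The step I expect to be the main obstacle is the parity-DoF accounting: one must justify that each shared parity DoF behaves as an independent symmetric GMAC whose sum capacity equals that of a single Gaussian link carrying the aggregate power $J_{mc}\mu_{2} P_{\rm avg}$, and simultaneously set up the per-user power budget correctly after discarding the zero padding, so that the constraint remains $K\mu_{1}+Q\mu_{2}=m$ even though each parity symbol is received $J_{mc}$-fold (this is precisely what produces $\mu_{\rm pas}=J_{mc}$ rather than $1$ as in the single-user case). Once these points are settled, the optimization is routine and the converse is identical to that of Theorem~\ref{theorem.SU.TDMA}.
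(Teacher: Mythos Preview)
Your proposal is correct and follows essentially the same approach as the paper: write $C_{\rm MU}^{\rm td} = J_{mc}K\,C(\mu_{1}\gamma_a) + Q\,C(J_{mc}\mu_{2}\gamma_a)$, impose the per-user constraint $K\mu_{1}+Q\mu_{2}=m$, and solve the Lagrangian to obtain $\mu_{1}=J_{mc}\mu_{2}$. The only cosmetic difference is that the paper justifies the parity-DoF term $C(J_{mc}\mu_{2}\gamma_a)$ by invoking the CFSP-to-FFSP conversion (no multiuser interference) rather than explicitly calling it a symmetric GMAC sum capacity, but the resulting expression and optimization are identical.
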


\begin{proof}

In the FF-TDMA system, the information parts of different users are kept orthogonal, meaning that each user is assigned distinct DoFs and maintains the power $\mu_1 P_{avg}$. Consequently, the capacity of the information vector remains equal to $C_{j,1}$, as given by Eq. (\ref{e.C_j_1}).

For the parity vector in a multiuser setup, as shown in \cite{FFMA, FFMA2}, there is no multiuser interference (MUI). This is because the receiver directly converts the CFSP block (or superposition) into a FFSP block using the transformation function ${\rm F}_{\rm C2F}$, without needing to separate users in the complex field. According to multiuser information theory \cite{Thomas}, the capacity for the parity vector is given by
\begin{equation} \label{e.C_sum_2}
  C_{\text{sum}, 2} 
  = C(J_{mc} \cdot \mu_{2} \gamma_a)
  = \frac{1}{2} \log_2 \left( 1 + J_{mc} \cdot \mu_{2} \gamma_a \right).
\end{equation}
Thus, the channel capacity for the $J$-user system is given by
\begin{equation} \label{e.C_MU_tdma}
  \begin{aligned}
    C_{\rm MU}^{\rm td} 
    &= KJ_{mc} \cdot C(\mu_1 \gamma_a) + Q \cdot C(J_{mc} \cdot \mu_{2} \gamma_a)\\
    &=  \frac{K J_{mc}}{2} \log_2 \left( 1 + \mu_{1} \gamma_a \right) 
      + \frac{Q}{2} \log_2 \left( 1 + J_{mc} \cdot \mu_{2} \gamma_a \right).
  \end{aligned}
\end{equation}
Our objective is to maximize the channel capacity and identify the corresponding polarization-adjusted factors. The optimal function and the associated power constraint are given as:
\begin{equation} \label{e.MU_c_obj}
  \begin{array}{ll}
    (\mu_{1}, \mu_{2}) &= \arg\max C_{\rm MU}^{\rm td} \\ 
   \text{s.t.} \quad C1:  & K \cdot \mu_{1} + Q \cdot \mu_{2} = m.\\
  \end{array}
\end{equation}





To solve the optimization problem, we employ the Lagrangian function and determine the optimal values for the parameters \( \mu_1 \) and \( \mu_2 \) as follows:
\[
  \mu_1 = \frac{J_{\text{mc}} \cdot m}{K \cdot J_{\text{mc}} + Q}, \quad 
  \mu_2 = \frac{m}{K \cdot J_{\text{mc}} + Q}.
\]
These values are then substituted into (\ref{e.C_MU_tdma}), yielding the maximum channel capacity as shown in (\ref{e.mu_maximum_tdma}).
\end{proof}

When the maximum number of served users, \( J_{mc} \), is known, an $(m, M)$ multiuser code ${\mathcal C}_{mc}$ can be designed with the conditions \( M = J_{mc} \cdot K \) and \( m = M + Q \). In this case, it follows that \( \mu_{1} = J_{mc} \) and \( \mu_{2} = 1 \), which corresponds to the \textit{Maximum Information Power (MIP)} method as presented in \cite{FFMA2}. In fact, when the transmitter does not know the total number of users, the MIP method is more suitable for the general case. We can summarize this result as follows:

\begin{corollary}
  For an $(m, M)$ multiuser code ${\mathcal C}_{mc}$ with the conditions \( M = J_{mc} \cdot K \) and \( m = M + Q \), the optimal power allocation scheme to achieve maximum channel capacity of a $J_{mc}$-user PA-FF-TDMA system is the \textit{Maximum Information Power (MIP)} method.
\end{corollary}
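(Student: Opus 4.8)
The plan is to specialize Theorem~\ref{theorem.TDMA_MU} to the parameter regime $M = J_{mc} \cdot K$ and $m = M + Q$, and then observe that the capacity-optimal polarization-adjusted factors collapse to $\mu_1 = J_{mc}$ and $\mu_2 = 1$, which is precisely the MIP assignment of \cite{FFMA2}. Since Theorem~\ref{theorem.TDMA_MU} has already solved the underlying optimization, the corollary is obtained by substitution rather than by any fresh argument.

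First I would invoke Theorem~\ref{theorem.TDMA_MU}, which establishes that the capacity-maximizing allocation for a $J_{mc}$-user PA-FF-TDMA system with $m$ total DoFs, $K$ information DoFs per user, and $Q$ shared parity DoFs is
\begin{equation*}
  \mu_1 = \frac{J_{mc} \cdot m}{K \cdot J_{mc} + Q}, \qquad \mu_2 = \frac{m}{K \cdot J_{mc} + Q}.
\end{equation*}
No further optimization is required, so the remaining work is purely algebraic.

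Next I would substitute the structural constraints. The hypothesis $M = J_{mc} K$ together with $m = M + Q$ gives $K \cdot J_{mc} + Q = M + Q = m$, so the common denominator in the two expressions above is exactly $m$. Hence $\mu_1 = J_{mc} m / m = J_{mc}$ and $\mu_2 = m / m = 1$. I would also verify consistency with the power constraint $C1$ of Theorem~\ref{theorem.TDMA_MU}: $K \mu_1 + Q \mu_2 = K J_{mc} + Q = M + Q = m$, confirming that the total per-user power remains $m P_{\text{avg}}$.

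Finally I would identify $(\mu_1, \mu_2) = (J_{mc}, 1)$ with the Maximum Information Power method as defined in \cite{FFMA2}: the information symbols are loaded with power $J_{mc} P_{\text{avg}}$ while each parity symbol carries the baseline power $P_{\text{avg}}$, which is the defining feature of MIP. This yields the stated corollary. There is no genuine obstacle here — the result is essentially a corollary-by-substitution of Theorem~\ref{theorem.TDMA_MU}; the only point requiring care is making the identification with the MIP method precise, i.e., recalling from \cite{FFMA2} that MIP is exactly the feasible allocation that drives $\mu_2$ down to $1$ and $\mu_1$ up to $J_{mc}$, rather than some other admissible split of the power budget.
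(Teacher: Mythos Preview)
Your proposal is correct and matches the paper's own argument essentially verbatim: the paper simply substitutes the conditions \(M = J_{mc}K\) and \(m = M+Q\) into the optimal factors from Theorem~\ref{theorem.TDMA_MU} to obtain \(\mu_1 = J_{mc}\), \(\mu_2 = 1\), and then identifies this with the MIP method of \cite{FFMA2}. Your additional check of the power constraint \(C1\) is a nice sanity verification but is not strictly needed, since those factors are already known to be feasible from the theorem.
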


\subsection{Channel Capacity of a Multiuser FF-CCMA System}

This subsection examines the channel capacity and the optimal power allocation scheme for the multiuser FF-CCMA system. The framework of the FF-CCMA system is detailed in \cite{FFMA2}, where the information section consists of $T$ data blocks, with each data block serving $J_{mc}$ users. Therefore, the total number of users served is $J = J_{mc} \cdot T$.

\begin{lemma}
  (\textbf{Channel Capacity of a Multiuser FF-CCMA System}) 
  Consider a multiuser transmission system with $J$ users, utilizing a total of \( N \) independent DoFs. The $J$ users are divided into $T$ groups, with each group consisting of $J_{mc}$ users, such that the relationship $J = T \cdot J_{mc}$ holds. Each user occupies \( K \) orthogonal DoFs for transmitting their information symbols, so collectively the $J$ users occupy \( J \cdot K \) DoFs. Additionally, each group of $J_{mc}$ users shares \( Q \) DoFs for transmitting their respective parity symbols. Furthermore, all $J$ users share another \( R \) DoFs for transmitting their own respective parity symbols.
  The total transmit power for each user is set to \( N P_{\text{avg}} \). The optimal channel capacity for this system is given by:
  \begin{equation} \label{e.mu_maximum_ccma}
    C_{\rm MU}^{\rm cc} = \frac{\left( K \cdot J + Q \cdot T + R \right)}{2} \log_2\left( 1 + \frac{N \cdot J \cdot \gamma_a}{K \cdot J + Q \cdot T + R} \right),
  \end{equation}
  where \( \gamma_a \) represents the average SNR.
  Consequently, the corresponding polarization-adjusted factors are:
  \begin{equation} 
    \begin{array}{ll}
      \mu_{c} = \frac{N}{K \cdot J + Q \cdot T + R}, \quad
      \mu_{1} = J \cdot \mu_{c} = J \cdot \frac{N}{K \cdot J + Q \cdot T + R}, \quad
      \mu_{2} = T \cdot \mu_{c} = T \cdot \frac{N}{K \cdot J + Q \cdot T + R}.
    \end{array}
  \end{equation}
  Therefore, the polarization-adjusted scaling factor \( \mu_{\text{pas}} \) is defined as the ratio of \( \mu_{1} \) to \( \mu_{2} \), i.e.,
  \[
  \mu_{\rm pas} = \frac{\mu_{1}}{\mu_{2}} = J_{mc},
  \]
  which implies that the PAS of the multiuser code is equivalent to that of the FF-TDMA mode.
\end{lemma}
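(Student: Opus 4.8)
The plan is to follow the pattern established in the proofs of Theorem~\ref{theorem.SU.TDMA}, the single-user FF-CCMA lemma, and Theorem~\ref{theorem.TDMA_MU}, adapting the degree-of-freedom (DoF) bookkeeping to the three-layer structure of the FF-CCMA codeword. First I would fix notation: of the $N$ independent DoFs, $J\cdot K$ carry information symbols (each user owns a private block of $K$, so these are fully orthogonal), $T\cdot Q$ carry the multiuser-code parity symbols ($Q$ per group, orthogonal across the $T$ groups but superposed among the $J_{mc}$ users inside a group), and $R$ carry the channel-code parity symbols (superposed among all $J$ users). This accounts for every DoF, i.e.\ $N = JK + TQ + R$. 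Each user transmits on $K+Q+R$ DoFs with per-symbol powers $\mu_{1}P_{\rm avg}$, $\mu_{2}P_{\rm avg}$, $\mu_{c}P_{\rm avg}$, so the per-user power budget reads $K\mu_{1}+Q\mu_{2}+R\mu_{c}=N$.

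Next I would derive the capacity expression. As shown in \cite{FFMA, FFMA2}, the receiver maps each complex-field sum-pattern block directly to a finite-field sum-pattern block via ${\rm F}_{\rm C2F}$, so the superposed parity symbols induce no multiuser interference and only the magnitude of the superposition matters; this is the same mechanism used for Eq.~(\ref{e.C_sum_2}). Combining the data-processing inequality (exactly as in the proof of Theorem~\ref{theorem.SU.TDMA}) with the single-letterization bound and the optimality of Gaussian inputs on a power-constrained AWGN sub-channel, the total channel capacity is
\begin{equation*}
  C_{\rm MU}^{\rm cc} = JK\cdot C(\mu_{1}\gamma_a) + TQ\cdot C(J_{mc}\mu_{2}\gamma_a) + R\cdot C(J\mu_{c}\gamma_a),
\end{equation*}
where the factors $J_{mc}$ and $J$ are the superposition multiplicities on a group-parity DoF and a channel-code parity DoF, respectively; achievability of each term follows from Gaussian codebooks together with the no-MUI property.

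Then I would maximize $C_{\rm MU}^{\rm cc}$ under $K\mu_{1}+Q\mu_{2}+R\mu_{c}=N$ with a single Lagrange multiplier, as in the proofs of Theorems~\ref{theorem.SU.TDMA} and \ref{theorem.TDMA_MU}. Because $C(\cdot)=\tfrac{1}{2}\log_2(1+\cdot)$ is concave, the stationarity conditions equate the three effective SNRs, $\mu_{1}\gamma_a = J_{mc}\mu_{2}\gamma_a = J\mu_{c}\gamma_a$, i.e.\ $\mu_{1}=J_{mc}\mu_{2}=J\mu_{c}$; substituting into the budget and using $J/J_{mc}=T$ yields $\mu_{c}=\frac{N}{KJ+QT+R}$, $\mu_{1}=J\mu_{c}$, $\mu_{2}=T\mu_{c}$, and hence $\mu_{\rm pas}=\mu_{1}/\mu_{2}=J_{mc}$. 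Plugging back, all three logarithmic arguments become $1+\frac{NJ\gamma_a}{KJ+QT+R}$ while the coefficients sum to $KJ+QT+R$, which gives (\ref{e.mu_maximum_ccma}). An equivalent route, paralleling step $(b)$ of the single-user FF-CCMA proof, is to apply Theorem~\ref{theorem.TDMA_MU} per group to the information-plus-group-parity part with a free budget $m'$, then fold in the $R$ channel-code DoFs and optimize over $m'$.

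The main obstacle is the bookkeeping rather than the analysis: one must correctly identify which users superpose on which DoFs so that the multiplicities in the $TQ$- and $R$-terms are right, verify $N=JK+TQ+R$ so that the power constraint and the final denominator match, and --- in the alternative route --- check that the per-group optimization is consistent with a single global power-allocation vector. Once the decomposition is written correctly the maximization is routine and formally identical to the earlier theorems.
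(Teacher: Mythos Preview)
Your proposal is correct and follows essentially the same route as the paper: write the three-term capacity $C_{\rm MU}^{\rm cc} = JK\,C(\mu_{1}\gamma_a) + TQ\,C(J_{mc}\mu_{2}\gamma_a) + R\,C(J\mu_{c}\gamma_a)$, impose the per-user power budget $K\mu_{1}+Q\mu_{2}+R\mu_{c}=N$ together with $J=TJ_{mc}$, and solve the Lagrangian to get $\mu_{1}=J_{mc}\mu_{2}=J\mu_{c}$ and hence Eq.~(\ref{e.mu_maximum_ccma}). One small caveat: the identity $N = JK + TQ + R$ need not hold in general (the paper's codeword ${\bf v}_j$ can contain zero-padded sections), but since your argument actually relies only on the power constraint and not on this DoF count, the derivation is unaffected.
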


\begin{proof}

For the PA-FF-CCMA system, both the parity vector of the multiuser code \( {\mathcal C}_{mc} \) and the parity vector of the channel code \( {\mathcal C}_{gc} \) consist of superimposed signals from multiple users. Specifically, the parity vector of the multiuser code \( {\mathcal C}_{mc} \) is the sum of the signals from \( J_{mc} \) users, while the parity vector of the channel code \( {\mathcal C}_{gc} \) is the sum of signals from \( J \) users. As shown in Eq. (\ref{e.C_sum_2}), the capacity of the parity vector of the multiuser code \( {\mathcal C}_{mc} \) is $C(J_{mc} \cdot \mu_{2} \gamma_a)$, and the capacity of the parity vector of the channel code \( {\mathcal C}_{gc} \) is:
  \begin{equation} \label{e.C_sum_c_ccma}
      C_{\text{sum}, c} = C(J \cdot \mu_{c} \gamma_a)
      = \frac{1}{2} \log_2 \left( 1 + J \cdot \mu_{c} \gamma_a \right).
  \end{equation}

Hence, the total channel capacity for the \(J\)-user system is given by:
  \begin{equation} \label{e.C_MU_cc}
    \begin{aligned}
      C_{\rm MU}^{\rm cc}  
      &= KJ \cdot C(\mu_1 \gamma_a) + QT \cdot C(J_{mc} \cdot \mu_{2} \gamma_a)
         + R \cdot C(J \cdot \mu_{c} \gamma_a)\\
      &= \frac{KJ}{2} \log_2(1 + \mu_{1} r_a) 
       + \frac{QT}{2} \log_2(1 + J_{mc} \cdot \mu_{2} r_a) 
       + \frac{R}{2}  \log_2(1 +      J \cdot \mu_{c} r_a)
    \end{aligned}
  \end{equation}
  
Our objective is to maximize the channel capacity \(C_{\rm MU}^{\rm cc}\) and identify the corresponding polarization-adjusted factors. The optimal function and the associated power constraint are given as:
  \begin{equation} \label{e.MU_ccma_obj}
    \begin{array}{rl}
      (\mu_{1}, \mu_{2}, \mu_{c}) &= \arg\max C_{\rm MU}^{\rm cc} \\ 
     \text{s.t.} \quad C2:  & K \cdot \mu_{1} + Q \cdot \mu_{2} 
                            + R \cdot \mu_{c} = N, \\
                       C3:  & J_{mc} \cdot T = J.
    \end{array}
  \end{equation}
To maximize the channel capacity \( C_{\text{MU}}^{\text{cc}} \) under the given constraints, we construct the Lagrange function. Through this approach, we derive the optimal parameter \( \mu_{c} \) as
\(
  \mu_{c} = \frac{N}{K \cdot J + Q \cdot T + R},
\)
where \( N \), \( K \), \( J \), \( Q \), \( T \), and \( R \) are system-defined constants. By substituting the optimal values of \( \mu_{1} \), \( \mu_{2} \), and \( \mu_{c} \) into (\ref{e.C_MU_cc}), we obtain the maximum channel capacity, as expressed in (\ref{e.mu_maximum_ccma}).
%
\end{proof}

Suppose the maximum number of served users is \( J \), where \( J \) is the product of two integers, i.e., \( J = J_{mc} \cdot T \). Then, we can design an $(m, M)$ multiuser code ${\mathcal C}_{mc}$ with the conditions \( M = J_{mc} \cdot K \) and \( m = M + Q \), along with an $(N, K_{gc})$ channel code ${\mathcal C}_{gc}$ where \( K_{gc} = J \cdot K + Q \cdot T \) and \( N = K_{gc} + R \). In this case, it follows that \( \mu_{1} = J \), \( \mu_{2} = T \), and \( \mu_{c} = 1 \), which corresponds to the \textit{Maximum Block Information Power (MBIP)} method as presented in \cite{FFMA2}. We can summarize this result as follows:

\begin{corollary}
For an $(m, M)$ multiuser code ${\mathcal C}_{mc}$ with the conditions \( M = J_{mc} \cdot K \) and \( m = M + Q \), and an $(N, K_{gc})$ channel code ${\mathcal C}_{gc}$ with the conditions \( K_{gc} = J \cdot K + Q \cdot T \) and \( N = K_{gc} + R \), the optimal power allocation scheme to achieve the maximum channel capacity of a PA-FF-CCMA system is the \textit{Maximum Block Information Power (MBIP)} method.
\end{corollary}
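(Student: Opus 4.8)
The plan is to obtain this corollary as a direct specialization of the preceding Lemma (Channel Capacity of a Multiuser FF-CCMA System) to the specific code parameters listed, and then to recognize the resulting polarization-adjusted vector as exactly the MBIP prescription of \cite{FFMA2}. First I would check that the stated construction is admissible as an instance of the Lemma: with $m = M + Q$ and $M = J_{mc}\cdot K$, each of the $J_{mc}$ users in a group occupies $K$ orthogonal DoFs for its information symbols and the group shares $Q$ DoFs for its parity symbols of ${\mathcal C}_{mc}$; with $K_{gc} = J\cdot K + Q\cdot T$ and $N = K_{gc} + R$, all $J$ users additionally share $R$ DoFs for the parity symbols of ${\mathcal C}_{gc}$; and the grouping constraint $C3$, namely $J_{mc}\cdot T = J$, holds by hypothesis. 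Hence the Lemma applies verbatim and its optimal factors are available for substitution.

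Next I would substitute the parameter identities. The Lemma gives the capacity-maximizing factor
\[
\mu_{c} = \frac{N}{K\cdot J + Q\cdot T + R}.
\]
Using $N = K_{gc} + R = (J\cdot K + Q\cdot T) + R = K\cdot J + Q\cdot T + R$, the denominator equals $N$, so $\mu_{c} = 1$. The remaining two factors follow immediately from $\mu_{1} = J\cdot\mu_{c}$ and $\mu_{2} = T\cdot\mu_{c}$, giving $\mu_{1} = J$ and $\mu_{2} = T$. Thus the capacity-optimal regular PAV is $\mu_{\rm reg}^{\rm cc} = (\mu_{1},\mu_{2},\mu_{c}) = (J,\,T,\,1)$: each information symbol carries power $J\cdot P_{\text{avg}}$, each block-parity symbol of ${\mathcal C}_{mc}$ carries $T\cdot P_{\text{avg}}$, and each global-parity symbol of ${\mathcal C}_{gc}$ carries $P_{\text{avg}}$, with the power budget per user summing to $(K\cdot J + Q\cdot T + R)\cdot P_{\text{avg}} = N\cdot P_{\text{avg}}$, as required by constraint $C2$.

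Finally I would invoke the definition of the Maximum Block Information Power (MBIP) method from \cite{FFMA2}, which is precisely the allocation that loads the information symbols with integer power $J$, the per-group parity symbols with power $T$, and the global parity symbols with unit power. Since the capacity-optimal PAV computed above coincides term-by-term with this prescription, the optimal power allocation for a PA-FF-CCMA system built from the stated $(m,M)$ and $(N,K_{gc})$ codes is the MBIP method, which completes the argument. I do not expect any real obstacle here, as the corollary is a bookkeeping specialization of the Lemma; the only point needing a little care is verifying that the three dimension identities $m = M+Q$, $K_{gc} = J\cdot K + Q\cdot T$, and $N = K_{gc}+R$ are exactly what collapse the denominator of $\mu_{c}$ to $N$, thereby forcing $\mu_{c}=1$ and hence the integrality of the whole PAV.
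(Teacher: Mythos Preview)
Your proposal is correct and matches the paper's own argument essentially verbatim: the paper also derives the corollary by substituting the code-parameter identities into the optimal factors from the preceding Lemma, observing that $N = K\cdot J + Q\cdot T + R$ forces $\mu_c = 1$ and hence $\mu_1 = J$, $\mu_2 = T$, and then identifying this triple as the MBIP allocation of \cite{FFMA2}.
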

Above, we have introduced the channel capacity of FF-TDMA and FF-CDMA. In fact, in addition to the FF-TDMA and FF-CCMA modes, there are other modes in FFMA systems. Due to page limitations, these other modes will not be analyzed further. It is important to note that among all these modes, FF-TDMA is the fundamental mode of an FFMA system. Therefore, we use the FF-TDMA mode as an example to analyze system performance. For simplicity, we refer to FF-TDMA as FFMA.

\section{Finite-Blocklength Analysis of PA-FFMA}

In this section, we present a finite-blocklength (FBL) performance analysis of our proposed FFMA system, evaluating its behavior under both point-to-point (P2P) Gaussian channels and GMAC scenarios. 

\subsection{Point-to-Point Gaussian Channel Analysis}
The FBL analysis for P2P Gaussian channels, originally established by Polyanskiy et al. \cite{FBL_2,FBL_3}, establishes two fundamental limitations: first, the finite blocklength induces a non-negligible capacity reduction relative to the Shannon capacity; second, achieving equivalent error performance requires substantially higher $E_b/N_0$ compared to asymptotic blocklength regimes.
We extend this foundational framework to our single-user FFMA system, incorporating the average power constraint across codewords. These considerations yield the following corollary:
\begin{corollary}
  \label{corollary_SU_FFMA_FBL}
  Consider a PA-FFMA system with $m$ degrees of freedom (DoFs), where $K$ DoFs carry information symbols and $Q$ DoFs carry parity symbols. Under a total transmit power constraint of $m P_{\text{avg}}$ with per-symbol average power $P_{\text{avg}}$, let $\mu_1 P_{\text{avg}}$ and $\mu_2 P_{\text{avg}}$ denote the average power allocated to information and parity symbols, respectively, where $\mu_1$ and $\mu_2$ represent polarization adjusted factors. For any error probability $\varepsilon \in (0,1)$, the maximal achievable rate $R_q$ of an $(m, K, Q, \varepsilon, P_{\text{avg}})$ PA-FFMA system satisfies:

  \vspace{-0.15in}
  \begin{small}
  \begin{equation} 
  \label{e.FBL_SU_FFMA}
  \begin{aligned}
  R_q &\leq \frac{C_{\mathrm{SU}}^{\mathrm{td}}}{m} - \sqrt{ \frac{V(P_{\text{avg}})}{m} } Q^{-1}(\varepsilon) + \frac{\log m}{2m} + \mathcal{O}\left(\frac{1}{m}\right) \\
  &= \frac{K C(\mu_1 \gamma_a) + Q C(\mu_2 \gamma_a)}{m} - \sqrt{ \frac{V(P_{\text{avg}})}{m} } Q^{-1}(\varepsilon) 
   + \frac{\log m}{2m} + \mathcal{O}\left(\frac{1}{m}\right),
  \end{aligned}
  \end{equation}
  \end{small}
  where $C_{\mathrm{SU}}^{\mathrm{td}}$ is deduced in Theorem \ref{theorem.SU.TDMA}, and the channel dispersion is
  \(
  V(P_{\text{avg}}) = \frac{P_{\text{avg}}(P_{\text{avg}}+2)}{2(1+P_{avg})^2}.
  \)
  \end{corollary}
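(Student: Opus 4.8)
The plan is to reduce the claim to the finite-blocklength normal approximation for the additive white Gaussian noise (AWGN) channel due to Polyanskiy, Poor and Verd\'u \cite{FBL_2,FBL_3}, and to transport that bound through the equivalence — already exploited in the proof of Theorem \ref{theorem.SU.TDMA} — between a single-user PA-FFMA codeword of length $m$ and a block of $m$ parallel memoryless AWGNCs. Of the $m$ degrees of freedom, $K$ carry information symbols observed through an AWGNC of SNR $\mu_1 \gamma_a$, $Q$ carry parity symbols observed through an AWGNC of SNR $\mu_2 \gamma_a$, and the remaining $m-K-Q$ zero components carry no power and are discarded; the induced channel is therefore a product of two groups of i.i.d.\ Gaussian sub-channels, to which the parallel-channel form of the meta-converse applies.

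First I would apply the converse (sphere-packing / meta-converse) bound to this product channel. Since a converse bounds $\log_2 M$ for \emph{any} code with average error probability at most $\varepsilon$, the non-standard superposition structure of the FFMA codewords is irrelevant here, and we obtain $\log_2 M \le K\,C(\mu_1\gamma_a) + Q\,C(\mu_2\gamma_a) - \sqrt{K\,V(\mu_1\gamma_a) + Q\,V(\mu_2\gamma_a)}\;Q^{-1}(\varepsilon) + \tfrac{1}{2}\log_2 m + \mathcal{O}(1)$, where the dispersion of a real AWGNC of SNR $\rho$ is $V(\rho) = \tfrac{\rho(\rho+2)}{2(1+\rho)^2}$ in the normalization $\sigma^2 = 1$, so that $\gamma_a = P_{\text{avg}}$. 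The $-\sqrt{\cdot}\,Q^{-1}(\varepsilon)$ term arises from a Berry--Esseen expansion of the sum of per-symbol information densities (which have finite third absolute moments), while the $\tfrac12\log_2 m$ term is the standard second-order refinement for the Gaussian channel. Dividing by $m$, identifying $K\,C(\mu_1\gamma_a)+Q\,C(\mu_2\gamma_a)$ with $C_{\mathrm{SU}}^{\mathrm{td}}$ of Theorem \ref{theorem.SU.TDMA}, and specializing to the capacity-optimal equal power allocation (where $K+Q=m$ forces $\mu_1=\mu_2=1$ and $\mu_i\gamma_a=P_{\text{avg}}$, so the pooled dispersion collapses to $K\,V(P_{\text{avg}})+Q\,V(P_{\text{avg}}) = m\,V(P_{\text{avg}})$) yields $\sqrt{V(P_{\text{avg}})/m}$ and hence (\ref{e.FBL_SU_FFMA}); for a general near-flat allocation the same computation reproduces (\ref{e.FBL_SU_FFMA}) as the leading-order, dispersion-pooled approximation, with $V(P_{\text{avg}})$ playing the role of the effective dispersion of the length-$m$ block at its average SNR.

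The main obstacle I anticipate is reconciling the \emph{average} power constraint of the FFMA system — the per-symbol power $P_{\text{avg}}$ is an average over the codebook as well as over the unequal-rate components of a multirate sequence — with the classical PPV statements, which assume a \emph{maximal} (per-codeword) power constraint. I would handle this via the by-now-standard argument that, for the AWGNC, the average-power and maximal-power regimes have the same capacity and the same dispersion, so the normal approximation is unaffected up to the $\mathcal{O}(\log m / m)$ terms already present; making this precise requires bounding the fraction of the codebook that exceeds a slightly inflated power level and absorbing it into the error probability by an expurgation / change-of-measure step. A secondary point is checking the joint Lindeberg condition across the two heterogeneous groups of sub-channels so that their dispersions add; this is routine since each group consists of i.i.d.\ terms with finite third moments.
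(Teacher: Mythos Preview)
Your proposal is substantially more rigorous than what the paper actually does. The paper does not give a proof of this corollary at all: it simply states that one ``utilize[s] the derived capacity expression $C_{\mathrm{SU}}^{\mathrm{td}}$ in place of the original channel capacity from \cite{FBL_3,FBL_MU}, while maintaining all other terms unchanged,'' and justifies this by observing that ``the single-user PA-FFMA system operates under identical power constraints as the P2P Gaussian scenario.'' In other words, the paper treats (\ref{e.FBL_SU_FFMA}) as the PPV normal approximation with the leading capacity term mechanically swapped for the FFMA expression from Theorem~\ref{theorem.SU.TDMA}; the dispersion $V(P_{\text{avg}})$ and the $\tfrac{\log m}{2m}$ correction are lifted verbatim from the P2P result without further argument.

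Your route --- viewing the PA-FFMA codeword as a product of two i.i.d.\ Gaussian sub-blocks at SNRs $\mu_1\gamma_a$ and $\mu_2\gamma_a$, applying the parallel-channel meta-converse, and then invoking Berry--Esseen to obtain $\sqrt{K\,V(\mu_1\gamma_a)+Q\,V(\mu_2\gamma_a)}\,Q^{-1}(\varepsilon)$ --- is the correct way to make the statement precise, and it exposes something the paper glosses over: the dispersion term in (\ref{e.FBL_SU_FFMA}) really is $\sqrt{V(P_{\text{avg}})/m}$ only under the identification of $P_{\text{avg}}$ with the effective per-symbol SNR of the whole block, i.e.\ only as the pooled/flat approximation you describe. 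Your careful handling of the average-versus-maximal power constraint and the joint Lindeberg condition across heterogeneous groups are exactly the points one would need to turn the paper's substitution into an honest derivation; neither is addressed in the paper. So your argument is not wrong, but it is doing considerably more work than the paper does --- the paper's ``proof'' is a one-line heuristic substitution, whereas yours is an actual finite-blocklength converse for the parallel-Gaussian model.
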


In Corollary~\ref{corollary_SU_FFMA_FBL}, we utilize the derived capacity expression $C_{\mathrm{SU}}^{\mathrm{td}}$ in place of the original channel capacity from \cite{FBL_3,FBL_MU}, while maintaining all other terms unchanged. This substitution is justified since the single-user PA-FFMA system operates under identical power constraints as the P2P Gaussian scenario.

\subsection{Gaussian Mutiple-access Channel}
In \cite{FBL_MU}, the authors extended the FBL analysis to multiuser GMACs. Compared to the P2P Gaussian channel scenario, the multiuser FBL analysis exhibits two fundamental differences: first, the single-user channel capacity is replaced by multiuser sum capacity; second, the capacity reduction stems from both finite blocklength effects and multiuser interference.
Consequently, maintaining the same error probability requires higher $E_b/N_0$ in multiuser scenarios compared to P2P cases. Building upon these results, we extend the FBL analysis to our proposed multiuser PA-FFMA system, leading to the following corollary:

\begin{corollary} \label{corollary.MU_FBL}  
Consider a $J$-user transmission system operating over $m$ independent DoFs, where each user is allocated $K$ orthogonal DoFs for information symbols, yielding a total information-carrying dimension of $JK$ DoFs. The system employs $Q = m - JK$ shared DoFs for parity symbols across all users. For any target error probability $\varepsilon \in (0,1)$, the maximal achievable sum rate $R_q$ of the $(m,J,K,Q,\varepsilon,P_{\text{avg}})$ PA-FFMA system is bounded by:

\begin{small}
  \begin{equation} \label{eq:sum_rate_bound}
  \begin{aligned}
  J R_q &\leq \frac{C_{\mathrm{MU}}^{\mathrm{td}}}{m} - \sqrt{ \frac{V(\mu_1 P_{\text{avg}})}{JK} + \frac{V_{\mathrm{cr}}(J, \mu_2 P_{\text{avg}})}{Q} } Q^{-1}(\varepsilon) 
  \frac{\log m}{2m} + \mathcal{O}\left(\frac{1}{m}\right) \\
  &= \frac{JK C(\mu_1 \gamma_a) + Q C(J\mu_2 \gamma_a)}{m} 
  - \sqrt{ \frac{V(\mu_1 P_{\text{avg}})}{JK} + \frac{V_{\mathrm{cr}}(J, \mu_2 P_{\text{avg}})}{Q} } Q^{-1}(\varepsilon) 
  \frac{\log m}{2m} + \mathcal{O}\left(\frac{1}{m}\right),
  \end{aligned}
  \end{equation}
\end{small}
  where $C_{\mathrm{MU}}^{\mathrm{td}}$ represents the sum capacity derived in Theorem~\ref{theorem.TDMA_MU}, and the cross-dispersion term characterizing multiuser interference is:
  \begin{equation}
  V_{\mathrm{cr}}(J, \mu_2 P_{\text{avg}}) \triangleq \frac{J(J-1)(\mu_2 P_{\text{avg}})^2}{2(1+J\mu_2 P_{\text{avg}})^2}.
  \end{equation}
\end{corollary}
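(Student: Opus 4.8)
The plan is to split the $m$ degrees of freedom of the PA-FFMA codeword into two statistically independent, orthogonal sub-blocks and to apply an existing second-order (normal) approximation to each. The first sub-block consists of the $JK$ information DoFs; since different users are assigned disjoint information DoFs and the information symbols carry per-symbol power $\mu_1 P_{\text{avg}}$, these behave as $JK$ parallel, interference-free AWGN channel uses, each with SNR $\mu_1\gamma_a$. The second sub-block consists of the $Q$ shared parity DoFs, where the $J$ users' parity symbols are superimposed, so in the complex field this is a $J$-user Gaussian multiple-access channel with per-user power $\mu_2 P_{\text{avg}}$, governed by the effective SNR $J\mu_2\gamma_a$ (consistent with the term $Q\,C(J\mu_2\gamma_a)$ appearing in Theorem~\ref{theorem.TDMA_MU}). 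Recall from \cite{FFMA,FFMA2} that the receiver converts the complex-field superposition directly into a finite-field sum via ${\rm F}_{\rm C2F}$, so no additional multiuser interference enters the decoding beyond what the GMAC sum-capacity and its dispersion already capture.

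First I would invoke the Polyanskiy--Poor--Verd\'u normal approximation for the power-constrained AWGN channel \cite{FBL_2,FBL_3}: over the $JK$ information channel uses, the maximal number of codewords satisfies $\log M^\ast \le JK\, C(\mu_1\gamma_a) - \sqrt{JK\, V(\mu_1 P_{\text{avg}})}\, Q^{-1}(\varepsilon) + \tfrac12\log(JK) + \mathcal{O}(1)$, with the channel dispersion $V(P) = \frac{P(P+2)}{2(1+P)^2}$ as in Corollary~\ref{corollary_SU_FFMA_FBL}; the average (rather than maximal) power constraint is handled exactly as in the single-user case justified immediately after Corollary~\ref{corollary_SU_FFMA_FBL}. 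Next I would invoke the GMAC extension of \cite{FBL_MU}: over the $Q$ parity channel uses, the achievable sum rate obeys a bound equal to $Q\,C(J\mu_2\gamma_a)$ minus a dispersion term scaled by $Q^{-1}(\varepsilon)$, in which the contribution that distinguishes the multiuser regime from the P2P one is precisely the cross-dispersion $V_{\mathrm{cr}}(J,\mu_2 P_{\text{avg}}) = \frac{J(J-1)(\mu_2 P_{\text{avg}})^2}{2(1+J\mu_2 P_{\text{avg}})^2}$, which measures the residual penalty caused by the mutual interference of the $J$ superposed users. Because the two sub-blocks occupy disjoint DoFs and carry independent inputs, the joint error event factorizes to second order; combining the two bounds under a single $Q^{-1}(\varepsilon)$ with the two per-block dispersions added inside the square root, substituting $C_{\mathrm{MU}}^{\mathrm{td}} = JK\,C(\mu_1\gamma_a) + Q\,C(J\mu_2\gamma_a)$ from Theorem~\ref{theorem.TDMA_MU}, and normalizing by $m$, yields \eqref{eq:sum_rate_bound}, with the $\tfrac{\log m}{2m}$ term absorbing the $\tfrac12\log(JK)$ and $\tfrac12\log Q$ contributions and the remainder collected into $\mathcal{O}(1/m)$.

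The main obstacle I anticipate is rigorously transferring the GMAC normal approximation of \cite{FBL_MU} to the parity sub-block of the FFMA system, i.e.\ verifying that the finite-field multiplexing followed by the ${\rm F}_{\rm C2F}$ conversion produces an information density whose mean and variance match the sum-capacity $Q\,C(J\mu_2\gamma_a)$ and the dispersion built from the single-user term plus $V_{\mathrm{cr}}$, so that the Berry--Esseen step of \cite{FBL_MU} applies verbatim rather than only heuristically. A secondary, but more routine, point is justifying the additivity of the two dispersions across sub-blocks of unequal length $JK$ and $Q$: this follows from a joint central-limit argument over the two independent coordinate groups, but it should be stated explicitly since the groups have different SNRs ($\mu_1\gamma_a$ versus $J\mu_2\gamma_a$) and hence different per-use dispersions. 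Once these two points are in place, the remaining manipulations---bookkeeping the $\mathcal{O}(\log m/m)$ and $\mathcal{O}(1/m)$ residuals and re-expressing everything in the normalized-rate form of the statement---are mechanical.
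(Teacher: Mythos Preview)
Your proposal is correct and follows essentially the same approach as the paper: decompose the $m$ DoFs into the $JK$ orthogonal information DoFs (treated via the P2P PPV normal approximation with dispersion $V(\mu_1 P_{\text{avg}})$) and the $Q$ shared parity DoFs (treated via the GMAC finite-blocklength result of \cite{FBL_MU} with cross-dispersion $V_{\mathrm{cr}}$), then substitute $C_{\mathrm{MU}}^{\mathrm{td}}$ from Theorem~\ref{theorem.TDMA_MU}. In fact, the paper's own justification is purely heuristic---it simply asserts that the information section ``is equivalent to the P2P case'' and the parity section ``exhibits characteristics of the multiuser finite-blocklength scenario''---so the two obstacles you flag (verifying that the information density on the parity sub-block matches the GMAC dispersion of \cite{FBL_MU}, and justifying the additivity of the two dispersions under a single $Q^{-1}(\varepsilon)$) are not addressed in the paper either; you are being more rigorous than the source.
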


In the proposed PA-FFMA system, the information symbols maintain orthogonality among the $J$ users, making the information section equivalent to the P2P case analyzed in Corollary~\ref{corollary_SU_FFMA_FBL}. In contrast, the parity symbols form superimposed signals, exhibiting characteristics of the multiuser finite-blocklength scenario.

Corollary~\ref{corollary.MU_FBL} utilizes the derived sum capacity expression $C_{\mathrm{MU}}^{\mathrm{td}}$ for $J$ users, replacing the original sum capacity from \cite{FBL_MU}. Furthermore, we modify both the channel dispersion and cross-dispersion terms as follows:
\begin{itemize}
    \item The channel dispersion term $\frac{V(\mu_1 P_{\text{avg}})}{JK}$ captures the effects of the orthogonal information section, where $V(\cdot)$ represents the channel dispersion function for the P2P case;
    
    \item The cross-dispersion term $\frac{V_{\mathrm{cr}}(J, \mu_2 P_{\text{avg}})}{Q}$ characterizes the interference in the superimposed parity section, with $V_{\mathrm{cr}}(\cdot)$ denoting the cross-dispersion function for multiuser scenarios.
\end{itemize}
This analytical framework remains valid as the multiuser PA-FFMA system preserves identical power constraints to the GMAC scenario. The \textit{orthogonal-superimposed hybrid structure} provides enhanced spectral efficiency through orthogonal transmission of information symbols to avoid multiuser interference while employing superimposed parity symbols to improve resource utilization.

\begin{figure}[t] 
  \centering
  \includegraphics[width=0.46\textwidth]{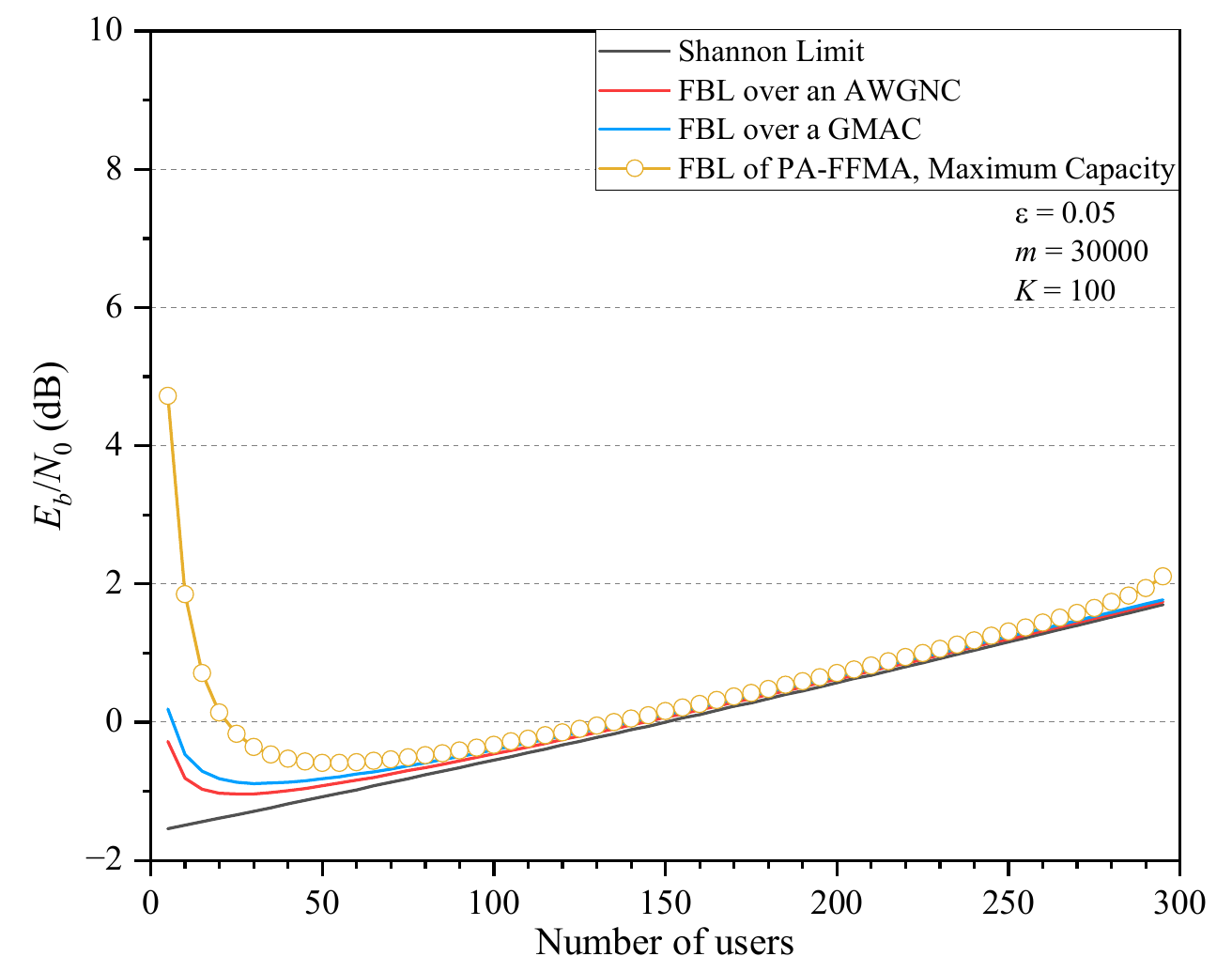}
  \caption{Minimum required $E_b/N_0$ (dB) for different communication scenarios, including: (1) theoretical Shannon limits, (2) FBL P2P AWGNC, (3) FBL GMAC, and (4) multiuser PA-FFMA.}
  \label{Fig_Graph9}
  \vspace{-0.2in}
\end{figure}

Figure~\ref{Fig_Graph9} presents a comprehensive comparison of FBL effects across four fundamental communication scenarios: the theoretical Shannon capacity limits, FBL performance for P2P AWGNCs, FBL characteristics for multiuser GMACs, and our proposed multiuser FFMA systems. The simulation configuration employs fixed parameters of $m = 30, 000$ DoFs, $K = 100$ information DoFs per user, and a error probability of $\varepsilon = 0.05$.
Through extensive Monte Carlo simulations, we evaluate the PA-FFMA system under the MU-EPA scheme, which is determined by the maximum channel capacity, where \( \mu_1 = J \mu_2 \) and \( \mu_2 = \frac{m}{KJ + Q} \), achieving the sum capacity \( C_{\mathrm{MU}}^{\mathrm{td}} = (KJ + Q) \cdot C\left(\frac{J m r_a}{KJ + Q}\right) \).

From Fig. \ref{Fig_Graph9}, it reveals that the PA-FFMA systems require moderately higher $E_b/N_0$ than both P2P and multiuser GMAC scenarios. This performance difference stems from the fundamental trade-off between multiplexing and diversity in our system design. 
The multiplexing gain scales linearly with the available degrees of freedom, specifically as $\mathcal{O}(JK)$, where $JK$ represents the total orthogonal dimensions allocated for information transmission in the system. This linear scaling arises from the parallel transmission capability enabled by the orthogonal information channels. On the other hand, the diversity gain exhibits logarithmic scaling $\mathcal{O}(\log(P))$ with respect to power allocation $P$, as it fundamentally depends on the power distribution strategy for the information symbols. This characteristic logarithmic behavior stems from the diminishing returns of power allocation in improving detection reliability.



\vspace{-0.1in}
\section{Rate-Driven Capacity Alignment Theorem}

In this section, we examine the error performance metric and the corresponding power allocation strategy based on the \textit{capacity-to-rate ratio (CRR)}. First, we present the maximum CRR criterion for a single-rate sequence scenario, followed by the derivation of the max-min CRR criterion for a multi-rate sequence scenario. Using the max-min CRR criterion, we derive the \textit{rate-driven capacity alignment theorem}, which leads to the \textit{capacity alignment (CA) power allocation} method. Finally, we discuss the conditions under which both performance metrics, i.e., maximizing channel capacity and minimizing error probability, can be achieved simultaneously.

\vspace{-0.1in}
\subsection{Maximum Capacity-to-Rate Ratio Criterion for a Single-rate Sequence}

First, we define the \textit{capacity-to-rate ratio (CRR)}, a metric for evaluating error performance in communication systems.

\begin{definition} (\textbf{Capacity-to-Rate Ratio (CRR)})
  For a single-rate sequence, the Capacity-to-Rate Ratio is defined as the ratio of the channel capacity to the coding rate, i.e.,
    \begin{equation}
      \lambda = \frac{C}{R_q},
    \end{equation}
  where \( C \) denotes the channel capacity of the single-rate sequence, and \( R_q \) represents its coding rate.
\end{definition}

The CRR \(\lambda\) can be categorized into three distinct scenarios:
\begin{enumerate}
  \item \(\lambda > 1\): This scenario indicates that the channel capacity surpasses the actual information rate, suggesting underutilization of channel resources. Here, the channel possesses more capacity than required, leading to inefficient resource usage.
  
  \item \(\lambda = 1\): This scenario signifies that the channel capacity precisely matches the actual information rate, implying optimal utilization of channel resources. In this case, the channel capacity is perfectly aligned with the information rate, achieving full efficiency.
  
  \item \(\lambda < 1\): This scenario indicates that the channel capacity is inadequate to support the actual information rate, potentially resulting in transmission failure. Here, the channel fails to meet the required information rate, which may lead to performance deterioration or complete transmission breakdown.
\end{enumerate}

Next, using the defined CRR, we demonstrate the relationship between the error probability \( P_{e,w} \) and the CRR, as illustrated by the following maximum CRR criterion.

\begin{theorem} \label{lemma.MaxCRR_SingleRate} 
(\textbf{Maximum Capacity-to-Rate Ratio Criterion}) 
Consider an $(m, M)$ channel code ${\mathcal C}_{mc}$ over GF($p$), with a loading factor \( \eta = \frac{M}{m} \geq 0.5 \), indicating that the codeword in ${\mathcal C}_{mc}$ forms a single-rate sequence. The coding rate is \( R_q = \frac{M}{m} \log_2 (p) \), and the codeword length, or the DoFs, is given by \( m \). Each codeword \( w \in {\mathbb P}^{1 \times m} \) of ${\mathcal C}_{mc}$ is mapped to a signal vector, denoted as ${\bf x} \in {\mathcal X}^m$, which is transmitted through an AWGNC. The received signal vector is denoted as ${\bf y} \in {\mathcal Y}^m$, and the detected codeword \( \widehat{w} \) is estimated as \( \widehat{w} = g({\bf y}) \). The process can be described as \( w \to {\bf x} \to {\bf y} \to \widehat{w} \). Let the channel capacity be \( C \), where \( C = {\rm I}({\mathcal X}^m; {\mathcal Y}^m) \). Suppose the error probability of the codeword is \( {P}_{e, w} = {\rm Pr}(\widehat{w} \neq w) \). The error probability \( P_{e,w} \) is inversely proportional to the capacity-to-rate ratio (CRR) \( \lambda \). Hence, to minimize the error probability, it is proportional to maximizing the CRR, i.e.,
    \begin{equation}
      \min {P}_{e,w} \propto 
      \max \left\{ \lambda + \frac{1}{M \log_2 p} \right\}  
      \overset{(a)}{=} \max \left\{ \lambda \right\},
    \end{equation}
where \( \lambda = \frac{C}{R_q} \) represents the CRR, and step $(a)$ follows when \( \frac{1}{M \log_2 p} \) is constant.
\end{theorem}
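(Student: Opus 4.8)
The plan is to establish a converse (lower) bound on the codeword error probability $P_{e,w}$ by combining Fano's inequality with the data-processing inequality, and then to read off the claimed monotone dependence on the CRR $\lambda$. The additive term $\frac{1}{M\log_2 p}$ appearing in the statement is exactly the footprint of the $H_b(P_{e,w})\le 1$ term in Fano's inequality divided by $\log_2(p^M)$, which signals that this is the intended route.

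First I would fix the uniform prior over the $p^M$ codewords of ${\mathcal C}_{mc}$, so that $H(W) = \log_2(p^M) = M\log_2 p$. Along the Markov chain $W \to {\bf x} \to {\bf y} \to \widehat{w}$, Fano's inequality gives
\[
H(W \mid \widehat{w}) \;\le\; H_b(P_{e,w}) + P_{e,w}\log_2(p^M - 1) \;\le\; 1 + P_{e,w}\, M\log_2 p ,
\]
where $H_b(\cdot)\le 1$ is the binary entropy function. The data-processing inequality, together with the memoryless property of the $m$ parallel AWGNCs already used in the proof of Theorem~\ref{theorem.SU.TDMA} (namely ${\rm I}({\mathcal X}^m;{\mathcal Y}^m) \le \sum_{i=0}^{m-1}{\rm I}({\mathcal X};{\mathcal Y})$), yields ${\rm I}(W;\widehat{w}) \le {\rm I}({\mathcal X}^m;{\mathcal Y}^m) \le m\,C$, where $C$ denotes the per-symbol (per-DoF) capacity; the single-rate hypothesis $\eta \ge 0.5$ is what guarantees that all $m$ symbols carry the same power, so the summed bound collapses to $mC$.

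Next I would combine the two estimates through $H(W) - H(W\mid\widehat{w}) = {\rm I}(W;\widehat{w})$, obtaining
\[
M\log_2 p - 1 - P_{e,w}\,M\log_2 p \;\le\; m\,C ,
\]
and, using $m R_q = M\log_2 p$ (since $R_q = \tfrac{M}{m}\log_2 p$) and $\lambda = C/R_q$, rearranging to
\[
P_{e,w} \;\ge\; 1 - \frac{mC}{M\log_2 p} - \frac{1}{M\log_2 p} \;=\; 1 - \left(\lambda + \frac{1}{M\log_2 p}\right) .
\]
Hence the smallest attainable $P_{e,w}$ is bounded below by a quantity that strictly decreases as $\lambda + \frac{1}{M\log_2 p}$ grows, i.e. $\min P_{e,w}$ is driven down precisely by maximizing $\lambda + \frac{1}{M\log_2 p}$. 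Since $M$ and $p$ are fixed code parameters, $\frac{1}{M\log_2 p}$ is a constant that can be dropped from the maximization, which is exactly step $(a)$, leaving $\max\{\lambda\}$.

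The main obstacle is interpretational rather than technical: Fano's inequality delivers only a converse, so strictly what is proved is that a large CRR is \emph{necessary} for small error probability, and the symbol ``$\propto$'' in the statement should be understood as this monotone relationship between the (tight) converse bound and $\lambda$, not as a literal proportionality of $P_{e,w}$ itself. I would therefore state the conclusion carefully — since $R_q$ and the blocklength $m$ are held fixed while only the power allocation varies, any allocation minimizing the converse bound is the one maximizing $C$, hence $\lambda$ — and, if desired, append a one-line achievability remark (random-coding / error-exponent $P_{e,w}\approx e^{-mE(R_q)}$ with $E(\cdot)$ increasing in $C-R_q=(\lambda-1)R_q$) to confirm that maximizing $\lambda$ is also asymptotically sufficient, so the two directions together justify treating $\lambda$ as the governing error-performance metric.
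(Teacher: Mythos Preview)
Your proposal is correct and follows essentially the same route as the paper: the paper also combines Fano's inequality with the data-processing bound ${\rm I}(w;\widehat{w})\le {\rm I}({\mathcal X}^m;{\mathcal Y}^m)\le mC$ to obtain $P_{e,w}\ge 1-\bigl(\lambda+\tfrac{1}{M\log_2 p}\bigr)$, and even explicitly concedes that this yields only an ``approximate relationship'' rather than a rigorous proportionality. Your additional care in interpreting the ``$\propto$'' as a converse monotonicity and your achievability remark go slightly beyond what the paper provides.
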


\begin{proof}
We are unable to provide a rigorous proof of the relationship between the error probability \( P_{e,w} \) and the CRR \( \lambda \). However, we can derive an approximate relationship based on Fano's inequality and the channel coding theorem. The steps are as follows:
    \begin{equation}
      \begin{array}{ll}
        \log_2 p^M  \overset{(a)}{=} m \cdot R_q 
                   & \overset{(b)}{=} \text{H}(w) 
                    \overset{(c)}{=} \text{H}(w|\widehat{w}) + \text{I}(w;\widehat{w}), \\
                   & \overset{(d)}{\le} 1 + P_{e,w} \cdot (m R_q) + \text{I}({\mathcal X}^m; {\mathcal Y}^m)
                   \overset{(e)}{\le} 1 + P_{e,w} \cdot (m R_q) + m \cdot C, \\
      \end{array}
    \end{equation}
where the steps are explained as follows. 
Steps (a) and (b) are directly based on the definition of the coding rate. In Step (c), we express the relationship between entropy and mutual information. Step (d) is derived from Fano's inequality, and Step (e) assumes that the channels are independent, with each channel having the same capacity.
Thus, from the above, we can deduce that
    \begin{equation}
      P_{e,w} \ge 1 - \left( \lambda + \frac{1}{M \log_2 p} \right),
    \end{equation}
which indicates that the error probability \( P_{e,w} \) is inversely proportional to the CRR \( \lambda \).
\end{proof}
  
From Theorem \ref{lemma.MaxCRR_SingleRate}, we observe that for a single-rate sequence, the error probability $P_{e,w}$ is inversely proportional to the CRR \( \lambda \). Therefore, we can optimize the system design by maximizing the CRR to achieve improved error performance.

\vspace{-0.1in}
\subsection{Max-Min Capacity-to-Rate Ratio Criterion for a Multirate Sequence}

Now, we examine the relationship between error performance and the CRR in a multirate sequence scenario. A multirate sequence is characterized by having multiple CRRs, with each subsequence of the sequence associated with its own CRR. In general, the error performance is predominantly determined by the worst-case scenario, which typically corresponds to the minimum CRR of the sequence. 
Therefore, to achieve optimal error performance, it is essential to maximize the minimum CRR, which corresponds to a max-min optimization problem. This criterion is formally stated in the following theorem.

\begin{theorem} \label{lemma.MaxMin_CRR}
(\textbf{Max-Min Capacity-to-Rate Ratio Criterion}) 
Let ${\bf s}$ represent a multirate sequence of length \( m \), which corresponds to \( m \) DoFs along the E-axis. The sequence ${\bf s}$ is composed of \( L \) sub-sequences, denoted as ${\bf s} = ({\bf s}_1, {\bf s}_2, \dots, {\bf s}_l, \dots, {\bf s}_L)$, where the \( l \)-th sub-sequence, ${\bf s}_l$, has length \( m_l \) (or equivalently, \( m_l \) DoFs). Thus, the total length of the sequence is \( m = \sum_{l=1}^{L} m_l \). 
Let the rates of the sub-sequences ${\bf s}_1, {\bf s}_2, \dots, {\bf s}_L$ be \( R_1, R_2, \dots, R_L \), respectively. When the multirate sequence ${\bf s}$ is transmitted over an AWGNC, the channel capacities of the subsequences ${\bf s}_1, {\bf s}_2, \dots, {\bf s}_L$ are given by \( C_1, C_2, \dots, C_L \), respectively. Let \( \lambda_l = \frac{C_l}{R_l} \) represent the capacity-to-rate ratio (CRR) for the \( l \)-th sub-sequence. Then, the CRR set for the multirate sequence ${\bf s}$ is defined as \( \Lambda = \{\lambda_1, \lambda_2, \dots, \lambda_L\} \).
In order to minimize the error probability \( P_{e,w} \), it is required to maximize the minimum value of the CRR set \( \Lambda \), that is,
  \begin{equation}
    \min P_{e,w} \propto {\rm{Maximize}} \quad \min \{\lambda_1, \lambda_2, \dots, \lambda_L\},
  \end{equation} 
which is a max-min optimization problem. The optimal solution occurs when the CRR values of all sub-sequences are equal, i.e., 
  \(
    \lambda_1 = \lambda_2 = \dots = \lambda_L,
  \)
which implies a phenomenon of rate-driven capacity alignment in the multirate sequence. Therefore, the minimum error probability is achieved when the CRRs of all sub-sequences are maximized in a balanced manner.
\end{theorem}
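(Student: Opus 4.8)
The plan is to split the argument into two parts: first, reduce the claim about error probability to a pure max--min optimization over the CRR set $\Lambda = \{\lambda_1,\dots,\lambda_L\}$; second, show that the optimum of that max--min problem is attained precisely when all the CRRs coincide. For the reduction I would start from the observation that the receiver must recover every sub-sequence ${\bf s}_1,\dots,{\bf s}_L$, so a word error occurs whenever at least one ${\bf s}_l$ is mis-decoded. Writing $P_{e,w}^{(l)}$ for the error probability associated with ${\bf s}_l$, this gives $\max_{1\le l\le L} P_{e,w}^{(l)} \le P_{e,w} \le \sum_{l=1}^{L} P_{e,w}^{(l)}$. Applying Theorem~\ref{lemma.MaxCRR_SingleRate} to each sub-sequence individually yields, up to a vanishing $\mathcal{O}(1/m_l)$ correction, that $P_{e,w}^{(l)}$ is inversely related to $\lambda_l = C_l/R_l$, so the dominant term in the lower bound $\max_l P_{e,w}^{(l)}$ comes from the sub-sequence with the \emph{smallest} CRR. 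Hence $P_{e,w}$ is governed by $\min_l \lambda_l$, and (in the same approximate sense as Theorem~\ref{lemma.MaxCRR_SingleRate}) minimizing $P_{e,w}$ is equivalent to maximizing $\min_l \lambda_l$.

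Next I would bring in the power-allocation structure to locate the optimum. Each ${\bf s}_l$ consists of $m_l$ symbols allocated per-symbol power $\mu_l P_{\text{avg}}$, with capacity $C_l = C(\mu_l \gamma_a) = \tfrac12\log_2(1+\mu_l\gamma_a)$ and total-power budget $\sum_{l=1}^{L} m_l \mu_l = m$. Since each $R_l$ is fixed and $x\mapsto C(x)$ is continuous and strictly increasing, $\mu_l\mapsto\lambda_l$ is a continuous, strictly increasing bijection onto its range, and the feasible set $\{(\mu_1,\dots,\mu_L):\mu_l\ge 0,\ \sum_l m_l\mu_l=m\}$ is compact and non-empty, so a maximizer of $\min_l\lambda_l$ exists. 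I would then run an exchange argument: if at the maximizer the CRRs are not all equal, choose an index $t$ with $\lambda_t>\min_l\lambda_l$ and an index $s$ attaining the minimum, and move a small amount $\delta>0$ of power from the symbols of ${\bf s}_t$ to those of ${\bf s}_s$ (decreasing $\mu_t$, increasing $\mu_s$, keeping $m_t\mu_t+m_s\mu_s$ fixed). For $\delta$ small enough $\lambda_t$ stays strictly above the old minimum while $\lambda_s$ strictly increases, so $\min_l\lambda_l$ strictly increases, contradicting optimality. Hence at the optimum $\lambda_1=\lambda_2=\cdots=\lambda_L$ --- the rate-driven capacity-alignment condition --- and inverting $C$ shows the common value $\lambda^\star$ is the unique root of $\sum_{l=1}^{L}\frac{m_l}{\gamma_a}\bigl(2^{2R_l\lambda^\star}-1\bigr)=m$, whose left-hand side increases continuously from $0$ to $\infty$.

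The main obstacle --- and the only non-rigorous step --- is the very first reduction: exactly as acknowledged in the proof of Theorem~\ref{lemma.MaxCRR_SingleRate}, the precise monotone relationship between $P_{e,w}$ and $\lambda$ cannot be established rigorously and is argued only heuristically via Fano's inequality and the channel-coding theorem. Once that link is granted, the remainder --- the passage to a max--min problem, the exchange argument forcing equality of all CRRs, and uniqueness of the alignment point --- is rigorous and self-contained.
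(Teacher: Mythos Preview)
Your proposal is correct and, at its core, uses the same exchange/perturbation argument as the paper: assume the CRRs are not all equal at the optimum, shift resources from a sub-sequence with a strictly larger CRR to one attaining the minimum, and derive a contradiction. The paper's proof is considerably terser: it skips your first reduction entirely (implicitly relying on Theorem~\ref{lemma.MaxCRR_SingleRate}), and it performs the exchange directly on the $\lambda_l$'s themselves, simply setting $\lambda_1 \to \zeta+\epsilon$ and $\lambda_2 \to \lambda_2-\epsilon$ without reference to the underlying power constraint.

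Your version is more careful in several respects that the paper omits. You make the link $P_{e,w} \leftrightarrow \min_l \lambda_l$ explicit via the union/max bounds on sub-sequence errors; you carry out the exchange in the actual decision variables $\mu_l$ under the budget $\sum_l m_l \mu_l = m$, which is what justifies that such a trade-off is feasible; and you add existence (compactness of the simplex, continuity of $\lambda_l(\mu_l)$) and uniqueness (strict monotonicity of $\sum_l \tfrac{m_l}{\gamma_a}(2^{2R_l\lambda}-1)$). The paper's direct manipulation of the $\lambda_l$'s is heuristically fine but tacitly assumes exactly the monotone correspondence $\mu_l \leftrightarrow \lambda_l$ that you spell out. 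Both proofs share the same acknowledged non-rigorous step, namely the Fano-based link between error probability and CRR inherited from Theorem~\ref{lemma.MaxCRR_SingleRate}.
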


\begin{proof}
Let \(\zeta = \min\{\lambda_1, \lambda_2, \dots, \lambda_L\}\). The goal is to maximize \(\zeta\). By the definition of the minimum, we have the following inequality for each \(\lambda_l\):  
    \(
    \lambda_l \geq \zeta \quad \forall l \in \{1, 2, \dots, L\}.
    \)
To maximize \(\zeta\), we seek the largest possible value of \(\zeta\) such that all \(\lambda_l\)'s satisfy \(\lambda_l \geq \zeta\).
    
First, consider the case where all \(\lambda_l\)'s are equal, i.e., \(\lambda_1 = \lambda_2 = \dots = \lambda_L = \zeta\). In this case, \(\zeta\) is trivially the maximum possible minimum value, as all \(\lambda_l\)'s are equal to \(\zeta\).
    
Next, assume that there exists at least one \(\lambda_k\) such that \(\lambda_k > \zeta\). Without loss of generality, assume \(\lambda_1 = \zeta\) and \(\lambda_2 > \zeta\). In this situation, we can increase \(\zeta\) by setting \(\lambda_1 = \zeta + \epsilon\) and adjusting \(\lambda_2\) to \(\lambda_2 - \epsilon\), where \(\epsilon > 0\) is chosen such that \(\lambda_2 - \epsilon \geq \zeta + \epsilon\). This adjustment increases the value of \(\zeta\) to \(\zeta + \epsilon\), which contradicts the assumption that \(\zeta\) was the maximum.
    
Therefore, the only configuration that maximizes \(\zeta\) is when all \(\lambda_i\)'s are equal. Thus, the maximum value of \(\min\{\lambda_1, \lambda_2, \dots, \lambda_L\}\) is attained when \(\lambda_1 = \lambda_2 = \dots = \lambda_L\).
\end{proof}

In general, the channel capacity of a sub-sequence over an AWGNC is determined by the transmit power allocated to it. Thus, we can adjust the channel capacity by assigning different powers to the subsequence. 
According to Theorem \ref{lemma.MaxMin_CRR}, the power allocation strategy under the max-min CRR criterion is to assign more power to sub-sequences with higher rates. In other words, sub-sequences with higher rates are allocated greater channel capacity (or more power), while those with lower rates receive less channel capacity (or power).

In fact, the proposed max-min CRR criterion closely resembles the well-known \textit{water-filling rule}. The key distinction lies in the fact that the ``water level'' is determined by the rate, while the ``water'' represents the \textit{channel capacity} rather than direct power. Therefore, we refer to this phenomenon in a multirate sequence as \textit{rate-driven capacity alignment}, and the corresponding power allocation scheme as \textit{capacity alignment (CA) power allocation}.

\begin{theorem} \label{theorem.CA}
  (\textbf{Rate-Driven Capacity Alignment Theorem})  
  To minimize the error probability, the channel capacity should be allocated according to the distribution of rates. Sequences with higher rates are allocated more capacity, while those with lower rates receive less. This allocation guarantees that the capacity-to-rate ratio (CRR) remains constant across all sequences. This principle is referred to as the rate-driven capacity alignment theorem.
\end{theorem}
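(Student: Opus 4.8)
The plan is to derive this theorem as a direct corollary of the max-min CRR criterion established in Theorem~\ref{lemma.MaxMin_CRR}, supplemented by a short monotonicity argument confirming that the prescribed capacity allocation is always realizable under a fixed total power budget. Theorem~\ref{lemma.MaxMin_CRR} already shows that, for a multirate sequence ${\bf s} = ({\bf s}_1, \dots, {\bf s}_L)$ with sub-sequence rates $R_1, \dots, R_L$ and sub-sequence capacities $C_1, \dots, C_L$, the surrogate objective $\max \min_l \{C_l/R_l\}$ for minimizing $P_{e,w}$ is optimized exactly when all the CRRs coincide. The first step is therefore simply to set $\lambda_1 = \lambda_2 = \dots = \lambda_L$, call the common value $\lambda$, and read off $C_l = \lambda R_l$ for every $l$ (zero-rate components such as the padding vectors are dropped, as noted earlier in the paper). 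This is the precise meaning of ``capacity allocated in proportion to rate'', and it immediately yields the qualitative statement that higher-rate sub-sequences are assigned more capacity.

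Second, I would make the allocation explicit and check its feasibility. Taking the FF-TDMA setup as the running example, suppose sub-sequence $l$ occupies $m_l$ DoFs with per-symbol power $\mu_l P_{\text{avg}}$ under the budget $\sum_{l=1}^{L} m_l \mu_l = m$; then $C_l = \frac{m_l}{2}\log_2(1 + \mu_l \gamma_a)$. Imposing $C_l = \lambda R_l$ and inverting gives $\mu_l(\lambda) = \gamma_a^{-1}\big( 2^{2\lambda R_l/m_l} - 1 \big)$, a continuous strictly increasing function of $\lambda$ that vanishes at $\lambda = 0$ and diverges as $\lambda \to \infty$. Hence $\Pi(\lambda) \triangleq \sum_{l=1}^{L} m_l \mu_l(\lambda)$ is continuous and strictly increasing with $\Pi(0)=0$ and $\Pi(\lambda)\to\infty$, so by the intermediate value theorem there is a unique $\lambda^\star$ with $\Pi(\lambda^\star) = m$. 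The powers $\{\mu_l(\lambda^\star)\}$ define the capacity-alignment (CA) power allocation, and since $\mu_l(\lambda^\star)$ is increasing in $R_l/m_l$, the sub-sequences carrying higher information rate indeed receive more power and thus more capacity --- recovering the water-filling-type picture already noted in the text, with the ``water level'' set by the rate and the ``water'' being channel capacity.

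I expect the main obstacle to be conceptual rather than computational: Theorem~\ref{lemma.MaxMin_CRR} only establishes that equalizing the CRRs optimizes the Fano-type surrogate upper bound on $P_{e,w}$, not that it minimizes $P_{e,w}$ exactly, so the honest way to phrase the CA theorem is as a design principle inherited from that surrogate --- the same footing on which classical water-filling rests --- rather than as an exact optimality claim. With that framing, the remaining technical point is the existence and uniqueness of $\lambda^\star$, which the monotonicity argument supplies, making the CA power allocation well defined for any collection of sub-sequence rates and any positive power budget.
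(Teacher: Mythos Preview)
Your proposal is correct and follows the same path the paper takes: Theorem~\ref{theorem.CA} is presented in the paper without a separate proof, as an immediate restatement of the equal-CRR conclusion of Theorem~\ref{lemma.MaxMin_CRR}, so your first step already matches the paper's entire argument. Your added feasibility step---showing via monotonicity and the intermediate value theorem that a unique $\lambda^\star$ satisfying the power budget always exists---goes beyond what the paper supplies and is a welcome tightening, as is your explicit acknowledgment that the optimality is with respect to the Fano-type surrogate rather than the exact error probability.
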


Based on the rate-driven capacity alignment theorem, channel capacity can be allocated to various scenarios. Below, we outline three such scenarios:
\begin{itemize}
  \item {Single Multirate Sequence}: In this scenario, a single multirate sequence consists of multiple sub-sequences, each with different rates. However, the component digits of all sub-sequences are drawn from the same finite field.
  \item {Multiple Single-Rate Sequences}: Here, multiple single-rate sequences are considered, where the component digits of each sequence are drawn from different finite fields.
  \item {Multiple Multirate Sequences}: This scenario involves multiple multirate sequences, where the component digits of each sub-sequence are drawn from different finite fields.
\end{itemize}
In this paper, we focus on the first and second scenarios. The first scenario is investigated in this section, while the second scenario will be discussed in the following section.

\textbf{Example 9:}
We now examine the CA power allocation scheme for the single-user FF-TDMA systems, where the received FFSP blocks can be interpreted as a single multirate sequence from the same finite field. The codeword of the \(j\)-th user is treated as a multirate sequence, denoted as \({\bf c}_j \in {\mathbb P}^{1 \times m}\), where \({\bf c}_j = ({\bf d}_j, {\bf 0}, {\bf c}_{j,\rm red})\). Consequently, the average rate of the codeword is given by \(R_j = \frac{K}{m}\). The rates of \({\bf d}_j\) and \({\bf c}_{j,\rm red}\) are denoted as \(R_{j,1}\) and \(R_{j,2}\), respectively, with their specific values provided by equations (\ref{e.R_j1}) and (\ref{e.R_j2}). The channel capacities of \({\bf d}_j\) and \({\bf c}_{j,\rm red}\) are represented as \(C_{j,1}\) and \(C_{j,2}\), respectively, with their values given by equations (\ref{e.C_j_1}) and (\ref{e.C_j_2}). 
The optimization equations and constraints for the CA power allocation are as follows:
  \begin{equation}
    \begin{array}{ll}
      (\mu_1, \mu_2) = 
      \underset{C1}{\rm{argmax}} 
      \left\{ \min \left\{\lambda_{j,1}, \lambda_{j,2} \right\} \right\} = 
      \underset{C1}{\rm{argmax}}
      \left\{ \min \left\{\frac{C_{j,1}}{R_{j,1}}, \frac{C_{j,2}}{R_{j,2}} \right\} \right\}, \\
    \text{s.t.} \quad C1:  K \cdot \mu_1 + Q \cdot \mu_2 = m.
    \end{array}
  \end{equation}
  By solving this max-min optimization problem, the optimal polarization adjusted factors, $\mu_1$ and $\mu_2$, are determined.

It is noted that for the single-user FF-TDMA system, when \( K \geq Q \), the system operates as a single-rate sequence. In this case, the CA power allocation coincides with the equal power allocation (EPA) derived from the maximum channel capacity criterion. $\blacktriangle \blacktriangle$

\vspace{-0.1in}
\subsection{Comparison of Maximum Channel Capacity and Minimum Error Probability}

From the aforementioned discussion, we conclude that the criterion for minimizing error probability is equivalent to the maximum CRR criterion for a single-rate sequence, and the max-min CRR criterion for a multi-rate sequence. Therefore, the comparison between maximizing channel capacity and minimizing error probability is essentially the same as comparing maximizing channel capacity with the maximum CRR criterion or the max-min CRR criterion.

Next, we re-examine the channel capacity and explore the relationship between the maximum channel capacity and the maximum CRR criterion. We use the single-user PA-FF-TDMA system as an example, noting that the analysis for other systems follows a similar approach and will not be repeated.
The channel capacity \( C_{\rm SU}^{\rm td} \) of the single-user PA-FF-TDMA system is given by Eq. (\ref{e.su_tdma_capacity}). Consequently, the maximum channel capacity can be derived through the following steps:
\begin{equation} \label{e.capacity_CRR_su_TDMA}
  \begin{array}{ll} 
    \underset{\mu_{1}, \mu_{2}}{\max} {{C_{\rm SU}^{\rm td}}}
    &\overset{(a)}{=} \underset{\mu_{1}, \mu_{2}}{\max}  
    \left( \frac{ \frac{K}{2} \log_2 (1 + \mu_{1} \cdot \gamma_{a})
    + \frac{Q}{2} \log_2 (1 + \mu_{2} \cdot \gamma_{a})}{K\log_2 p} \right), \\
    &\overset{(b)}{=} \underset{\mu_{1}, \mu_{2}}{\max}  \left( 
    \frac{ \frac{1}{2} \log_2 (1 + \mu_{1} \cdot \gamma_{a})}{K/K} 
    + \frac{ \frac{1}{2} \log_2 (1 + \mu_{2} \cdot \gamma_{a})}{K/Q}  
    \right), \\
    &\overset{(c)}{=} \underset{\mu_{1}, \mu_{2}}{\max}  
    \left( \frac{ C_{j,1} }{{R}_{j,1}} 
    + \frac{ C_{j,2} }{\breve{R}_{j,2}}\right)
    \overset{(d)}{=} \underset{\mu_{1}, \mu_{2}}{\max}  
    \left( {\lambda}_{j, 1} + \breve{\lambda}_{j, 2} \right). \\
  \end{array} 
\end{equation}
In step (a) of Eq. (\ref{e.capacity_CRR_su_TDMA}), the constant \( K \), representing the number of information DoFs, is factored out. For a binary transmission system, we define \( {R}_{j,1} = 1 \) and \( \breve{R}_{j,2} = \frac{K}{Q} \), where \( C_{j,1} = \frac{1}{2} \log_2 (1 + \mu_{1} \cdot \gamma_{a}) \) and \( C_{j,2} = \frac{1}{2} \log_2 (1 + \mu_{2} \cdot \gamma_{a}) \). These definitions lead to step (c) of Eq. (\ref{e.capacity_CRR_su_TDMA}). Step (d) is then derived by introducing \( {\lambda}_{j,1} = \frac{ C_{j,1} }{R_{j,1}} \) and \( \breve{\lambda}_{j,2} = \frac{ C_{j,2} }{\breve{R}_{j,2}} \).

From this, maximizing the channel capacity can be expressed as the following relationship:
\begin{equation}
  \underset{\mu_{1}, \mu_{2}}{\max}  
  \left( {\lambda}_{j, 1} + \breve{\lambda}_{j, 2} \right)
  \stackrel{?}{=}
  \underset{\mu_{1}, \mu_{2}}{\max} \quad \min
  \left\{ {\lambda}_{j, 1}, \breve{\lambda}_{j, 2} \right\}
  \stackrel{?}{=}
  \underset{\mu_{1}, \mu_{2}}{\max} \quad \min
  \left\{{\lambda}_{j, 1}, {\lambda}_{j, 2} \right\},
\end{equation}
where the first \(\stackrel{?}{=}\) holds true only if \( R_{j,1} = \breve{R}_{j,2} \), and the second \(\stackrel{?}{=}\) holds true only if \( K \le Q \). The first condition indicates that the multirate sequence reduces to a single-rate sequence, while the latter condition implies \( \breve{\lambda}_{j,2} = {\lambda}_{j,2} \). This is because \( R_{j, 2} = \frac{\min\{K, Q\} }{Q} \), whose numerator is \( \min\{K, Q\} \), whereas the numerator of \( \breve{R}_{j,2} \) is \( K \).

\begin{figure}[t]
  \centering
  \subfigure[PAS of single-user FF-TDMA.]{\includegraphics[width=0.43\textwidth]{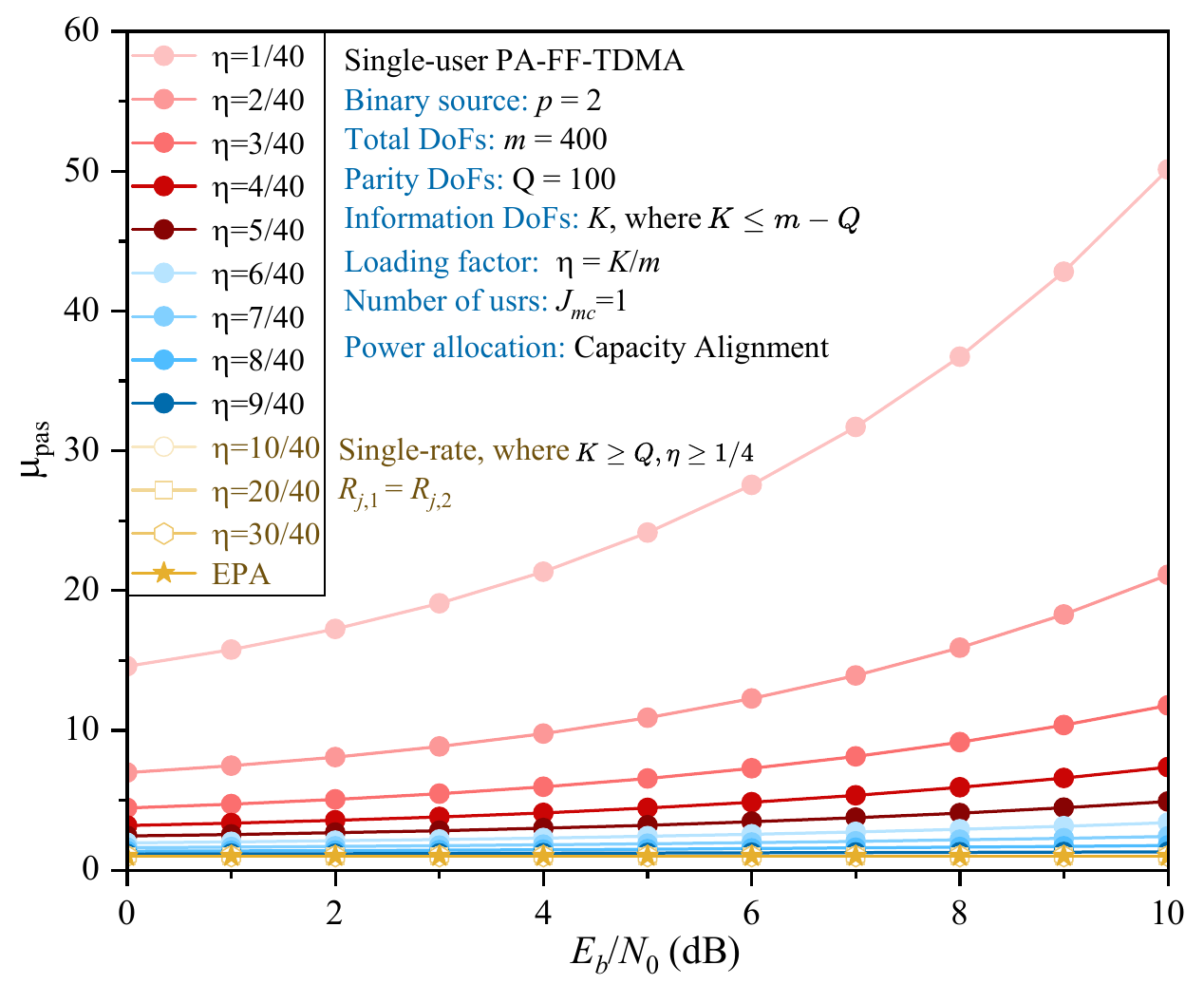}}
  \label{TDMA_mu_sub1}
  \subfigure[PAS of multiuser FF-TDMA.]{\includegraphics[width=0.43\textwidth]{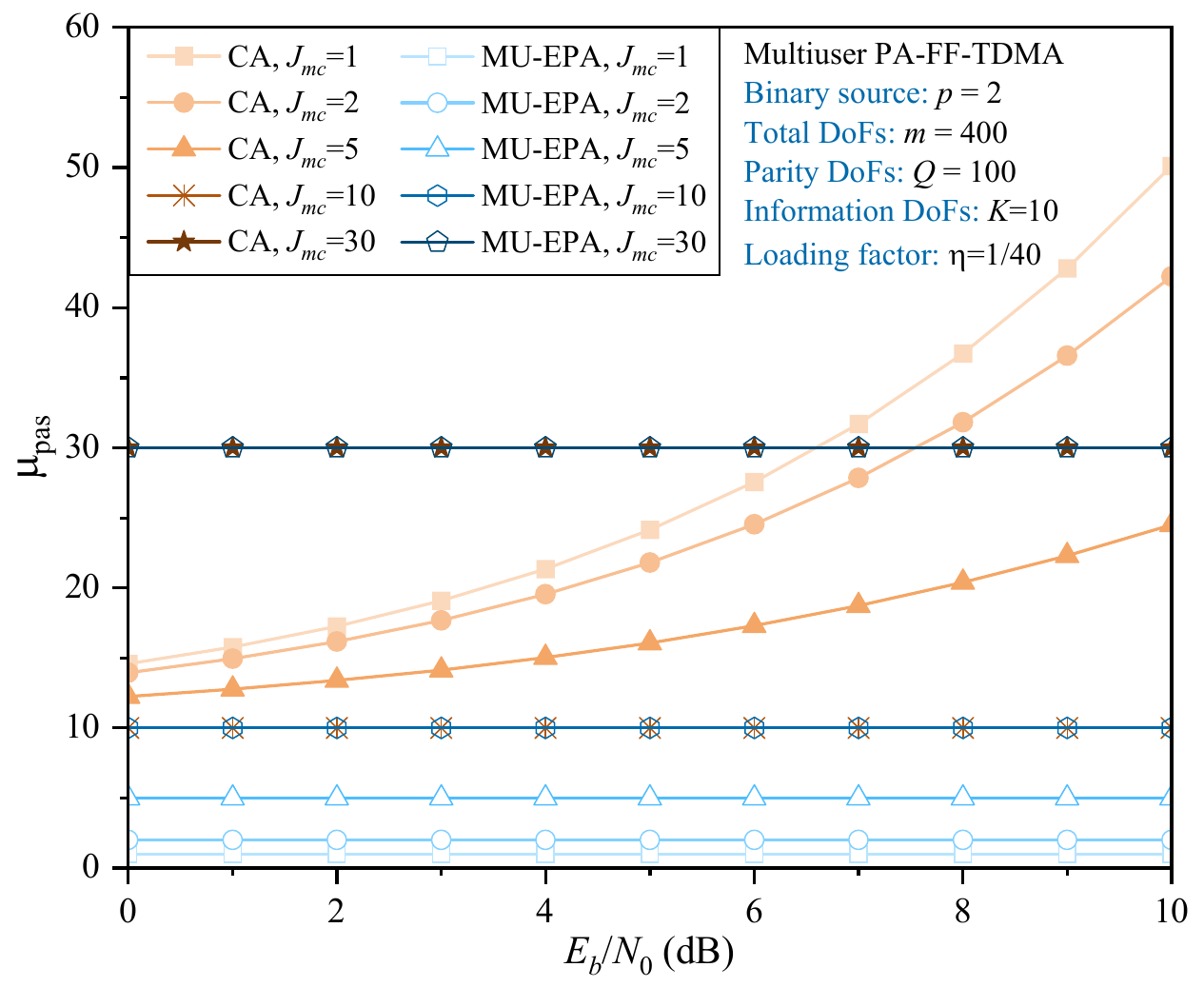}}
  \label{TDMA_mu_sub2}
    \caption{The PAS $\mu_{\rm pas}$ of FF-TDMA systems, where a $(400, 300)$ LDPC code is used for error control. 
    (a) PAS of single-user FF-TDMA. 
    (b) PAS of multiuser FF-TDMA, where $\eta = \frac{1}{40}$, and $J_{mc} = 1, 2, 5, 10, 30$.}
  \label{f.TDMA_mu}
  \vspace{-0.2in}
\end{figure}



\subsection{Monte Carlo Simulations}
We employ Monte Carlo simulations to analyze the polarization-adjusted scaling factor \(\mu_{\rm pas}\) of the FF-TDMA systems. The polarization-adjusted scaling factor \(\mu_{\rm pas}\) is chosen as the performance metric since it directly reflects the power allocation ratio between the information section and the parity section. The simulated PAS of the FF-TDMA systems is shown in Fig.~\ref{f.TDMA_mu}. In the simulation, we assume a total of \(m = 400\) DoFs, with \(Q = 100\) DoFs allocated to parity. The loading factors are used as the key comparison parameters.

The single-user FF-TDMA system is depicted in Fig.~\ref{f.TDMA_mu} (a). When \(K < Q\), the loading factors are \(\eta = \frac{1}{40}, \frac{2}{40}, \frac{3}{40}, \frac{4}{40}, \frac{5}{40}, \frac{6}{40}, \frac{7}{40}, \frac{8}{40}, \frac{9}{40}\), as shown in Fig.~\ref{f.TDMA_mu} (a). In this case, for a given loading factor \(\eta\), the PAS \(\mu_{\rm pas}\) increases exponentially with \(E_b/N_0\). This is because the channel capacity grows logarithmically with \(E_b/N_0\). However, the channel alignment criterion requires the channel capacity to be balanced linearly. As a result, to maintain a constant ``water level" (i.e., channel capacity), the allocated power must increase exponentially. 
Moreover, for a given \(E_b/N_0\) and \(K < Q\), sequences with smaller loading factors exhibit larger PAS \(\mu_{\rm pas}\), indicating that more power is allocated to the information section. This occurs because, when the loading factor is small, the rate of the parity section is \(K/Q\), which is much smaller than the information section rate (equal to \(1\)). Therefore, more power is allocated to the information section to align the channel capacity.
When \(K \geq Q\), the loading factors are \(\eta = \frac{1}{4}, \frac{2}{4}, \frac{3}{4}\), as shown in Fig.~\ref{f.TDMA_mu} (a), and the system behaves as a single-rate sequence. In this scenario, the PAS \(\mu_{\rm pas}\) of the CA power allocation remains constant and equals \(1\), which coincides with the equal power allocation (EPA).

The multiuser FF-TDMA system is depicted in Fig.~\ref{f.TDMA_mu} (b), where the loading factor is \(\eta = 1/40\), and the number of users is \(J_{mc} = 1, 2, 5, 10, 30\). When \(K \cdot J_{mc} < Q\), the number of users is \(J_{mc} = 1, 2, 5\), as shown in Fig.~\ref{f.TDMA_mu} (b). In this case, for a given \(E_b/N_0\), systems with fewer users exhibit a larger PAS \(\mu_{\rm pas}\). This is because the rate of the parity section, given by \(K \cdot J_{mc} / Q\), increases as the number of users grows. As a result, more power is allocated to the information section to maintain channel capacity alignment when there are fewer users.
When \(K \cdot J_{mc} \geq Q\), the number of users is \(J_{mc} = 10, 30\) in Fig.~\ref{f.TDMA_mu} (b), and the superimposed parity section matches the rate of the information section. In this scenario, the PAS \(\mu_{\rm pas}\) of the CA power allocation aligns with the multiuser equal power allocation (MU-EPA), given by \(\mu_{\rm pas} = J_{mc}\). This indicates that \(\mu_{\rm pas}\) increases linearly with the number of users.

\begin{figure}[t] 
  \centering
  \includegraphics[width=0.45\textwidth]{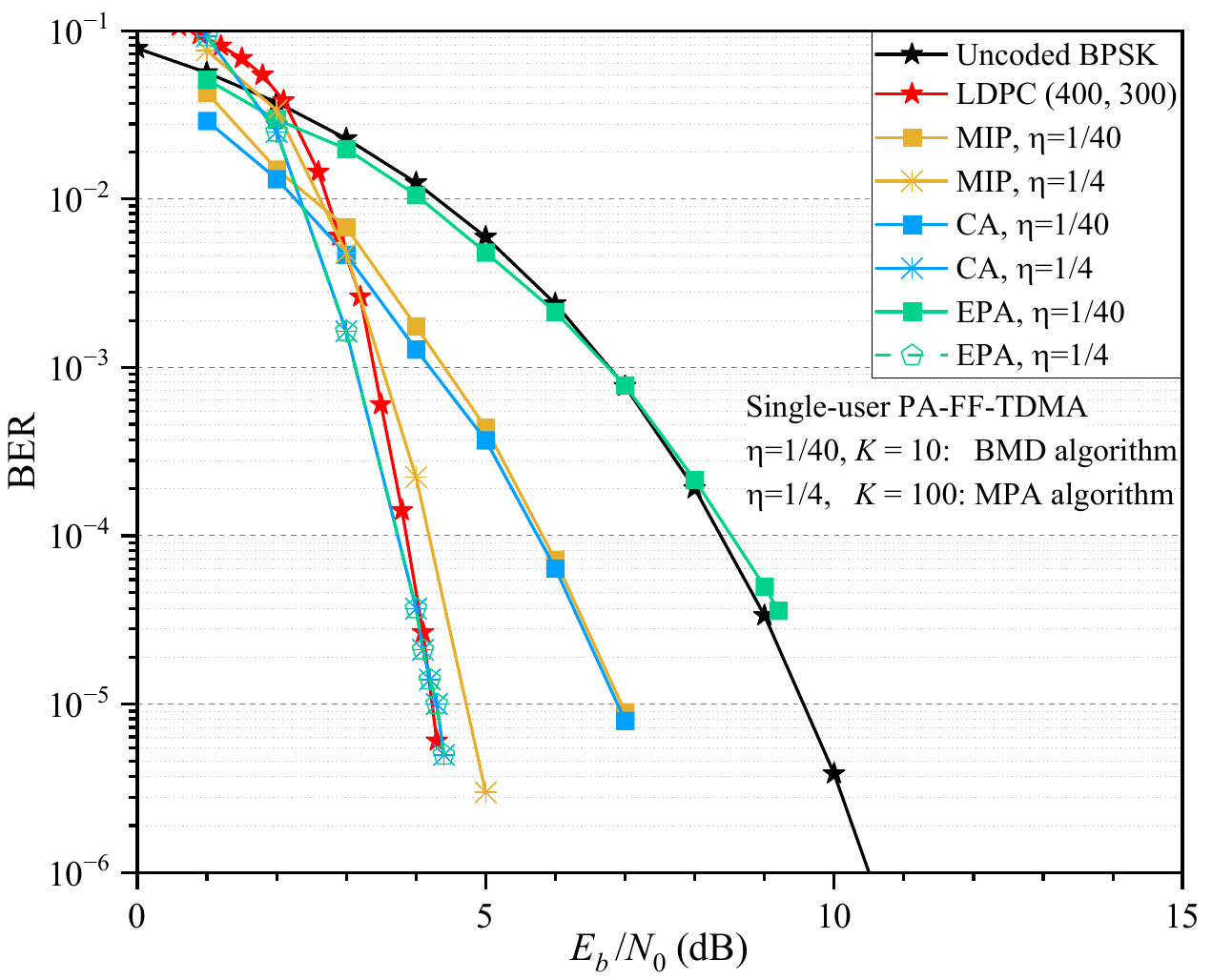}
  \caption{The BER performance of the PA-FF-TDMA system with various power allocation schemes, where the loading factors are \(\eta = 1/40\) and \(\eta = 1/4\).}
  \label{Fig_SU_TDMA_BER}
  \vspace{-0.3in}
\end{figure}



The BER performance of the single-user FF-TDMA system with various power allocation schemes is shown in Fig.~\ref{Fig_SU_TDMA_BER}, where the channel code is a binary \((400, 300)\) LDPC code. Note that, since we only consider a single-user scenario, the FF-TDMA system can also be viewed as a PA-LDPC code at this stage.
We compare three power allocation schemes: maximum information power allocation (MIP) \cite{FFMA2}, CA power allocation, and EPA.

When \(\eta = 1/40\) and \(K = 10\), the system operates as a multirate sequence, as previously mentioned, and the BMD algorithm is used for decoding. In this case, the BER performance of the system with the CA power allocation scheme outperforms that of the MIP and EPA schemes. Compared to the MIP scheme, the proposed CA scheme exhibits slightly better BER performance. However, the system with the EPA scheme demonstrates the worst error performance. The reason is that the BMD decoding algorithm relies on the information section as prior information. When the information section has the same power as the parity section, it cannot provide a more reliable detection range, resulting in poor decoding performance.

When \(\eta = 1/4\) and \(K = 100\), we have \(K = Q\), which enables the application of the message passing algorithm (MPA) as introduced in \cite{FFMA2}. Furthermore, when \(K = 100\), the system operates as a single-rate sequence. In this case, the CA power allocation scheme becomes equivalent to the EPA scheme. An interesting phenomenon is observed: the BER performance of the system using the CA and/or EPA schemes not only matches that of the original LDPC code but also provides slightly improved error performance. This suggests that, under the same Tanner graph, the proposed FF-TDMA (or PA-LDPC code) may offer even better error performance due to the introduced power allocation. It would be intriguing to further explore the impact of the polarization-adjusted feature, or power allocation, on classical LDPC codes. Power allocation could introduce an additional dimension to LDPC code design, potentially leading to novel and innovative design methodologies.

In this paper, we focus solely on the BER performance for the single-user scenario with the CA power allocation scheme, as the multiuser scenario with this scheme does not perform as well. The BER performance depends not only on power allocation but also on various decoding algorithms. Additionally, the CA power allocation is based on the channel capacity expression. However, since channel capacity is an FBL issue, this paper only provides a preliminary analysis. Therefore, the CA power allocation scheme does not achieve the desired error performance in the multiuser case. In fact, due to the limitations of the BMD decoding algorithm, the MIP scheme has proven to be more effective in the multiuser scenario.

\section{From Binary to $p$-ary Source Transmission}

In this section, we conduct a systematic investigation of $p$-ary source transmission and its performance characteristics. Building upon the rate-driven CA theorem established in Section~X, we analyze the spectral efficiency and power allocation requirements specific to $p$-ary transmission. Then, we present Monte Carlo simulation results that validate our theoretical predictions. 

\subsection{Capacity Alignment Theorem for Evaluating $p$-ary Transmission}
We use the binary system as a reference to evaluate the performance of a $p$-ary transmission system. Generally, the performance metrics of a communication system include error performance, transmit power, and spectral efficiency. As discussed earlier, the CRR can be used to measure the error performance of a system. Additionally, the CRR also reflects the effects of transmit power and spectral efficiency. Based on this, we present the following lemma.

\begin{lemma} \label{lemam.CA_pary}
  \textbf{(Capacity Alignment Theorem for Evaluating $p$-ary Transmission)}
  For a given loading factor \(\eta\), the coding rates of the $p$-ary and binary transmission systems are given by \(R_{p} = \eta \cdot \log_2 p\) and \(R_{b} = \eta\), respectively. Consequently, the spectral efficiency gain of the $p$-ary system over the binary system is \(\log_2 p\). For a given bit-energy-to-noise power spectral density ratio \(E_b/N_0\), the transmit signal-to-noise ratios (SNRs) of the $p$-ary and binary transmission systems are defined as
  \[
  \gamma_p = 2\mu_p \cdot R_p \frac{E_b}{N_0} \quad \text{and} \quad \gamma_b = 2\mu_b \cdot R_b \frac{E_b}{N_0},
  \]
  where \(\mu_p\) and \(\mu_b\) are the transmit power scaling factors of the $p$-ary and binary transmission systems, respectively. According to the capacity alignment theorem, we set the capacity-to-rate ratios (CRRs) of the $p$-ary and binary systems to be equal, i.e., \(\lambda_p = \lambda_b\), which yields
  \begin{equation} \label{e.CA_pary}
    \frac{0.5 \cdot \log_2 (1 + \gamma_p)}{R_{p}} = 
    \frac{0.5 \cdot \log_2 (1 + \gamma_b)}{R_{b}}.
  \end{equation}
  By solving \(\lambda_p = \lambda_b\), we obtain the polarization-adjusted scaling (PAS) factor \(\mu_{\rm pas} = \frac{\mu_p}{\mu_b}\). This factor serves as a performance metric for the $p$-ary system, as it quantifies the relative transmit power efficiency of the $p$-ary system compared to the binary system.
\end{lemma}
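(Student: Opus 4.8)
The plan is to derive this as a direct consequence of three ingredients already in place: the Coding Rate definition, the Gaussian point-to-point capacity formula $C(x)=\frac{1}{2}\log_2(1+x)$ used throughout Sections~VIII--X, and the rate-driven capacity alignment theorem (Theorem~\ref{theorem.CA}). First I would recall that an $(m,M)$ code over $\mathrm{GF}(p)$ with loading factor $\eta$ has rate $\eta\log_2 p$; specializing to general $p$ and to $p=2$ gives $R_p=\eta\log_2 p$ and $R_b=\eta$, so the per-dimension spectral-efficiency gain of the $p$-ary system over the binary one is exactly $R_p/R_b=\log_2 p$, the first claim of the lemma. Next I would record the standard translation between the per-symbol operating SNR and $E_b/N_0$: each transmitted symbol carries $R$ bits, so its energy is $R\,E_b$, and writing $\mu$ for the power-scaling factor the operating SNR is $\gamma=2\mu R\,(E_b/N_0)$; this reproduces the stated $\gamma_p=2\mu_p R_p(E_b/N_0)$ and $\gamma_b=2\mu_b R_b(E_b/N_0)$.

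With these in hand I would invoke Theorem~\ref{theorem.CA}: the two systems are put on the same footing by equalizing their capacity-to-rate ratios, $\lambda_p=C(\gamma_p)/R_p=C(\gamma_b)/R_b=\lambda_b$, which is precisely Eq.~(\ref{e.CA_pary}) after substituting $C(x)=\frac{1}{2}\log_2(1+x)$. This is the ``multiple single-rate sequences'' instance of the CA framework, with the $p$-ary and binary codewords playing the roles of the two sub-sequences; demanding a common CRR says that both systems sit at the same relative distance from their respective Shannon limits, and hence deliver comparable error performance.

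The remaining step is purely algebraic. Cross-multiplying Eq.~(\ref{e.CA_pary}) and using $R_p/R_b=\log_2 p$ gives $\log_2(1+\gamma_p)=(\log_2 p)\log_2(1+\gamma_b)$, i.e. $1+\gamma_p=(1+\gamma_b)^{\log_2 p}$. Substituting $\gamma_p=2\mu_p\eta(\log_2 p)(E_b/N_0)$ and $\gamma_b=2\mu_b\eta(E_b/N_0)$ and solving for $\mu_p$ then dividing by $\mu_b$ yields
\begin{equation}
  \mu_{\rm pas} = \frac{\mu_p}{\mu_b}
  = \frac{\bigl(1 + 2\mu_b\,\eta\,(E_b/N_0)\bigr)^{\log_2 p} - 1}{2\mu_b\,\eta\,(\log_2 p)\,(E_b/N_0)},
\end{equation}
which with the binary baseline normalized to $\mu_b=1$ reduces to a closed-form expression for $\mu_{\rm pas}$ in terms of $p$, $\eta$ and $E_b/N_0$.

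I do not expect a genuine technical obstacle: once the CA theorem is accepted as the comparison criterion and the $\gamma$--$E_b/N_0$ relation is fixed, everything collapses to exponentiating a logarithmic identity and rearranging. The one point that genuinely needs care --- and is where the argument could be contested --- is the \emph{modeling} choice: justifying that equalizing the CRR (rather than equalizing raw SNR, or raw capacity, or raw rate) is the meaningful way to compare systems with different alphabet sizes. I would anchor this by appealing back to Theorem~\ref{lemma.MaxCRR_SingleRate}, which links $P_{e,w}$ monotonically to the CRR for a single-rate sequence, so that a common CRR across the two systems is tantamount to a common (approximate) word-error behavior; the PAS factor $\mu_{\rm pas}$ then quantifies the extra (or reduced) transmit power the $p$-ary system needs to match the binary system at that common reliability.
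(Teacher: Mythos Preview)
Your proposal is correct and, in fact, goes further than the paper itself. The paper does not supply a formal proof environment for this lemma: it is presented as a framework statement, with the coding rates taken directly from Definition~2, the SNR definitions simply posited, and Eq.~(\ref{e.CA_pary}) asserted as the immediate instantiation of the CA theorem (Theorem~\ref{theorem.CA}) to the two single-rate systems. No closed-form expression for $\mu_{\rm pas}$ is derived in the text; the paper instead evaluates $\mu_{\rm pas}$ numerically via Monte Carlo in the subsequent subsection.

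Your derivation supplies exactly the details the paper leaves implicit, and your explicit formula $\mu_{\rm pas}=\bigl[(1+2\mu_b\eta E_b/N_0)^{\log_2 p}-1\bigr]\big/\bigl[2\mu_b\eta(\log_2 p)E_b/N_0\bigr]$ is a genuine addition that the paper does not state. Your final paragraph, anchoring the choice of equal CRRs in Theorem~\ref{lemma.MaxCRR_SingleRate}, is also more careful than the paper, which simply invokes Theorem~\ref{theorem.CA} without revisiting the link to error probability. There is no gap in your argument.
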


\noindent
Based on the relationship between the PAS \(\mu_{\rm pas}\) and the spectral efficiency gain \(\log_2 p\), we draw the following conclusions:
\begin{enumerate}
  \item If \(\mu_{\rm pas} < \log_2 p\), the \(p\)-ary system requires less transmit power while achieving a higher spectral efficiency. In this case, the \(p\)-ary system outperforms the binary system in terms of both power efficiency and spectral efficiency.
  
  \item If \(\mu_{\rm pas} = \log_2 p\), the \(p\)-ary system requires proportionally more transmit power to achieve the same spectral efficiency gain. Here, the \(p\)-ary system performs similarly to the binary system in terms of error performance.
  
  \item If \(\mu_{\rm pas} > \log_2 p\), the \(p\)-ary system requires significantly more transmit power to achieve the same spectral efficiency gain. In this case, the \(p\)-ary system underperforms compared to the binary system.
\end{enumerate}

Building upon Lemma~\ref{lemam.CA_pary} which establishes the capacity alignment for single-user $p$-ary transmission systems, we now generalize these results to $p$-ary multiuser communications. The $p$-ary architecture exhibits particularly favorable scaling properties in multiuser environments due to its inherent finite-field structure. By extending the single-user framework through multiuser rate adaptation, we obtain the generalized capacity alignment condition:
\begin{equation} \label{e.CA_pary_MU}
  \frac{0.5 \cdot \log_2 (1 + J \cdot \gamma_p)}{J \cdot R_{p}} = 
  \frac{0.5 \cdot \log_2 (1 + J \cdot \gamma_b)}{J \cdot R_{b}},
\end{equation}
where $J$ represents the number of active users in the system. Solving \eqref{e.CA_pary_MU} yields the polarization-adjusted scaling factor $\mu_{\mathrm{pas}} = \mu_p/\mu_b$, which serves as a metric for comparing the power efficiency between $p$-ary and binary multiuser systems. This factor quantitatively characterizes the relative transmit power advantage of the $p$-ary implementation while maintaining equivalent rate performance across all users.



\begin{figure}[t]
  \centering
  \subfigure[The PAS $\mu_{\rm pas}$ of a $3$-ary system.]{\includegraphics[width=0.43\textwidth]{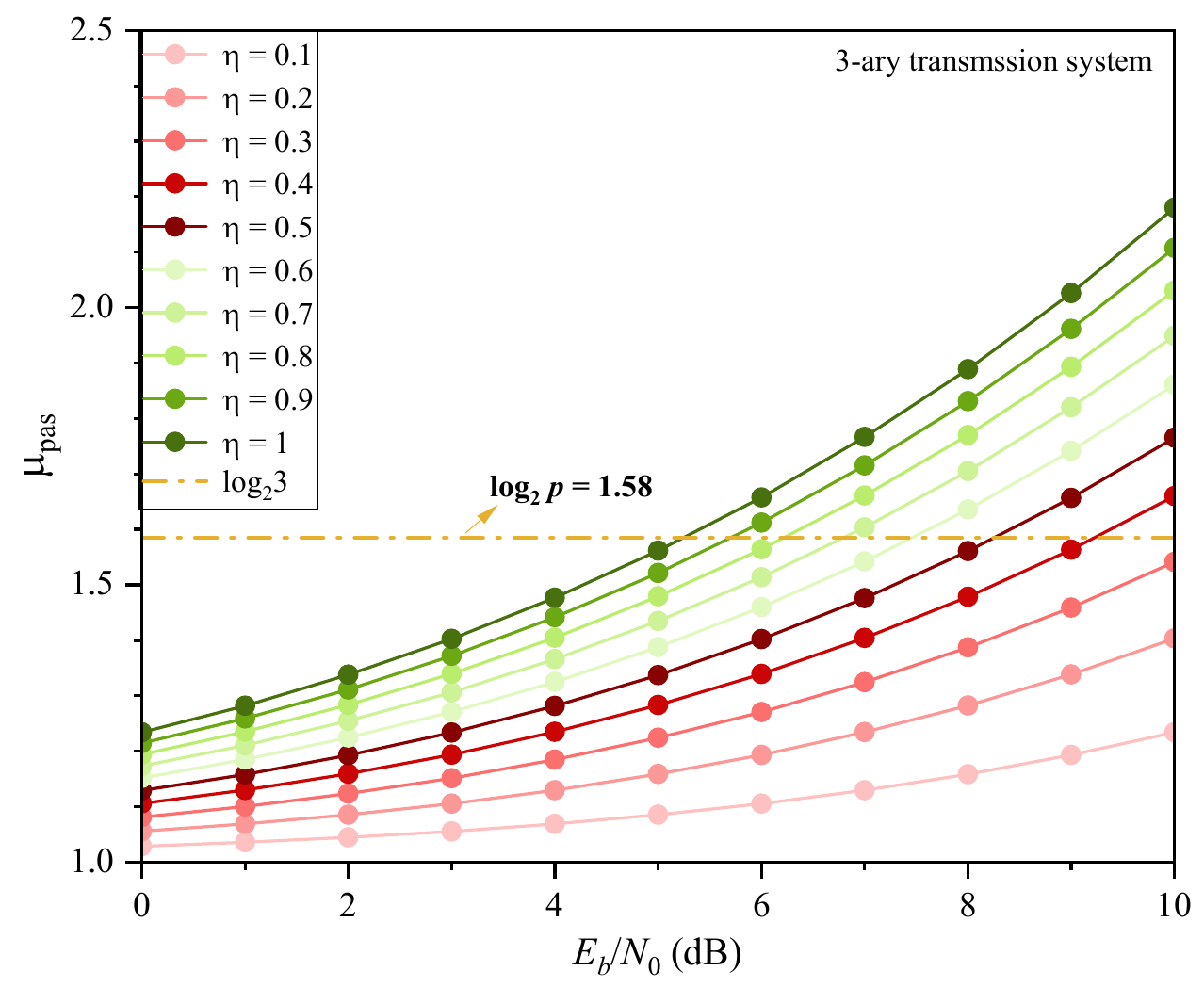}}
  \label{pary_mu_sub1}
  \subfigure[The PAS $\mu_{\rm pas}$ of a $p$-ary system.]{\includegraphics[width=0.43\textwidth]{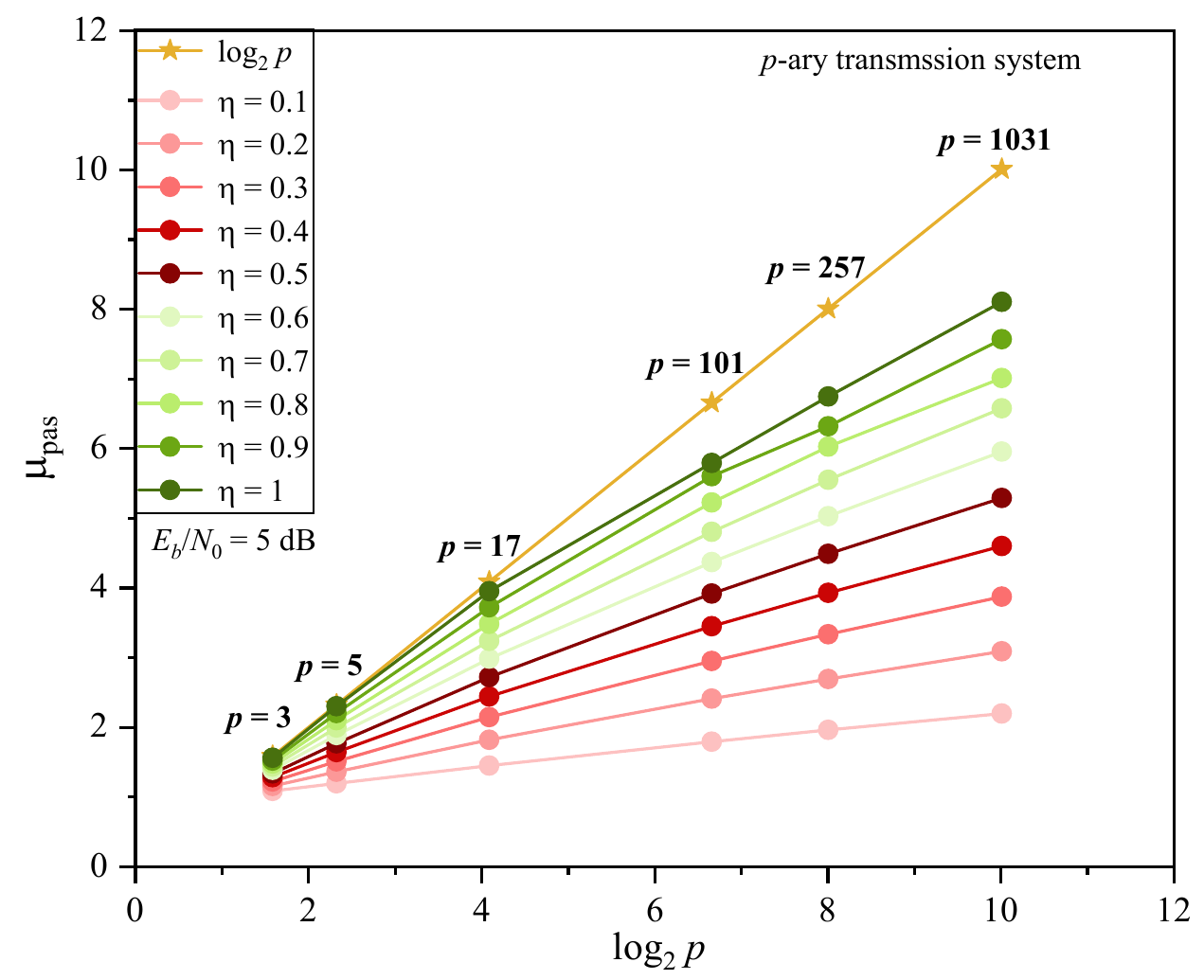}}
  \label{pary_mu_sub2}
  \caption{Polarization adjusted scaling factor $\mu_{\mathrm{pas}}$ for various $p$-ary systems under different loading factors.}
  \label{f.pary_mu}
  \vspace{-0.3in}
\end{figure}

\subsection{Monte Carlo Simulations}
We employ Monte Carlo simulations to analyze the polarization-adjusted scaling factor $\mu_{\mathrm{pas}}$ for $p$-ary transmission systems, with numerical results presented in Figs.~\ref{f.pary_mu} and \ref{f.MU_pary}. 

In Fig.~\ref{f.pary_mu} (a), the PAS of a \(3\)-ary system is shown. It can be observed that, for a given loading factor \(\eta\), the PAS \(\mu_{\rm pas}\) increases with increasing \(E_b/N_0\). Additionally, for a fixed \(E_b/N_0\), the PAS \(\mu_{\rm pas}\) increases as \(\eta\) increases. This implies that, when the loading factor is small, the \(3\)-ary system requires less power than the binary system to achieve the same error performance. 
When \(E_b/N_0\) is less than \(5\) dB, we have \(\mu_{\rm pas} < \log_2 3\) for all loading factors in the \(3\)-ary system. This indicates that the \(3\)-ary system outperforms the binary system in this regime. As \(E_b/N_0\) increases, the \(3\)-ary systems with smaller loading factors (\(\eta < 0.3\)) continue to exhibit better performance than the binary system. However, for larger loading factors, e.g., \(\eta > 0.3\), the \(3\)-ary system performs worse than the binary system.
The PAS of different \(p\)-ary systems is shown in Fig.~\ref{f.pary_mu} (b), where the X-axis represents the spectral efficiency gain, defined as \(\log_2 p\), for \(p = 3, 5, 17, 101, 257, 1031\) at \(E_b/N_0 = 5\) dB. From this figure, it can be observed that, for a given loading factor \(\eta\), the \(p\)-ary system with a larger \(p\) value significantly outperforms the binary system, even for large values of \(\eta\). Furthermore, as \(p\) increases, the required PAS \(\mu_{\rm pas}\) grows logarithmically. This suggests that higher-order \(p\)-ary systems can provide even better performance compared to the binary system, which serves as the reference. 

These results demonstrate that, for small loading factors, a low-order \( p \)-ary system (e.g., \( p = 3 \)) can outperform the binary system. For larger loading factors, higher-order \( p \)-ary systems (e.g., \( p = 257 \) or \( p = 1031 \)) may still achieve superior performance compared to the binary system. Therefore, it is interesting to further investigate and develop \( p \)-ary systems. However, considering the complexity involved in designing a \( p \)-ary system, we can decompose a \( p \)-ary system into a ternary system for practical applications, as discussed in Sect. VI.



\begin{figure}[t]
  \centering
  \subfigure[$\mu_{\mathrm{pas}}$ for $p$-ary systems.]{\includegraphics[width=0.44\textwidth]{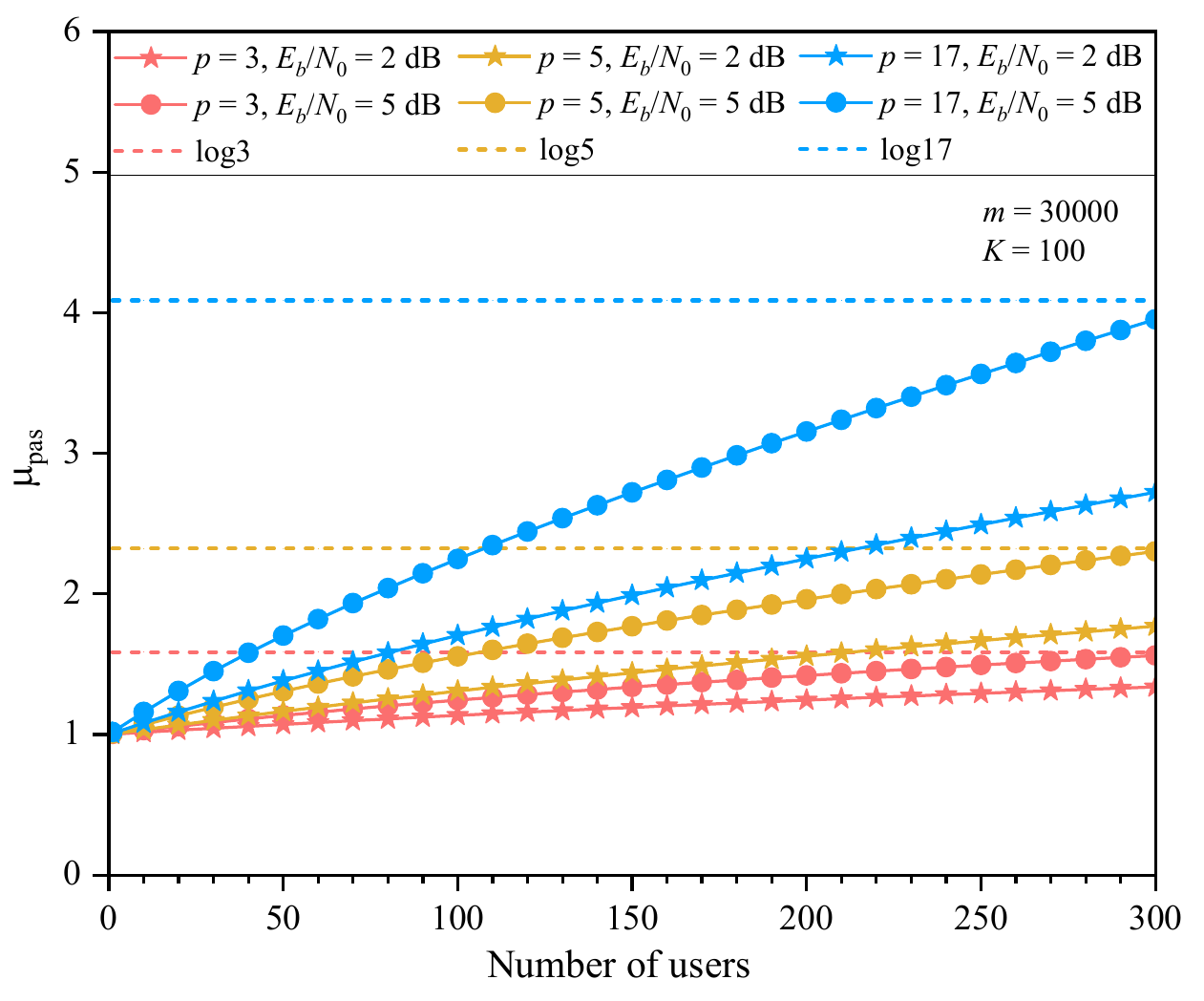}}
  \label{Fig_Graph11}
  \subfigure[BER between binary and ternary systems.]{\includegraphics[width=0.445\textwidth]{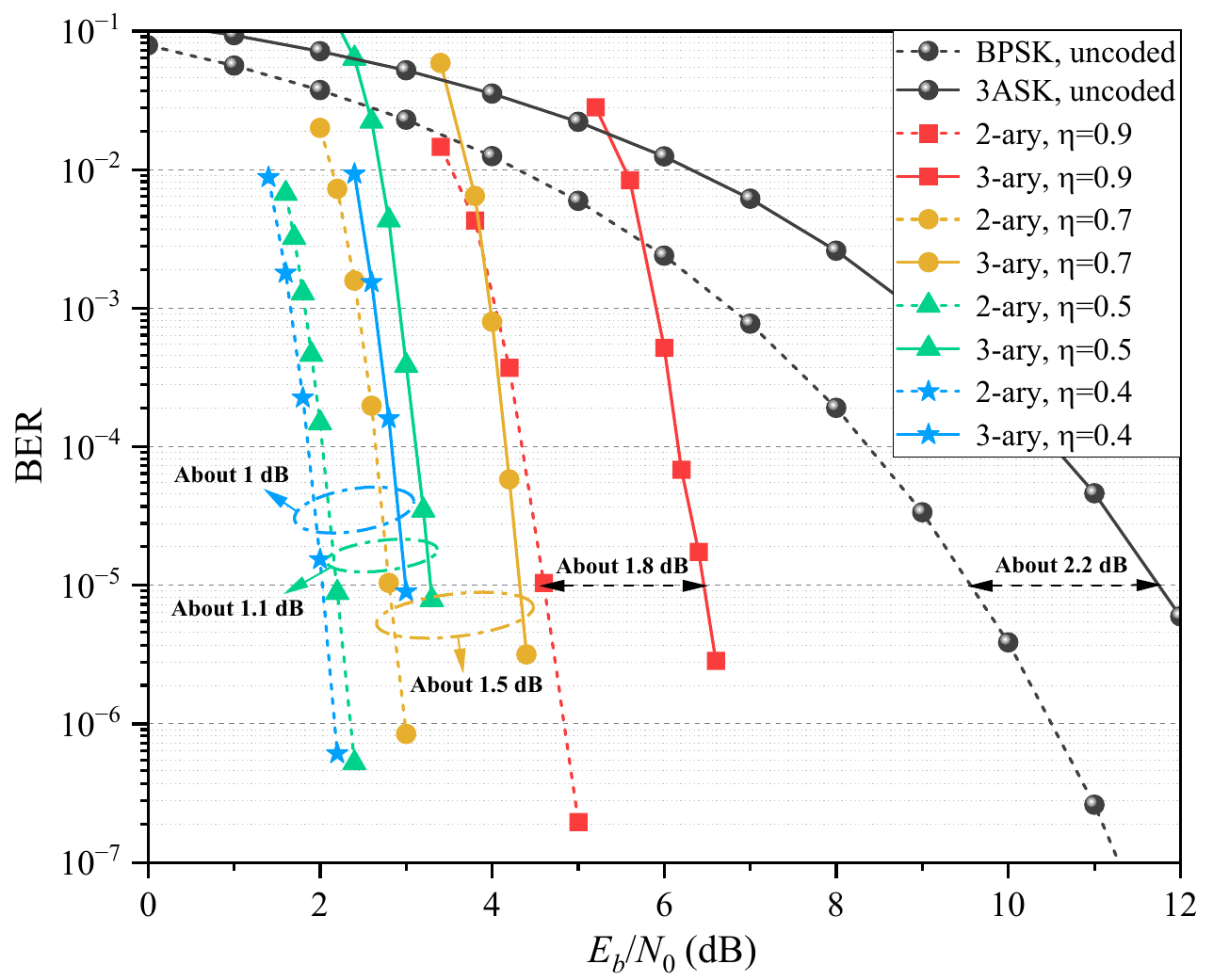}}
  \label{Fig_Graph10}
  \caption{Performance analyse between $p$-ary and binary system.}
  \label{f.MU_pary}
  \vspace{-0.3in}
\end{figure}

We now analyze the PAS factor $\mu_{\mathrm{pas}}$ for $p$-ary multiuser systems, investigating its behavior under different number of users. The system parameters are configured with $p = 3, 5, 17$, fixed information DoFs  $K = 100$, total DoFs $m = 30000$, and $E_b/N_0 = 2, 5$ dB, as shown in Fig.~\ref{f.MU_pary} (a).

The $p$-ary multiuser system exhibits two distinctive characteristics. First, for any given $E_b/N_0$ and prime $p$, $\mu_{\mathrm{pas}}$ demonstrates logarithmic growth with increasing user count $J$. For example, in the ternary ($p=3$) case at $E_b/N_0 = 5$ dB, $\mu_{\mathrm{pas}}$ evolves from $1.003$ ($J=1$) to $1.135$ ($J=50$) and finally to $1.561$ ($J=300$), confirming that the $p$-ary architecture maintains power efficiency even with substantial user loads. 
Second, the $p$-ary advantage becomes particularly pronounced for systems with both higher order $p$ and moderate user counts. Comparative analysis reveals that when $J=10$ at $E_b/N_0 = 5$ dB, the ternary ($p=3$) system achieves a power gain of $\log_2(3) - 1.0299 \approx 0.585$, while the higher-order system ($p=17$) delivers a more substantial gain of $\log_2(17) - 1.1596 \approx 2.887$. This scaling behavior highlights a key advantage of $p$-ary multiuser systems: they provide flexible performance trade-offs between user capacity and power efficiency through appropriate selection of the prime parameter $p$.

These results collectively demonstrate that $p$-ary systems successfully extend their single-user advantages to multiuser scenarios, offering system designers a powerful framework for balancing spectral efficiency against power requirements in next-generation communication systems.

To validate the theoretical analysis of the $3$-ary system, we perform numerical simulations using four LDPC codes: a $(2000, 1800)$ LDPC code with loading factor $\eta = 0.9$, a $(2000, 1400)$ LDPC code with $\eta = 0.7$, a $(2000, 1000)$ LDPC code with $\eta = 0.5$ for binary system and a $(2000, 1001)$ LDPC code with $\eta \approx 0.5$ for ternary system, and a $(2000, 800)$ LDPC code with $\eta = 0.4$. The BER performance is evaluated for both binary (BPSK modulation) and ternary (3ASK modulation) transmission schemes, including uncoded cases as baseline references.

Fig.~\ref{f.MU_pary} (b) presents a detailed comparison of the error performance between binary and ternary transmission systems. At a BER of \(10^{-5}\), the ternary system requires approximately $1.0$ dB (or \(1.2589 \times\)) higher \(E_b/N_0\) than the binary system at \(\eta = 0.4\). This difference increases to $1.1$ dB (\(1.2882 \times\)) at \(\eta = 0.5\), $1.5$ dB (\(1.4125 \times\)) at \(\eta = 0.7\), and $1.8$ dB (\(1.5136 \times\)) at \(\eta = 0.9\). The progressive increase in the required \(E_b/N_0\) with higher loading factors demonstrates the energy efficiency trade-off inherent in ternary modulation. For the uncoded case (\(\eta = 1\)), the difference becomes more pronounced, reaching $2.2$ dB (\(1.6596 \times\)).

The coded results show that all power ratios (ranging from \(1.2589\) to \(1.5136\)) fall below the theoretical spectral efficiency gain of \(\log_2 3 \approx 1.585\), confirming the superiority of the ternary system. Notably, the performance advantage diminishes as the loading factor increases, with the uncoded case exceeding the spectral efficiency limit. This behavior aligns perfectly with our theoretical predictions, demonstrating that ternary systems offer greater benefits for lower loading factors, where coding gain plays a more significant role.

\vspace{-0.1in}
\section{Conclusions}
In this paper, we introduce EA codes over \( \text{GF}(p^m) \) to support \( p \)-ary source transmission. A general structural constraint, the USPM property, is proposed for designing UD-EA codes. Two specific types of codes are constructed: AI-D-CWEA and BD-D-CWEA codes. We demonstrate that linear block codes can be employed to construct UD-CWEA codes. Furthermore, we extend EP-coding to EA-coding, with a focus on NO-CWEA codes and their USPM constraint in the complex field. We then proceed to construct \( p \)-ary CWEA codes using a basis decomposition method.

In addition to developing EA coding for FFMA systems, we perform a comprehensive performance analysis, addressing both channel capacity and error performance. First, we prove that EPA is optimal for achieving maximum channel capacity and analyze its FBL characteristics. We then introduce CRR as a metric for evaluating error performance, leading to the derivation of the rate-driven CA theorem. Using the CA theorem, we propose a CA power allocation scheme and conduct a systematic comparison between \( p \)-ary and conventional binary transmission systems.

In conclusion, drawing inspiration from the ancient Chinese philosophy in the famous book \textit{Daodejing}: ``The Dao springs from the One; from the One unfolds the Two; from the Two emerges the Three; and from the Three, the myriad creatures of the world are born.'' This work follows a similar progression, extending binary systems to \( p \)-ary systems and unlocking new possibilities for future communication technologies.

\vspace{-0.1in}

\vfill
\end{document}